\documentclass[twocolumn,pra]{revtex4-2}

\usepackage[dvips]{graphicx} 
\usepackage{amsfonts,amssymb,amscd,amsmath,amsthm}
\usepackage{textcomp}
\usepackage{enumerate}
\usepackage{epsfig}
\usepackage{subfigure}
\usepackage{xcolor}
\usepackage[outdir=./]{epstopdf}
\usepackage[colorlinks = true]{hyperref}
\usepackage{physics}
\usepackage{appendix}
\usepackage{comment, apptools, thmtools, thm-restate}
\usepackage{diagbox}
\usepackage[most]{tcolorbox}

\usepackage{tikz}
\usepackage[]{qcircuit}
\usetikzlibrary{arrows}
\usetikzlibrary{shapes,fadings,snakes}
\usetikzlibrary{decorations.pathmorphing,patterns}
\usetikzlibrary{calc}
\usetikzlibrary{positioning}
\usepackage{etex}
\usepackage{pgfplots}
\usepgfplotslibrary{ternary}
\pgfplotsset{compat=newest}

\newtheorem{theorem}{Theorem}
\newtheorem{lemma}{Lemma}
\newtheorem{corollary}{Corollary}

\newtheorem{proposition}{Proposition}
\newtheorem{definition}{Definition}
\newtheorem{observation}{Observation}

\newtcolorbox[auto counter]{mybox}[2][]{
	enhanced,
	breakable,
	colback=blue!5!white,
	colframe=blue!75!black,
	fonttitle=\bfseries,
	title=Box \thetcbcounter: #2,#1
}

\graphicspath{{./figure/}}


\begin{document}

\title{Intrinsic randomness under general quantum measurements}

\author{Hao Dai}
\affiliation{Center for Quantum Information, Institute for Interdisciplinary Information Sciences, Tsinghua University, Beijing 100084, P.~R.~China}

\author{Boyang Chen}
\affiliation{Center for Quantum Information, Institute for Interdisciplinary Information Sciences, Tsinghua University, Beijing 100084, P.~R.~China}

\author{Xingjian Zhang}
\affiliation{Center for Quantum Information, Institute for Interdisciplinary Information Sciences, Tsinghua University, Beijing 100084, P.~R.~China}

\author{Xiongfeng Ma}
\email{xma@tsinghua.edu.cn}
\affiliation{Center for Quantum Information, Institute for Interdisciplinary Information Sciences, Tsinghua University, Beijing 100084, P.~R.~China}

\begin{abstract}
Quantum measurements can produce randomness arising from the uncertainty principle. When measuring a state with von Neumann measurements, the intrinsic randomness can be quantified by the quantum coherence of the state on the measurement basis. Unlike projection measurements, there are additional and possibly hidden degrees of freedom in apparatus for generic measurements. We propose an adversary scenario for general measurements with arbitrary input states, based on which, we characterize the intrinsic randomness. Interestingly, we discover that under certain measurements, such as the symmetric and information-complete measurement, all states have nonzero randomness, inspiring a new design of source-independent random number generators without state characterization. Furthermore, our results show that intrinsic randomness can quantify coherence under general measurements, which generalizes the result in the standard resource theory of state coherence.
\end{abstract}

\maketitle 


Randomness is an essential resource in cryptography and scientific simulation. Due to its deterministic nature, Newtonian physics fails to provide intrinsically unpredictable randomness. Without such a resource, cryptosystems fail to provide information-theoretic security. Fortunately, the uncertainty principle in quantum physics offers means to generate intrinsic randomness \cite{born1926quantenmechanik}. Thanks to this quantum feature, quantum random number generators (QRNGs) lay down a solid foundation for the security of cryptography systems \cite{ma2016quantum,herrero2017quantum}.

There are various ways to construct a QRNG, typically composed of a source and a detector. The source is characterized by a quantum state \cite{cramer2010efficient,haah2017sample}, while the detector is calibrated by a quantum measurement \cite{d2004quantum,lundeen2009tomography}. After obtaining outcomes from the quantum measurement, the legitimate user, Alice, needs to analyse the amount of randomness from the raw data. This analysis can be put in an adversary scenario, as shown in Figure~\ref{fig:povm}. The adversary, Eve, might have a certain correlation with the system. Such correlation could leak the information of measurement outcomes to Eve. To remove the information leakage, we can divide the entropy of outcomes into two parts: intrinsic randomness, about which Eve has no information, and extrinsic randomness, which Eve might know. The essential task in randomness analysis is to quantify the intrinsic randomness given an input state and a measurement. Note that in the (semi-)device-independent scenarios, Alice might skip some of these steps. For example, in source-independent schemes \cite{Cao2016Source,Marangon2017Source}, the source is assumed to be uncalibrated or even untrusted.

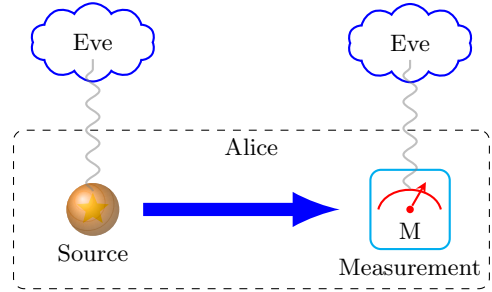
\begin{figure}[tbhp!]
	\begin{tikzpicture}[
		scale=1,
		]
		\node[star, star points=5, star point ratio=2.25, inner sep=2pt, fill=yellow] (S) at (0,0) {};
		\node[shade,shading=ball, circle, ball color=orange, minimum size=20pt, opacity=0.3,label=below:Source] at (S) {};
		\node[circle,fill,red,inner sep=1pt,label=below:M] (M) at (120pt,0pt) {};
		\node[draw,rectangle,rounded corners,minimum size=30pt,cyan,thick,label=below:Measurement] at (M) {};
		\draw (M) [xshift=-12pt,red,thick] arc (170:10:12pt and 9pt);
		\draw[-stealth,yshift=0pt,red,thick]  (M) --++ (60:12pt);
		\draw[-latex,shorten <=15pt,shorten >=25pt, blue, line width=5pt] (S.east) -- (M.west);
		\node[draw,rectangle,rounded corners,minimum height=60pt, minimum width=180pt, black,dashed, label={[anchor=north]north:Alice}] at (60pt,0) {};
		\node[above=50pt of S] (Es) {Eve};
		\node[cloud, draw,cloud puffs=10, cloud puff arc=120, aspect=2, inner ysep=8pt,blue,thick] at (Es) {};
		\draw[-,decorate,decoration=snake,thick,gray,opacity=0.5] (S.north) -- (Es.south);
		\node[above=50pt of M,yshift=5pt] (Em) {Eve};
		\node[cloud, draw,cloud puffs=10, cloud puff arc=120, aspect=2, inner ysep=8pt,blue,thick] at (Em) {};
		\draw[-,decorate,decoration=snake,thick,gray,opacity=0.5] (M.north)+(up:6pt) -- (Em.south);
	\end{tikzpicture}
\caption{Illustration of a typical QRNG. The source sends quantum signals in the state of $\rho$ to the measurement device, which outputs a sequence of random numbers. Eve could have a certain correlation with the devices, where she could possess the purification of $\rho$ on the source side and know the construction of the detection on the measurement side. Eve might even have entanglement with the internal apparatus.} \label{fig:povm}
\end{figure}


As for the intrinsic randomness quantification, let us start with a well-studied special case with the detection calibrated as a von Neumann measurement, $\{\ketbra{i}\}$. If the input is in a superposition state, $|\psi\rangle=\sum_{i}a_i |i\rangle$ with normalized complex coefficients $a_i\in\mathbb{C}$ and $\sum_i\abs{a_i}^2=1$, the measurement outcome is intrinsically random and the probability of obtaining outcome $i$ is $|a_i|^2$ according to Born's rule \cite{born1926quantenmechanik}. In this case, intrinsic randomness of the outcomes arises from breaking superposition \cite{zurek2009quantum} and is given by the Shannon entropy of the probability distribution, $\{|a_i|^2\}$. In resource theory, superposition is quantified by quantum coherence with respect to the measurement basis, $\{\ket{i}\}$ \cite{Aberg2006,baumgratz2014quantifying}. In fact, for a generic input state described by a density matrix, the link between output intrinsic randomness and state coherence has been established \cite{yuan2015intrinsic,Hayashi2018secure,yuan2019quantum}.

A projection measurement is an idealized model for detection devices. In reality, noise is inevitable, or equivalently, part of instrument information is missing from the user's point of view. Then, the detection is generally characterized by a positive-operator-valued measure (POVM). How to quantify intrinsic randomness of the outcomes from a generic measurement is an important yet unsettled problem. Given a set of POVM elements, as many degrees of freedom in measurement instruments are hidden from Alice, there is an infinite number of ways to construct the detection instrument \cite{duvsek2002quantum,peres1990neumark,biggerstaff2009cluster}. This hidden information makes it very challenging to characterize the amount of information leaked to Eve. In the literature, there have been some attempts on this topic \cite{law2014quantum,cao2015loss,bischof2017measurement}. For instance, one can express a generic measurement as a mixture of projection measurements followed by classical postprocessing \cite{cao2015loss}. Unfortunately, this mixing technique is only feasible for measurements with two outcomes in a qubit system while failing in general cases \cite{Oszmaniec2017}.

Following the spirit of studying randomness for projection measurements from the perspective of coherence, we find that the existing coherence measures under POVMs cannot properly quantify intrinsic randomness \cite{Luo2017quantum,Bischof2019,xu2020coherence,bischof2021quantifying}. Let us illustrate this with an example. Consider the two-outcome POVM $\mathbf{M}=\{\mathbf{1}/2,\mathbf{1}/2\}$, which is free in the resource theory of measurement informativeness \cite{skrzypczyk2019robustness}. The measurement outcome is independent of input states and can be seen as a classical random variable taking values $0$ and $1$ with an equal probability, which we expect to be of a classical nature. However, all states have nonzero coherence in the definition given by the direct application of conventional Naimark extension \cite{Bischof2019}.

In this work, we provide a generic adversary scenario where the detection is correlated with Eve as shown in Figure~\ref{fig:povm}. We take all the hidden variables or missing information as an ancillary system into consideration, where the POVM can be viewed as a part of projection on a larger system. This is a generalized version of Naimark extension, with the difference that the ancillary state is not necessarily pure as in the conventional one \cite{neumark1943spectral,peres1990neumark}. Then, we can apply the results of intrinsic randomness quantification for projection measurements. As the Naimark extension is not unique, we need to minimize over all possible extensions in the randomness analysis.


For a special type of measurements, extremal POVM \cite{d2005classical}, we show that for all Naimark extensions with possibly mixed ancillary states, the randomness function under an extremal POVM is equal and thus gives the intrinsic randomness. Surprisingly, for some extremal measurements, such as the symmetric and information-complete (SIC) measurement \cite{renes2004symmetric,fuchs2017sic}, their outcomes have non-zero randomness for any input states. Then, we can design a new source-independent QRNG using these measurements.
Moving on to a general POVM, the mixed ancillary state in the Naimark extension is formidable. We make an additional assumption that Eve performs a measurement on her local system and the ancillary state becomes a particular mixture of pure states, from which a convex-roof construction of the intrinsic randomness is obtained.


Moreover, we regard the randomness quantification as a state coherence measure under POVMs. Following a standard resource-theoretic approach, we define a set of incoherent states and incoherent operations for a general measurement. In our coherence measure, for POVM $\{\mathbf{1}/2,\mathbf{1}/2\}$, all states are incoherent.

\textit{Randomness characterization for general POVMs.}— 
For a $d$-dimensional Hilbert space $ \mathcal{H} =\mathbb{C}^d,$ a POVM on  $ \mathcal{H} $ is a set of positive semidefinite Hermitian operators $\mathbf{M}=\{M_{1},\cdots,M_{m}\} $, where $\sum_{i=1}^{m} M_{i}=\mathbf{1}$.
When two POVMs are the same, $\forall i$, $M_i=N_i$, we denote by $\mathbf{M}=\mathbf{N}$. Each element can be expressed as $  M_{i}= A_{i} A_{i}^{\dagger} $, where $A_{i}$ is called a POVM operator and generally not a square matrix. When measuring a state $\rho$, the probability of obtaining the outcome $i$ is given by $\tr(M_{i}\rho)$ and the corresponding post-measurement state is $ A_{i}\rho A_{i}^{\dagger} / \tr(M_{i}\rho)$. The set of operators $\{A_{i}\}$ uniquely determines the implementation of the measurement --- instrument. On the other hand, a POVM generally corresponds to many possible implementations or different sets of operators, $\{A_{i}\}$.

The projection measurement, also called projection-valued measure (PVM), is a special case of a POVM when $M_i$ are projection operators, $M_i^2=M_i$, and $M_i=A_i$. The post-measurement states of PVMs are unique, thus can be regarded as basic building blocks in the implementation of POVMs. As a special case, when every PVM element is rank-1, we call it von Neumann measurement.

As the first step of randomness evaluation, Alice characterizes the source in a QRNG to be state $\rho^A$ and calibrates the detection device to be measurement $\mathbf{M}$. By introducing an ancillary system $Q$, we can extend $\mathbf{M}$ to a PVM, $\mathbf{P}$. In the conventional Naimark extension, the ancillary state of $Q$ is assumed to be pure. Here, we generalize it for a mixed state $\sigma^Q$. In the adversary picture, both the source and the measurement could be correlated with Eve. In the worst-case scenario, Eve holds the purification of the source state $\rho^A$ and the ancillary state $\sigma^Q$, as shown in Figure~\ref{fig:Extension}. The measurement extension requires $\mathbf{M},\mathbf{P}$ and $\sigma^Q$ to satisfy the consistency condition \cite{busch1996quantum}, where $\forall i$,
\begin{equation}\label{eq:consistency}
M_i= \tr_{Q}[P_i(\mathbf{1}^{A}\otimes\sigma^{Q})].
\end{equation}
That is, $\forall \rho^A, i$, $\tr(M_i\rho^A)=\tr[P_i(\rho^A\otimes \sigma^Q)]$. For simplicity, we shall omit the superscript $A$ and $Q$ when there is no confusion in the following discussions and denote the states $\rho$ and $\sigma$ as Alice's state and the ancillary state, respectively. Once Eve gives the measurement device to the user, she cannot access the apparatus anymore. Hence, there is a no-signalling relation between the input state and Eve's system.

\begin{figure}[hbtp!]
	\begin{tikzpicture}[
		scale=1,
		]
		\node[circle,fill,gray,inner sep=1pt,label=left:$A$] (A) at (0,0) {};
		\node[circle,fill,gray,inner sep=1pt,label=left:$Q$] (Q) at ($ (A) + (1,-1) $) {};
		\node[circle,fill,gray,inner sep=1pt,label=left:$E$] (E) at ($ (A) + (0,1) $) {};
		\node[circle,fill,gray,inner sep=1pt,label=left:$F$] (F) at ($ (A) + (1,-2) $) {};
		\draw[snake=snake,gray] (A) -- (E) node[midway,left] {$\ket{\Psi}^{EA}$};
		\draw[snake=snake,gray] (Q) -- (F) node[midway,left] {$\ket{\Phi}^{QF}$};
		\node[draw,black,fill=blue!10!white,minimum height=1.8cm,rounded corners] (M) at ($ (A) + (2.2,-.5) $) {PVM};
		\draw (A) -- (A-|M.west) node[near start,above] {$\rho^A$} (Q) -- (Q-|M.west) node[midway,above] {$\sigma^Q$};
		\draw[-stealth,double] (M.east) -- ++ (1cm,0) node[near start,above] {};
		\draw[-stealth] (E) -- ($(E-|M.east) + (1cm, 0) $);
		\draw[-stealth] (F) -- ($(F-|M.east) + (1cm, 0) $);
		\draw[orange,dashed,rounded corners] ($ (A-|Q)!0.5!(E) $) rectangle ($(Q-|M.east)!0.5!(F-|M.east)+(.5cm,0)$) node{$\mathbf{M}$};
	\end{tikzpicture}
\caption{The adversary scenario for a generalized Naimark extension: a PVM on the joint system $AQ$. From Alice's perspective, the measuring process is described by POVM $\mathbf{M}$, depicted as the dashed box. Alice inputs state $\rho^A$ and obtains classical outputs from the box. The ancillary system is generally in a mixed state $\sigma^Q$. If $\sigma^Q$ is pure, it becomes the conventional Naimark extension. Both the source and the ancillary system could be entangled with Eve. There is no-signalling  from input $A$ to Eve's purification $F$.} \label{fig:Extension}
\end{figure}
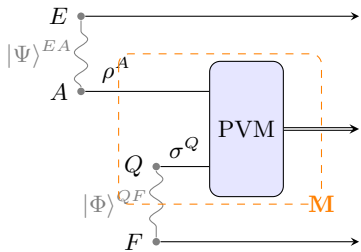

In general, any implementation of a measurement can be treated as a part of a PVM on a larger system. Then, we can quantify intrinsic randomness of POVM outcomes via its extension,
\begin{equation}\label{eq:minR}
\begin{split}
R(\rho,\mathbf{M}) &= \min_{\mathbf{P},\sigma} R(\rho\otimes \sigma,\mathbf{P}), \\
\textrm{s.t.} \quad  \forall i, \;  &M_i={\rm tr}_{Q}[P_i(\mathbf{1}^{A}\otimes\sigma^{Q})], \\
\end{split}
\end{equation}
where the constraint is given by the consistency condition in Eq.~\eqref{eq:consistency}. For the special case of von Neumann measurements, the randomness quantification $R(\varrho,\mathbf{P})$ is well studied in the literature under different adversary scenarios \cite{Devetak2005distillation,yuan2019quantum}. As the starting point of our randomness quantification of POVMs, we generalize the results to the general PVM case and give two widely used measures,
\begin{equation}\label{eq:tworanfun}
\begin{split}
R_c(\varrho,\mathbf{P})&=\underset{\{q_j,\ket{\psi_j}\}}{\min}\sum_j q_j S\left(\Delta_{\mathbf{P}}(\ketbra{\psi_j})\right),\\
R_q(\varrho,\mathbf{P})&=S\left(\varrho\|\Delta_{\mathbf{P}}(\varrho)\right),
\end{split}
\end{equation}
where $S(\varrho)=-\tr(\varrho\log\varrho)$ represents the von Neumann entropy function, $\varrho=\sum_j q_j\ketbra{\psi_j}$ is a state decomposition, and $\Delta_{\mathbf{P}}(\varrho)=\sum_{i=1}^{m}P_{i}\varrho P_{i}$ is the block-dephasing operation. The difference between $R_c$ and $R_q$ lies in whether Eve performs measurements on each copy of system $E$ in Figure \ref{fig:Extension} and the two functions coincide when $\varrho$ is pure. Technical details here and below are presented in Supplemental Material \cite{suppmaterial}.

Since a classical mixture of quantum states should not increase the output intrinsic randomness on average, the randomness function $R(\varrho, \mathbf{P})$ satisfies the convexity condition,
\begin{equation}\label{eq:RPconvex}
	\begin{split}
R\left(\sum_j r_j \varrho_j, \mathbf{P}\right) &\le \sum_j r_j R(\varrho_j, \mathbf{P}),
	\end{split}
\end{equation}
for arbitrary coefficients $\sum_j r_j=1$ and $r_j\geq0$. A state, $\varrho=\sum_j r_j \varrho_j \equiv\oplus_jr_j \varrho_j$, is \textit{block-diagonal} with respect to $\mathbf{P}$, when $\forall j\neq j'$ and $\forall P_{i}\in \mathbf{P}$, there is $\tr(P_{i}\varrho_jP_{i}\varrho_{j'})=0 $. Intuitively, different block subspaces should have no interference with each other when measuring. This indicates that the randomness function should also satisfy the additivity condition for the block-diagonal states,
\begin{equation}\label{eq:additivity}
	\begin{split}
		R(\oplus_j r_j \varrho_j, \mathbf{P}) &= \sum_j r_j R(\varrho_j, \mathbf{P}).
	\end{split}
\end{equation}

Note that, the two aforementioned randomness functions for PVMs meet these criteria. With the convexity condition on the randomness function, we can see that the randomness defined in Eq.~\eqref{eq:minR} satisfies the convexity condition for both $\rho$ and $\mathbf{M}$. From the resource theory point of view, these two conditions stem from the convexity \cite{baumgratz2014quantifying} and the additivity on block-diagonal states \cite{Yu2016Alternative} of coherence measures.



Let us check out a special case where the POVM is extremal, which cannot be decomposed into a linear mixture of other POVMs \cite{d2005classical}. This is an analog to a pure state, which is often considered to be decoupled from the environment. An extremal POVM can also be treated as a measurement decoupled from the environment. In the adversary scenario, there is no hidden variable for a pure state or an extremal POVM. That is, in Figure \ref{fig:Extension}, system $F$ is trivial. To put this intuition in a rigorous manner, we show that for an extremal POVM, the intrinsic randomness is independent of the extension $\{\mathbf{P},\sigma\}$.


\begin{theorem}\label{thm:ExtRCan}
For an extremal POVM $\mathbf{M}$ and a fixed input state $\rho$, all the generalized Naimark extensions give the same amount of randomness.
\end{theorem}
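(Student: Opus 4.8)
\textit{Proof plan.} The plan is to exploit the rigidity that extremality forces on Naimark extensions: on the subspace where $\rho\otimes\sigma$ actually lives, $\mathbf{P}$ reduces to a coherent copy of the (essentially unique) canonical dilation of $\mathbf{M}$ together with a passive register carrying $\sigma$, so the randomness is pinned to a single value that I will write down. First note that only the action of $\mathbf{P}$ on $\mathcal{H}^A\otimes\mathrm{supp}(\sigma)$ enters $R(\rho\otimes\sigma,\mathbf{P})$, so I may take $\sigma=\sum_k p_k\ketbra{k}$ of full rank ($p_k>0$). For each $k$ the operators $\{\bra{k}P_i\ket{k}\}_i$ form a POVM on $\mathcal{H}^A$, and the consistency condition~\eqref{eq:consistency} reads $M_i=\sum_k p_k\bra{k}P_i\ket{k}$; since $\mathbf{M}$ is extremal this convex decomposition must be trivial, so $\bra{k}P_i\ket{k}=M_i$ for all $i,k$.

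Next I would show the off-diagonal blocks vanish. For $k\neq k'$ set $B_i:=\bra{k}P_i\ket{k'}$; using $P_i^2=P_i$ and $\sum_{k''}\ketbra{k''}=\mathbf{1}^Q$ one gets $B_iB_i^\dagger\le\bra{k}P_i\ket{k}=M_i$ and $B_i^\dagger B_i\le\bra{k'}P_i\ket{k'}=M_i$, hence $B_i=\Pi_iB_i\Pi_i$ with $\Pi_i$ the projector onto $\mathrm{supp}(M_i)$. Then $B_i+B_i^\dagger$ and $i(B_i-B_i^\dagger)$ are Hermitian, supported inside $\mathrm{supp}(M_i)$, and sum to zero over $i$ (because $\sum_iB_i=\braket{k}{k'}\mathbf{1}^A=0$); the characterization of extremal POVMs by admissible perturbations~\cite{d2005classical} forces both to vanish, so $\bra{k}P_i\ket{k'}=0$ whenever $k\neq k'$. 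Identifying $\mathcal{H}^A\otimes\mathrm{supp}(\sigma)$ with $\bigoplus_k\mathcal{H}^A$, this means $\Pi_\sigma P_i\Pi_\sigma=\bigoplus_k M_i$ (with $\Pi_\sigma$ the projector onto $\mathcal{H}^A\otimes\mathrm{supp}(\sigma)$), while $\rho\otimes\sigma=\bigoplus_k p_k\rho$ is block diagonal there.

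With this block structure the randomness collapses. By the polar decomposition $P_i\Pi_\sigma=W_i\bigl(\bigoplus_k\sqrt{M_i}\bigr)$ for a partial isometry $W_i$, and the $W_i$ have mutually orthogonal ranges since $(P_i\Pi_\sigma)^\dagger(P_{i'}\Pi_\sigma)=\Pi_\sigma P_iP_{i'}\Pi_\sigma=0$ for $i\neq i'$. Hence $\Delta_{\mathbf{P}}(\rho\otimes\sigma)=\bigoplus_i W_i\bigl(\bigoplus_k p_k\sqrt{M_i}\rho\sqrt{M_i}\bigr)W_i^\dagger$, and unitary invariance together with additivity of $S$ on direct sums give $S(\Delta_{\mathbf{P}}(\rho\otimes\sigma))=\sum_i S(\sqrt{M_i}\rho\sqrt{M_i})+S(\sigma)$, independent of the extension; with $S(\rho\otimes\sigma)=S(\rho)+S(\sigma)$ and $R_q(\varrho,\mathbf{P})=S(\Delta_{\mathbf{P}}(\varrho))-S(\varrho)$ this yields $R_q(\rho\otimes\sigma,\mathbf{P})=\sum_i S(\sqrt{M_i}\rho\sqrt{M_i})-S(\rho)$. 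For $R_c$, a pure $\ket{\psi}$ in $\mathcal{H}^A\otimes\mathrm{supp}(\sigma)$ has $\{P_i\ket{\psi}\}_i$ orthogonal, so $S(\Delta_{\mathbf{P}}(\ketbra{\psi}))=H(\{\bra{\psi}P_i\ket{\psi}\}_i)$, and by the block structure $\bra{\psi}P_i\ket{\psi}=\tr(M_i\rho_\psi)$ with $\rho_\psi=\tr_Q\ketbra{\psi}$. Thus $R_c(\rho\otimes\sigma,\mathbf{P})=\min\sum_j q_j H(\{\tr(M_i\rho_j)\}_i)$ over decompositions $\sum_j q_j\ketbra{\psi_j}=\rho\otimes\sigma$ with $\rho_j=\tr_Q\ketbra{\psi_j}$. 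Product decompositions $\ket{\alpha_a}\otimes\ket{k}$ give the upper bound $\min_{\{r_a,\ket{\alpha_a}\}}\sum_a r_a H(\{\bra{\alpha_a}M_i\ket{\alpha_a}\}_i)$; for the matching lower bound, concavity of $H$ gives $H(\{\tr(M_i\rho_j)\}_i)\ge$ the convex-roof value of $\rho_j$, which combined with convexity of that convex roof (equivalently~\eqref{eq:RPconvex}) and $\sum_j q_j\rho_j=\rho$ closes the gap. Either way the value depends only on $\rho$ and $\mathbf{M}$.

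I expect the main obstacle to be the vanishing of the in-support off-diagonal blocks $\bra{k}P_i\ket{k'}$: this works only because the diagonal blocks were first pinned to $M_i$, which is exactly what places $B_i$ inside $\mathrm{supp}(M_i)$ and lets the perturbation characterization of extremality bite. Without this rigidity $\Pi_\sigma P_i\Pi_\sigma$ need not be block diagonal and the randomness genuinely varies with the extension (as it does for non-extremal POVMs), so the whole argument rests on it. A minor additional point is to treat $R_c$, $R_q$ (and any randomness functional obeying the convexity/additivity axioms~\eqref{eq:RPconvex}--\eqref{eq:additivity} with pure-state value $S(\Delta_{\mathbf{P}}(\ketbra{\psi}))$) uniformly, which the block structure makes routine.
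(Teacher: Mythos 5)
Your proof is correct, but it reaches the conclusion by a genuinely different route than the paper. Both arguments hinge on the same structural fact — in an eigenbasis $\{\ket{k}\}$ of $\sigma$, the blocks $\bra{k}P_i\ket{k'}$ of the extended projectors are pinned: the diagonal blocks equal $M_i$ and the off-diagonal blocks vanish — but you establish it differently. The paper gets the diagonal part from Proposition~\ref{Pro:CorrDecomp} (every pure-state decomposition of $\sigma$ induces a POVM decomposition of $\mathbf{M}$, which extremality forces to be trivial), and then kills the cross term $W_i=\bra{\varphi_2}P_i\ket{\varphi_1}$ by noting that \emph{every} superposition $a\ket{\varphi_1}+b\ket{\varphi_2}$ must also induce $\mathbf{M}$, so $ab^*W_i+a^*bW_i^{\dagger}=0$ for all $a,b$; you instead bound $B_iB_i^{\dagger}\le M_i$ and $B_i^{\dagger}B_i\le M_i$ to localize $B_i$ inside $\mathrm{supp}(M_i)$ and then invoke the perturbation (admissible-direction) characterization of extremal POVMs from \cite{d2005classical} — a result the paper never uses (it only proves linear independence of the elements). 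Your endgame also differs: where the paper routes through its general lemmas (additivity on block-diagonal states, equivalence of pure-ancilla extensions via randomness-invariant unitaries) so that the argument applies to any randomness functional obeying the convexity/additivity axioms, you derive the explicit block form $\Pi_{\sigma}P_i\Pi_{\sigma}=\bigoplus_k M_i$ and compute closed-form, manifestly extension-independent expressions, $R_q(\rho\otimes\sigma,\mathbf{P})=\sum_i S(\sqrt{M_i}\rho\sqrt{M_i})-S(\rho)$ and $R_c$ as the convex roof of $H(\{\tr(M_i\ketbra{\alpha})\}_i)$. The trade-off: the paper's version is more axiomatic and self-contained within its own lemmas; yours buys explicit formulas for both measures (useful in their own right) at the cost of importing the extremality-rigidity theorem and treating $R_c$ and $R_q$ by separate calculations. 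Both the localization step $B_i=\Pi_iB_i\Pi_i$ and the $R_c$ lower bound via concavity of $H$ plus convexity of the convex roof check out, so I see no gap.
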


Then, we can skip the minimization problem in Eq.~\eqref{eq:minR} and employ any extension for the randomness function. In practice, we can take a canonical extension of $\mathbf{M}$ \cite{peres1990neumark}, denoted by $\mathbf{P}_c$,
\begin{equation}\label{eq:RexCanon}
	\begin{split}
		R(\rho,\mathbf{M}) &= R(\rho\otimes \ketbra{0},\mathbf{P}_c). \\
	\end{split}
\end{equation}

A general POVM can be decomposed to extremal ones, just like that a mixed state can be decomposed to pure states. The decomposition of a POVM is generally not unique, which is controlled by a hidden variable from Alice's point of view. In the generalized Naimark extension as shown in Figure \ref{fig:Extension}, the following Proposition connects the decomposition of the POVM with that of the ancillary state.



\begin{proposition}[Correspondence between ancillary state and measurement decomposition]\label{Pro:CorrDecomp}
In a generalized Naimark extension of a POVM, $\mathbf{M}$, if the ancillary state has a pure state decomposition, $\sigma=\sum_j r_j \ketbra{\varphi_j}$ with $\sum r_j=1$ and $r_j>0$, then there exists a measurement decomposition  $\mathbf{M}=\sum_{j} r_j \mathbf{N}^j$, and vice versa.
\end{proposition}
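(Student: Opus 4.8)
The plan is to make the correspondence in the proposition's title explicit: to each pure component $\ket{\varphi_j}$ of the ancillary state, assign the POVM $\mathbf N^j$ with elements $N_i^j=\tr_Q[P_i(\mathbf 1^A\otimes\ketbra{\varphi_j})]$, and then establish the two implications separately, reading ``vice versa'' as: a decomposition of $\mathbf M$ can be matched by \emph{some} generalized Naimark extension whose ancillary state carries the corresponding pure-state decomposition.

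\emph{Forward direction.} Fix a generalized Naimark extension, i.e.\ a PVM $\mathbf P=\{P_i\}$ on $\mathcal H_A\otimes\mathcal H_Q$ and a state $\sigma$ satisfying the consistency condition~\eqref{eq:consistency}, and assume $\sigma=\sum_j r_j\ketbra{\varphi_j}$ with normalized $\ket{\varphi_j}$. I would set $N_i^j:=\tr_Q[P_i(\mathbf 1^A\otimes\ketbra{\varphi_j})]$, which is the partial matrix element $\bra{\varphi_j}P_i\ket{\varphi_j}$ viewed as an operator on $\mathcal H_A$. Three short checks then finish this direction: positivity, $\bra{\chi}N_i^j\ket{\chi}=(\bra{\chi}\otimes\bra{\varphi_j})P_i(\ket{\chi}\otimes\ket{\varphi_j})\ge0$ from $P_i\ge0$; completeness, $\sum_i N_i^j=\bra{\varphi_j}\!\left(\sum_i P_i\right)\!\ket{\varphi_j}=\langle\varphi_j|\varphi_j\rangle\,\mathbf 1^A=\mathbf 1^A$, so each $\mathbf N^j=\{N_i^j\}_i$ is a POVM; and, by linearity of the partial trace, $\sum_j r_j N_i^j=\tr_Q[P_i(\mathbf 1^A\otimes\sigma)]=M_i$, i.e.\ $\mathbf M=\sum_j r_j\mathbf N^j$.

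\emph{Converse.} Given $\mathbf M=\sum_{j=1}^n r_j\mathbf N^j$ with $r_j>0$ (each $\mathbf N^j$ having the $m$ outcomes of $\mathbf M$), I would realize it by a dilation that correlates the Naimark ancilla with a classical index register prepared in the distribution $\{r_j\}$. Concretely: for each $j$ take a canonical Naimark extension~\cite{peres1990neumark} of $\mathbf N^j$, with ancilla $\mathcal H_{Q'}=\mathbb C^m$, a fixed vector $\ket{0}$, a unitary $U_j$ on $\mathcal H_A\otimes\mathcal H_{Q'}$, and projectors $P_i^{(j)}=U_j^\dagger(\mathbf 1^A\otimes\ketbra{i})U_j$ obeying $\bra{0}P_i^{(j)}\ket{0}=N_i^j$. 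Adjoin an index register $\mathcal H_{Q_0}=\mathbb C^n$, set $\mathcal H_Q=\mathcal H_{Q_0}\otimes\mathcal H_{Q'}$, and form the controlled unitary $U=\sum_j\ketbra{j}_{Q_0}\otimes(U_j)_{AQ'}$, with $P_i=U^\dagger(\mathbf 1^A\otimes\mathbf 1^{Q_0}\otimes\ketbra{i}_{Q'})U$; since conjugating a PVM by a unitary yields a PVM, $\mathbf P=\{P_i\}$ is automatically projective. Taking $\ket{\varphi_j}=\ket{j}_{Q_0}\otimes\ket{0}_{Q'}$ and $\sigma=\sum_j r_j\ketbra{\varphi_j}$, the block structure of $U$ in the $Q_0$ basis gives $\bra{\varphi_j}P_i\ket{\varphi_j}=\bra{0}P_i^{(j)}\ket{0}=N_i^j$, hence $\tr_Q[P_i(\mathbf 1^A\otimes\sigma)]=\sum_j r_j N_i^j=M_i$. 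Thus $(\mathbf P,\sigma)$ is a generalized Naimark extension of $\mathbf M$ whose ancillary state carries the pure-state decomposition $\sigma=\sum_j r_j\ketbra{\varphi_j}$, with the $\mathbf N^j$ recovered through the correspondence.

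The forward direction is essentially immediate; in the converse the only real step is spotting the index-register dilation, after which every claim (unitarity of $U$, projectivity of $\mathbf P$, and $\bra{\varphi_j}P_i\ket{\varphi_j}=N_i^j$) is routine tensor-factor bookkeeping. The one point I would be careful to state is that this is not a bijection: the forward map allows non-orthonormal $\{\ket{\varphi_j}\}$ inside a fixed extension, whereas the converse produces orthonormal ones inside a generally enlarged extension, so each direction asserts only the existence of a matching decomposition.
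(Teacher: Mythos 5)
Your proof is correct and follows essentially the same route as the paper: the forward direction defines $N_i^j=\tr_Q[P_i(\mathbf 1^A\otimes\ketbra{\varphi_j})]$ and uses linearity of the partial trace, and the converse builds the same controlled-unitary dilation from canonical Naimark extensions of each $\mathbf N^j$ with an index register carrying the distribution $\{r_j\}$ (the paper's $Q_1Q_2$ construction with $U^{AQ}=\sum_j U_j^{AQ_1}\otimes\ketbra{\varphi_j}^{Q_2}$ is your $U=\sum_j\ketbra{j}_{Q_0}\otimes U_j$ up to tensor ordering). Your explicit positivity/completeness checks and the closing remark on non-uniqueness of the correspondence are consistent with, and slightly more detailed than, the paper's presentation.
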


If Eve performs a local measurement on her system $F$, without loss of generality, the measurement can be restricted to be rank-1. Otherwise, the measurement can be viewed as a rank-1 measurement followed by coarse graining. Then, the ancillary state is chosen from a pure state ensemble and the POVM degenerates into a mixture of corresponding POVMs according to Proposition \ref{Pro:CorrDecomp}. The intrinsic randomness of Alice's outcomes is a weighted average of the randomness for each pure input ancillary state. So, the minimization problem of Eq.~\eqref{eq:minR} becomes minimizing the value $\sum_j r_j R(\rho\otimes\ketbra{\varphi_j}, \mathbf{P})$ over all possible Naimark extensions and pure state decompositions of the ancillary state.

Denote the solution to the minimization problem after Eve's measurement to be $\mathbf{P}^*$ and $\sigma^*=\sum_j r_j^* \ketbra{\varphi_j^*}$. We show that the corresponding measurement decomposition, $\mathbf{M}=\sum_{j} r_j^* \mathbf{N}^{*j}$, is extremal --- indicating that $\{\mathbf{N}^{*j}\}$ are all extremal. The intrinsic randomness for the POVM outcomes is given by $\sum_j r_j^* R(\rho, \mathbf{N}^{*j})$. As a result, we can minimize over all possible extremal decompositions for the POVM to evaluate Eq.~\eqref{eq:minR} and give a convex-roof construction of intrinsic randomness, as presented in the following theorem.

\begin{theorem}\label{thm:POVMrandom}
When Eve performs a measurement on her system $F$, the intrinsic randomness of POVM outcomes is given by,
\begin{equation}\label{eq:Rexdecomp}
	\begin{split}
R^{cf}(\rho,\mathbf{M}) &= \min_{\{\mathbf{N}^j,r_j\}} \sum_j r_j R(\rho, \mathbf{N}^{j}), \\
\textrm{s.t.} \quad  & \mathbf{M}=\sum_{j} r_j \mathbf{N}^j, \\
	\end{split}
\end{equation}
where the decomposed POVMs $\{\mathbf{N}^{j}\}$ are all extremal and the randomness function $R(\rho, \mathbf{N}^{j})$ is given by Eq.~\eqref{eq:RexCanon}.
\end{theorem}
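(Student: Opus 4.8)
The target is the identity $R^{cf}(\rho,\mathbf{M}) = \mathcal{R}_{\mathrm{ext}}(\rho,\mathbf{M})$, where I abbreviate by $\mathcal{R}_{\mathrm{ext}}(\rho,\mathbf{M})$ the right-hand side of Eq.~\eqref{eq:Rexdecomp}. I would prove the two inequalities separately, using the reduction already set up in the paragraphs above: once Eve makes a rank-$1$ measurement on $F$, the quantity to be minimized is $\sum_j r_j\,R(\rho\otimes\ketbra{\varphi_j},\mathbf{P})$ over all generalized Naimark extensions $\mathbf{P}$ and all pure-state ensembles $\{r_j,\ket{\varphi_j}\}$ of the ancilla, and by Proposition~\ref{Pro:CorrDecomp} the induced operators $N^j_i=\tr_Q[P_i(\mathbf{1}\otimes\ketbra{\varphi_j})]$ form an honest POVM decomposition $\mathbf{M}=\sum_j r_j\mathbf{N}^j$. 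A preliminary observation I would record is that $\mathcal{R}_{\mathrm{ext}}(\rho,\cdot)$ is convex in the POVM: concatenating optimal extremal decompositions of $\mathbf{M}_1$ and $\mathbf{M}_2$ produces an extremal decomposition of any convex combination with the convex-combined objective value.

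For ``$\le$'', start from an extremal decomposition $\mathbf{M}=\sum_j r_j^*\mathbf{N}^{*j}$ attaining $\mathcal{R}_{\mathrm{ext}}$. By the converse (``vice versa'') direction of Proposition~\ref{Pro:CorrDecomp} there is a generalized Naimark extension $(\mathbf{P},\sigma)$ with $\sigma=\sum_j r_j^*\ketbra{\varphi_j^*}$ inducing exactly this decomposition, so each $(\mathbf{P},\ket{\varphi_j^*})$ is a conventional Naimark extension of $\mathbf{N}^{*j}$; since $\mathbf{N}^{*j}$ is extremal, Theorem~\ref{thm:ExtRCan} gives $R(\rho\otimes\ketbra{\varphi_j^*},\mathbf{P})=R(\rho,\mathbf{N}^{*j})$, and summing over $j$ with weights $r_j^*$ exhibits a feasible point of value $\mathcal{R}_{\mathrm{ext}}(\rho,\mathbf{M})$, hence $R^{cf}\le\mathcal{R}_{\mathrm{ext}}$. (Equivalently one can build $\mathbf{P}$ by hand as a direct sum of canonical extensions of the $\mathbf{N}^{*j}$, with the $\ket{\varphi_j^*}$ taken orthonormal in an auxiliary register.)

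The inequality ``$\ge$'' is the crux, and I expect the main obstacle there: the definitional bound $R(\rho\otimes\ketbra{\varphi_j},\mathbf{P})\ge R(\rho,\mathbf{N}^j)$ coming from Eq.~\eqref{eq:minR} is too weak, yielding only $R^{cf}\ge R(\rho,\mathbf{M})$. What is needed is the sharper \emph{lemma}: for \emph{any} conventional Naimark extension $(\mathbf{P},\ket{\varphi})$ of an arbitrary POVM $\mathbf{N}$ one has $R(\rho\otimes\ketbra{\varphi},\mathbf{P})\ge\mathcal{R}_{\mathrm{ext}}(\rho,\mathbf{N})$. Granting the lemma, $R^{cf}(\rho,\mathbf{M})\ge\min\sum_j r_j\,\mathcal{R}_{\mathrm{ext}}(\rho,\mathbf{N}^j)$ over all POVM decompositions, and convexity of $\mathcal{R}_{\mathrm{ext}}(\rho,\cdot)$ together with $\mathbf{M}=\sum_j r_j\mathbf{N}^j$ collapses this to $\mathcal{R}_{\mathrm{ext}}(\rho,\mathbf{M})$, which finishes the proof. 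Equivalently, the lemma is exactly what forces the optimal decomposition in Eq.~\eqref{eq:minR} to consist of extremal POVMs.

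To prove the lemma I would exploit the rigidity of a product state with a pure second factor: every pure state in the support of $\rho\otimes\ketbra{\varphi}$ is of the form $\ket{\xi}_A\otimes\ket{\varphi}_Q$, so a decomposition of $\rho\otimes\ketbra{\varphi}$ is precisely a decomposition $\rho=\sum_k q_k\ketbra{\xi_k}$, and $\Delta_{\mathbf{P}}(\ketbra{\xi_k}\otimes\ketbra{\varphi})$ has spectrum $\{\bra{\xi_k}N_i\ket{\xi_k}\}_i$. Hence $\phi(\mathbf{N}):=R(\rho\otimes\ketbra{\varphi},\mathbf{P})$ is independent of the chosen extension and equals $\min_{\rho=\sum_k q_k\ketbra{\xi_k}}\sum_k q_k\,H(\{\bra{\xi_k}N_i\ket{\xi_k}\}_i)$, a pointwise minimum of functions concave in $\mathbf{N}$ and therefore itself concave in $\mathbf{N}$; moreover $\phi$ agrees with $R(\rho,\cdot)$ at every extremal POVM, since Theorem~\ref{thm:ExtRCan} lets the canonical extension compute $R(\rho,\mathbf{N})$ there. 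Concavity then gives, for any extremal decomposition $\mathbf{N}=\sum_l s_l\mathbf{L}^l$, that $\phi(\mathbf{N})\ge\sum_l s_l\phi(\mathbf{L}^l)=\sum_l s_l R(\rho,\mathbf{L}^l)$, and minimizing the right side over extremal decompositions yields exactly $\mathcal{R}_{\mathrm{ext}}(\rho,\mathbf{N})$, establishing the lemma. I would write this for $R=R_c$, where the spectrum computation is cleanest, and remark that the $R_q$ case follows by the same scheme (the two measures coincide for pure $\rho$); I would also note that Carath\'eodory-type bounds on the number of components make all the minimizations attained, so ``$\min$'' is legitimate throughout.
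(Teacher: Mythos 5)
Your argument is sound for $R_c$ and reaches the same destination as the paper, but it is organized differently. The paper does not split the identity into two inequalities; it takes an arbitrary feasible point of Eq.~\eqref{eq:minR} after Eve's measurement, i.e.\ $\sum_j r_j R(\rho\otimes\ketbra{\varphi_j},\mathbf{P})$ with induced decomposition $\mathbf{M}=\sum_j r_j\mathbf{N}^j$, and shows by a local refinement step that if some $\mathbf{N}^0$ is not extremal, splitting it as $\lambda\mathbf{N}^{00}+(1-\lambda)\mathbf{N}^{01}$ and building the corresponding finer extension via Proposition~\ref{Pro:CorrDecomp} can only decrease the objective. The engine of that step is exactly your two-point concavity inequality
\begin{equation*}
\lambda R(\rho\otimes\ketbra{\varphi_{00}},\mathbf{P}')+(1-\lambda)R(\rho\otimes\ketbra{\varphi_{01}},\mathbf{P}')\le R(\rho\otimes\ketbra{\varphi_{0}},\mathbf{P}),
\end{equation*}
proved for $R_c$ by concavity of the Shannon entropy, just as in your spectrum computation. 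So the mathematical core is identical; what your packaging buys is a cleaner global statement. Your lemma $\phi(\mathbf{N})\ge\mathcal{R}_{\mathrm{ext}}(\rho,\mathbf{N})$, obtained from concavity of $\phi$ as a pointwise minimum of concave functions, jumps directly to a full extremal decomposition and sidesteps the question (left implicit in the paper) of whether iterated refinement of non-extremal components terminates at an extremal decomposition with the claimed value. Your ``$\le$'' direction, via the converse of Proposition~\ref{Pro:CorrDecomp} and Theorem~\ref{thm:ExtRCan}, matches the paper's implicit construction. The observation that $R_c(\rho\otimes\ketbra{\varphi},\mathbf{P})$ depends only on the induced POVM is Lemma~\ref{applem:purinpsr} of the supplement.

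The one genuine gap is the $R_q$ case. Your parenthetical justification (``the two measures coincide for pure $\rho$'') does not carry the argument over: for mixed $\rho$, $R_q(\rho\otimes\ketbra{\varphi},\mathbf{P})=S\bigl(\Delta_{\mathbf{P}}(\rho\otimes\ketbra{\varphi})\bigr)-S(\rho)$ is not a pointwise minimum over pure-state decompositions, so your ``min of concave functions'' route to concavity of $\phi_q$ in $\mathbf{N}$ is unavailable. Concavity still holds, but it requires a separate argument; the paper purifies the mixed ancilla $\tilde\sigma=\lambda\ketbra{\varphi_{00}}+(1-\lambda)\ketbra{\varphi_{01}}$ to a system $F$, uses monotonicity of the relative entropy under $\tr_F$ together with Lemma~\ref{applem:purinpsr}, and then invokes the additivity of $R_q$ on block-diagonal states (Lemma~\ref{applem:addblockdiag}) to split the left-hand side into the convex combination. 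You would need to supply this (or an equivalent data-processing argument) to complete the $R_q$ branch.
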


Similar to the case of pure states, when a POVM $\mathbf{M}$ is extremal, there is $R(\rho,\mathbf{M})=R^{cf}(\rho,\mathbf{M})$.

After quantifying randomness for the measurement outcomes with respect to a given POVM, it is interesting to consider the set of states that have no randomness, called \textit{non-random state}. For the special case of a von Neumann measurement, a non-random state is diagonal in the measurement basis \cite{baumgratz2014quantifying,yuan2015intrinsic}. For a general PVM, a non-random state has a pure-state decomposition such that each decomposed state is a $+1$ eigenstate of a measurement projector. Here, we give necessary and sufficient conditions for the non-random states under a generic measurement in the following two corollaries.


\begin{corollary}[Necessary and sufficient condition for non-random states]\label{co:iffnonrandom}
Given a POVM $\mathbf{M}$, a state $\rho$ is non-random, $R^{cf}(\rho,\mathbf{M})=0$, iff the measurement has an extremal decomposition, $\mathbf{M}=\sum_{j}r_j \mathbf{N}^{j}$, satisfying one of the following two equivalent conditions,
\begin{enumerate}
\item
$\forall j, i\neq i'$, $N_{i}^{j}\rho N_{i'}^{j}=0$;

\item
for each $\mathbf{N}^j$, the state has a corresponding spectral decomposition, $\rho=\sum_{k} q_k^j \ketbra{\psi_k^j}$, such that $\forall k$, $N_{i}^{j}\ket{\psi_k^j}=\ket{\psi_k^j}$ for some element $N_{i}^{j}$.
\end{enumerate}
\end{corollary}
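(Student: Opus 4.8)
The plan is to reduce the statement to Theorem~\ref{thm:POVMrandom} together with the non-randomness criterion for PVMs recalled just before the corollary, and to treat the equivalence of conditions~1 and~2 separately. Since $R(\rho,\mathbf N^j)\ge 0$ always and $R^{cf}(\rho,\mathbf M)=\min_{\{\mathbf N^j,r_j\}}\sum_j r_j R(\rho,\mathbf N^j)$ ranges over extremal decompositions (with $r_j>0$ without loss of generality), $R^{cf}(\rho,\mathbf M)=0$ holds if and only if there is an extremal decomposition $\mathbf M=\sum_j r_j\mathbf N^j$ with $R(\rho,\mathbf N^j)=0$ for \emph{every} $j$. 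So the whole corollary follows once I characterize, for a single POVM $\mathbf N$, when $R(\rho,\mathbf N)=0$, identify that with condition~1 restricted to $\mathbf N$, and prove conditions~1 and~2 equivalent for any fixed $\mathbf N$.

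For the single-POVM characterization I would use $R(\rho,\mathbf N)=R(\rho\otimes\ketbra{0},\mathbf P_c)$ from Eq.~\eqref{eq:RexCanon}, which after the dilation isometry $V\ket{\psi}=\sum_i\sqrt{N_i}\ket{\psi}\otimes\ket{i}$ on $\mathcal{H}_A\otimes\mathcal{H}_Q$ takes the form $R(V\rho V^\dagger,\{\mathbf 1_A\otimes\ketbra{i}\}_i)$. A state has zero randomness under a PVM exactly when it is block-diagonal with respect to that PVM (equivalently, it admits a pure-state decomposition each vector of which is a $+1$ eigenstate of some projector, as recalled before the corollary); this holds for $R_c$ and $R_q$ alike, since a relative entropy and a von Neumann entropy vanish precisely on such states. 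Hence $R(\rho,\mathbf N)=0$ iff $V\rho V^\dagger$ is block-diagonal with respect to $\{\mathbf 1_A\otimes\ketbra{i}\}_i$, i.e. $(\mathbf 1_A\otimes\bra{i})V\rho V^\dagger(\mathbf 1_A\otimes\ket{i'})=\sqrt{N_i}\,\rho\,\sqrt{N_{i'}}=0$ for all $i\ne i'$. A short linear-algebra lemma — $\sqrt{N_i}$ and $N_i$ have the same range and kernel, and a positive operator is injective on its range — shows $\sqrt{N_i}\rho\sqrt{N_{i'}}=0\Leftrightarrow N_i\rho N_{i'}=0$. Applying this to each $\mathbf N^j$ identifies $R^{cf}(\rho,\mathbf M)=0$ with the existence of an extremal decomposition obeying condition~1.

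It remains to prove that for a fixed POVM $\mathbf N$ conditions~1 and~2 are equivalent. The direction $2\Rightarrow 1$ is immediate: if $N_i\ket{\psi_k}=\ket{\psi_k}$ then $0\le\bra{\psi_k}N_{i'}\ket{\psi_k}\le\bra{\psi_k}(\mathbf 1-N_i)\ket{\psi_k}=0$ for $i'\ne i$, so $N_{i'}\ket{\psi_k}=0$, and substituting $\rho=\sum_k q_k\ketbra{\psi_k}$ into $N_i\rho N_{i'}$ annihilates every term. For $1\Rightarrow 2$, summing $N_i\rho N_{i'}=0$ over $i'\ne i$ gives $N_i\rho(\mathbf 1-N_i)=0$, so $N_i\rho=N_i\rho N_i$; taking adjoints gives $\rho N_i=N_i\rho N_i$, hence $[\rho,N_i]=0$ for every $i$, so each $N_i$ commutes with the support projector $\Pi$ of $\rho$. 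Restricting to $\mathrm{supp}(\rho)$, the operators $\tilde N_i:=\Pi N_i\Pi$ are positive, sum to the identity there, and satisfy $\tilde N_i\tilde N_{i'}=\Pi N_iN_{i'}\Pi=0$ for $i\ne i'$ (using $[\Pi,N_i]=0$ and $\rho N_iN_{i'}=N_i\rho N_{i'}=0$ with $\rho$ invertible on its support). Multiplying $\sum_i\tilde N_i=\mathbf 1$ by $\tilde N_j$ gives $\tilde N_j=\tilde N_j^2$, so the $\tilde N_i$ are mutually orthogonal projectors with $\mathrm{supp}(\rho)=\bigoplus_i\mathrm{range}(\tilde N_i)$; since $\rho$ commutes with each $\tilde N_i$ it can be diagonalized within each summand, yielding a spectral decomposition $\rho=\sum_k q_k\ketbra{\psi_k}$ whose eigenvectors each lie in some $\mathrm{range}(\tilde N_i)$, and the norm comparison $\|N_i\ket{\psi}\|^2\le\bra{\psi}N_i\ket{\psi}=1=\|\Pi N_i\ket{\psi}\|^2$ upgrades $\tilde N_i\ket{\psi}=\ket{\psi}$ to $N_i\ket{\psi}=\ket{\psi}$, which is condition~2.

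The main obstacle is the $1\Rightarrow 2$ direction: from the bilinear annihilation relations one must extract both $[\rho,N_i]=0$ and the fact that, cut down to $\mathrm{supp}(\rho)$, the $N_i$ collapse to an orthogonal projective resolution of that subspace; the $\sqrt{N_i}$-versus-$N_i$ lemma and the final norm comparison are routine bookkeeping around this point. A useful consistency check: for $\mathbf M=\{\mathbf 1/2,\mathbf 1/2\}$ and any $\rho$ with eigenbasis $\{\ket{k}\}$, the extremal decomposition $\mathbf M=\tfrac{1}{2d}\sum_k\big(\{\ketbra{k},\mathbf 1-\ketbra{k}\}+\{\mathbf 1-\ketbra{k},\ketbra{k}\}\big)$ satisfies condition~1, recovering that every state is non-random under this POVM.
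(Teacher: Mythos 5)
Your proof is correct, but it is organized quite differently from the paper's. The paper proves the equivalence ``$R^{cf}(\rho,\mathbf{M})=0$ $\Leftrightarrow$ condition~2'' directly: sufficiency by showing each $\ket{\psi_k^j}\otimes\ket{\varphi_j}$ is a $+1$ eigenstate of some extended projector $P_i^j$, and necessity by observing that the dephased product state $\rho\otimes\ketbra{\varphi_j}$ equals itself and so has product eigenvectors fixed by some $P_i^j$; it then disposes of condition~1 by citing Bischof \emph{et al.} for the fact that $N_i^j\rho N_{i'}^j=0$ ($i\neq i'$) is equivalent to $R(\rho\otimes\ketbra{\varphi_j},\mathbf{P}^j)=0$, and never proves $1\Leftrightarrow 2$ head-on. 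You instead prove ``zero randomness $\Leftrightarrow$ condition~1'' from scratch via the dilation isometry $V\ket{\psi}=\sum_i\sqrt{N_i}\ket{\psi}\otimes\ket{i}$ together with the range/kernel observation that $\sqrt{N_i}\rho\sqrt{N_{i'}}=0\Leftrightarrow N_i\rho N_{i'}=0$ (both reduce to $P_{\mathrm{ran}(N_i)}\rho P_{\mathrm{ran}(N_{i'})}=0$), and then establish $1\Leftrightarrow 2$ by pure linear algebra: condition~1 forces $[\rho,N_i]=0$, and the compressions $\Pi N_i\Pi$ to $\mathrm{supp}(\rho)$ become mutually orthogonal projectors resolving the identity there, after which the norm comparison upgrades $\Pi N_i\ket{\psi}=\ket{\psi}$ to $N_i\ket{\psi}=\ket{\psi}$. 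Your version is more self-contained (it replaces the external citation with a proof) and exposes a structural fact the paper leaves implicit --- that on the support of a non-random state each $\mathbf{N}^j$ acts as a projective measurement --- at the cost of being longer. Both arguments share the same mild, unflagged assumption that the minimum in the convex-roof definition is attained, so that $R^{cf}=0$ yields an actual optimal decomposition; this is justified by the compactness/Carath\'eodory discussion in the paper's numerical section and is not a gap specific to your write-up.
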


For the special case of pure state $\ket{\psi}$, we can derive the necessary and sufficient condition for the general randomness quantification given in Eq.~\eqref{eq:minR}.
\begin{corollary}\label{co:commoneigen}
Given a POVM $\mathbf{M}$, a pure state $\ket{\psi}$ is non-random, $R(\ketbra{\psi},\mathbf{M})=0$, iff $\ket{\psi}$ is a common eigenstate of all measurement elements.
\end{corollary}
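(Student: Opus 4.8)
The plan is to establish both directions by translating ``zero randomness'' into a block structure of an optimal generalized Naimark extension and then pushing that structure down to the POVM elements through the consistency condition~\eqref{eq:consistency}.

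\emph{Sufficiency.} Assuming $M_i\ket\psi=\lambda_i\ket\psi$ for all $i$ (so $\lambda_i=\langle\psi|M_i|\psi\rangle\ge0$ and $\sum_i\lambda_i=1$ since $\sum_iM_i=\mathbf 1$), I would use that each $M_i$ is then block diagonal with respect to $\mathcal{H}=\mathrm{span}\{\ket\psi\}\oplus\ket\psi^{\perp}$, writing $M_i=\lambda_i\ketbra\psi\oplus\tilde M_i$ with $\{\tilde M_i\}$ a POVM on $\ket\psi^{\perp}$. For each $i_0$ with $\lambda_{i_0}>0$ set $\mathbf N^{i_0}$ with elements $N^{i_0}_i=\delta_{ii_0}\ketbra\psi\oplus\tilde M_i$; then $\sum_iN^{i_0}_i=\mathbf 1$, $\sum_{i_0}\lambda_{i_0}\mathbf N^{i_0}=\mathbf M$, and $\mathbf N^{i_0}$ reports $i_0$ deterministically on $\ket\psi$. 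By the converse part of Proposition~\ref{Pro:CorrDecomp} this decomposition is realized by some extension $(\mathbf P,\sigma)$ with $\sigma=\sum_{i_0}\lambda_{i_0}\ketbra{i_0}$ whose effective POVM at ancilla state $\ket{i_0}$, namely $\{\bra{i_0}P_i\ket{i_0}\}$, equals $\mathbf N^{i_0}$, and determinism forces $P_{i_0}\ket{\psi,i_0}=\ket{\psi,i_0}$. Hence $\ketbra\psi\otimes\sigma=\sum_{i_0}\lambda_{i_0}\ketbra{\psi,i_0}$ is block diagonal with respect to $\mathbf P$, so $R(\ketbra\psi\otimes\sigma,\mathbf P)=0$ and therefore $R(\ketbra\psi,\mathbf M)=0$.

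\emph{Necessity.} Conversely, from $R(\ketbra\psi,\mathbf M)=0$ I would take an extension $(\mathbf P,\sigma)$ attaining it (bounding the ancilla dimension so the minimum in~\eqref{eq:minR} is reached) and set $\varrho=\ketbra\psi\otimes\sigma$. For the relative-entropy measure, $S(\varrho\|\Delta_{\mathbf P}(\varrho))=0$ forces $\varrho=\sum_iP_i\varrho P_i$, i.e.\ $[\varrho,P_i]=0$ for all $i$; for the convex-roof measure, vanishing forces a pure decomposition of $\varrho$ into states lying each in a single block of $\mathbf P$, and since the $A$-marginal of $\varrho$ has one-dimensional support these states must be of product form $\ket\psi\otimes\ket{\eta_j}$ with $\sigma=\sum_jq_j\ketbra{\eta_j}$, which (with $N^j_i=\bra{\eta_j}P_i\ket{\eta_j}$) makes each $\mathbf N^j$ deterministic on $\ket\psi$ and $M_i=\sum_jq_jN^j_i$. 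In both cases, partial-tracing over $Q$ and using~\eqref{eq:consistency} yields $M_i\ketbra\psi=\ketbra\psi M_i$; applying both sides to $\ket\psi$ gives $M_i\ket\psi=\langle\psi|M_i|\psi\rangle\ket\psi$ for every $i$, so $\ket\psi$ is a common eigenstate.

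\emph{Main obstacle.} The partial-trace algebra and the check that $\mathbf N^{i_0}$ is a POVM are routine; the delicate point is the necessity direction. One must ensure a genuinely zero-randomness extension exists (attainment of the infimum in~\eqref{eq:minR}, or a limiting argument) and treat the two measures $R_c$ and $R_q$ uniformly --- it is exactly the one-dimensionality of the $A$-support of $\varrho$ that collapses the convex-roof decomposition to product form and merges the two cases. As consistency checks the statement recovers that basis vectors are the only non-random pure states of a von Neumann measurement, that every pure state is non-random for $\{\mathbf 1/2,\mathbf 1/2\}$, and --- since no pure state is a common eigenstate of all $d^2$ rank-one SIC elements --- that SIC outcomes carry strictly positive randomness for every input.
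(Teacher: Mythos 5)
Your proof is correct, and it reaches the paper's conclusion by a partly different route. The direction ``$R(\ketbra{\psi},\mathbf{M})=0\Rightarrow\ket{\psi}$ is a common eigenstate'' coincides in substance with the paper's Lemma~\ref{applem:necenonran}: one extracts $[\ketbra{\psi}\otimes\sigma,P_i]=0$ from a zero-randomness extension and pushes it through the partial trace and the consistency condition~\eqref{eq:consistency}; your explicit case split between $R_q$ (where $S(\varrho\|\Delta_{\mathbf{P}}(\varrho))=0$ immediately gives $\varrho=\Delta_{\mathbf{P}}(\varrho)$) and $R_c$ (where the rank-one $A$-marginal forces the optimal decomposition into product states $\ket{\psi}\otimes\ket{\eta_j}$, each confined to a single block) is more careful than the paper's single terse sentence. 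The converse direction is where you genuinely diverge. The paper proves it via Lemma~\ref{applem:extdecom} --- refining $\mathbf{M}$ to a rank-one POVM containing $\ketbra{\psi}$ as an element, decomposing that into extremal POVMs, and coarse-graining back --- and then invokes condition 2 of Corollary~\ref{appco:iffnonrandom} together with $R\le R^{cf}$. You instead write $M_i=\lambda_i\ketbra{\psi}\oplus\tilde M_i$ and exhibit the explicit decomposition $\mathbf{M}=\sum_{i_0}\lambda_{i_0}\mathbf{N}^{i_0}$ with $N^{i_0}_i=\delta_{ii_0}\ketbra{\psi}\oplus\tilde M_i$, realize it as an extension via the converse of Proposition~\ref{Pro:CorrDecomp}, and evaluate $R$ directly using block additivity. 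This is more elementary (no extremality machinery, no appeal to the characterization of $R^{cf}$) and suffices because Eq.~\eqref{eq:minR} only asks for one extension achieving zero; the paper's heavier route additionally delivers $R^{cf}(\ketbra{\psi},\mathbf{M})=0$, which is what ties the corollary to the extremal-decomposition picture used elsewhere. The one point you flag --- attainment of the minimum in Eq.~\eqref{eq:minR} in the necessity direction --- is a genuine subtlety, but the paper's Lemma~\ref{applem:necenonran} assumes it just as silently, so it is not a defect specific to your argument.
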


For extremal POVMs, according to Corollary \ref{co:iffnonrandom}, the necessary and sufficient condition for zero randomness is that $ \rho $ has a corresponding spectral decomposition, $\sum_{k} q_k \ketbra{\psi_k}$, such that each term $\ket{\psi_k}$ is a $+1$ eigenstate of some element. Intriguingly, there exist particular extremal POVMs, all measurement elements do not have +1 eigenvalue. For example, the SIC measurement is extremal and composed of $ d^{2} $ rank-1 operators, $\{\ketbra{\phi_{i}}/d\}$, with normalized vectors $\ket{\phi_{i}}$ satisfying, $\forall i\neq j$, $\abs{\braket{\phi_{j}}{\phi_{i}}}^{2}=\frac{1}{d+1}$. Each POVM element only has 0 or $1/d$ as eigenvalues. Hence, there is no non-random states for the SIC measurement.

\begin{observation}\label{obs:nononrandom}
For some POVMs, such as SIC measurements, there does not exist a non-random state.
\end{observation}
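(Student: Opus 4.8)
The plan is to deduce the statement directly from Corollary~\ref{co:iffnonrandom}, after observing that a SIC measurement leaves no freedom in the choice of extremal decomposition. First I would record the relevant structure: the SIC POVM consists of the $d^2$ rank-one elements $M_i=\ketbra{\phi_i}/d$ with $\braket{\phi_i}{\phi_i}=1$, and because these $d^2$ operators are linearly independent (they span the whole operator space by informational completeness), the POVM is extremal~\cite{d2005classical}. Hence the only extremal decomposition of $\mathbf{M}$ is the trivial one, $\mathbf{M}=\mathbf{M}$, so in Corollary~\ref{co:iffnonrandom} the index $j$ ranges over a single value and both equivalent conditions there become conditions on $\mathbf{M}$ alone.

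Next I would compute the spectrum of each element: $M_i=\ketbra{\phi_i}/d$ is a positive rank-one operator with eigenvalue $1/d$ on $\ket{\phi_i}$ and $0$ with multiplicity $d-1$, so for $d\ge 2$ no $M_i$ has $+1$ as an eigenvalue, i.e.\ $M_i\ket{v}=\ket{v}$ forces $\ket{v}=0$. Now assume toward a contradiction that some state $\rho$ has $R^{cf}(\rho,\mathbf{M})=0$. By condition~2 of Corollary~\ref{co:iffnonrandom}, applied to the trivial decomposition, $\rho$ admits a spectral decomposition $\rho=\sum_k q_k\ketbra{\psi_k}$ in which every $\ket{\psi_k}$ satisfies $M_i\ket{\psi_k}=\ket{\psi_k}$ for some $i$. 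But the $\ket{\psi_k}$ form an orthonormal basis of $\mathbb{C}^d$ and are in particular nonzero, contradicting the previous line. Therefore no non-random state exists for a SIC measurement (the $d=1$ case being vacuous), which is the Observation.

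I do not anticipate a real obstacle; the one point that needs care is the logical reduction in the first paragraph, namely that extremality of $\mathbf{M}$ collapses the quantifier over extremal decompositions in Corollary~\ref{co:iffnonrandom}, so the vanishing-randomness condition must hold for $\mathbf{M}$ itself. As an independent check one can argue from condition~1 instead: for the trivial decomposition it reads $M_i\rho M_{i'}=\tfrac{1}{d^2}\bra{\phi_i}\rho\ket{\phi_{i'}}\,\ket{\phi_i}\bra{\phi_{i'}}=0$ for all $i\neq i'$, i.e.\ $\bra{\phi_i}\rho\ket{\phi_{i'}}=0$; writing $\ket{u_i}=\rho^{1/2}\ket{\phi_i}$ and using $\sum_i\ketbra{\phi_i}=d\,\mathbf{1}$ one gets $\rho=\tfrac1d\sum_i\ketbra{u_i}$ with the nonzero $u_i$ mutually orthogonal, which upon normalizing forces $\abs{\braket{e_i}{\phi_i}}=\sqrt d>1$ for the eigenbasis $\{\ket{e_i}\}$ of $\rho$ when $d\ge2$ --- impossible. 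Finally I would append the general remark that the same reasoning applies to any extremal POVM all of whose elements have operator norm strictly below one, with SIC the distinguished instance since that norm equals $1/d$.
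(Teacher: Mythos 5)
Your proof is correct and follows essentially the same route as the paper: SIC measurements are extremal (so the only extremal decomposition in Corollary~\ref{co:iffnonrandom} is the trivial one), and since each element $\ketbra{\phi_i}/d$ has only $0$ and $1/d$ as eigenvalues, no spectral decomposition of any state can consist of $+1$ eigenstates of the elements. Your secondary check via condition~1 and the closing remark about extremal POVMs with all operator norms below one are sound additions but not needed for the argument.
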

This observation can help us design source-independent QRNGs. Given a calibrated measurement, if Alice figures that non-random states do not exist, she can be sure that there is positive amount of randomness in the outcomes even without any source characterization. In this case, the lower bound of outcome randomness is given by
\begin{equation}\label{eq:SIRlower}
	\begin{split}
		 R(\mathbf{M}) &= \min_{\rho} R(\rho, \mathbf{M}), \\
	\end{split}
\end{equation}
where the randomness function $R(\rho, \mathbf{M})$ is given in Eq.~\eqref{eq:minR}. This kind of source-independent QRNG designs is stronger than the existing ones \cite{Cao2016Source,Marangon2017Source}, where at least partial source tomography is required. This observation can also help us design other device-independent QRNGs \cite{tavakoli2021mutually}.

\begin{theorem}\label{th:lbousic}
For a SIC measurement $\mathbf{M}$, a lower bound of intrinsic randomness is given by,
\begin{equation}
 R(\mathbf{M})>\log(\frac{d+1}{2}),
\end{equation}
where $d$ is the dimension of the corresponding space.
\end{theorem}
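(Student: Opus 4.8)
The plan is to reduce the source-independent randomness of a SIC-POVM to a single state-independent bound using the fact that a SIC-POVM is a complex-projective $2$-design. First, since a SIC-POVM is extremal, Theorem~\ref{thm:ExtRCan} removes the minimization over extensions in Eq.~\eqref{eq:minR}, so I may evaluate the randomness on the canonical Naimark extension, $R(\rho,\mathbf{M})=R(\rho\otimes\ketbra{0},\mathbf{P}_c)$ as in Eq.~\eqref{eq:RexCanon}. Because every element $M_i=\ketbra{\phi_i}/d$ is rank one, $\mathbf{P}_c$ is a rank-one (von Neumann) PVM on $\mathbb{C}^{d^2}$, so $\Delta_{\mathbf{P}_c}$ diagonalizes $\rho\otimes\ketbra{0}$ in a fixed orthonormal basis whose diagonal entries, read off the consistency condition Eq.~\eqref{eq:consistency} with $\sigma=\ketbra{0}$, are $p_i=\tr(M_i\rho)$. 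A short computation with the PVM measure $R_q$ of Eq.~\eqref{eq:tworanfun} then gives
\begin{equation}
R_q(\rho,\mathbf{M}):=R_q(\rho\otimes\ketbra{0},\mathbf{P}_c)=H(\vec p)-S(\rho),
\end{equation}
with $p_i=\tr(M_i\rho)$ and $H$ the Shannon entropy; since $R_q\le R_c$ in general, a lower bound on $R_q$ suffices.

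Next I would invoke the defining $2$-design identity $\sum_i(\ketbra{\phi_i})^{\otimes 2}=\tfrac{2d}{d+1}\,\Pi_{\mathrm{sym}}$ with $\Pi_{\mathrm{sym}}=\tfrac12(\mathbf{1}+\mathrm{SWAP})$, and contract it against $\rho\otimes\rho$ to obtain the key closed form, valid for \emph{every} input state,
\begin{equation}
\sum_i p_i^2=\frac{1+\tr(\rho^2)}{d(d+1)}.
\end{equation}
Bounding the Shannon entropy below by the collision entropy, $H(\vec p)\ge-\log\sum_i p_i^2$, and then using $S(\rho)\le\log d$ and $\tr(\rho^2)\le1$,
\begin{equation}
\begin{split}
R_q(\rho,\mathbf{M})&\ge\log\frac{d(d+1)}{1+\tr(\rho^2)}-S(\rho)\\
&\ge\log(d+1)-\log 2=\log\frac{d+1}{2}.
\end{split}
\end{equation}

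Finally, for the strict inequality I would argue that the last two estimates cannot be saturated together: $S(\rho)=\log d$ forces $\rho=\mathbf{1}/d$, whereas $\tr(\rho^2)=1$ forces $\rho$ pure, which are incompatible for $d\ge2$. Hence $R_q(\rho,\mathbf{M})>\log\frac{d+1}{2}$ for every $\rho$, and since $\rho\mapsto R_q(\rho,\mathbf{M})$ is continuous on the compact state space the minimum in Eq.~\eqref{eq:SIRlower} is attained and still strictly exceeds $\log\frac{d+1}{2}$, giving $R(\mathbf{M})>\log\frac{d+1}{2}$. I expect the only real content is the $2$-design evaluation of $\sum_i p_i^2$: this is what converts a quantity depending on the unknown source into a state-independent bound. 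One subtlety to handle carefully is that $R_q(\cdot,\mathbf{M})$ is a difference of two concave functions ($H(\vec p)$ and $S(\rho)$) and hence not convex in $\rho$, so the minimization over $\rho$ cannot be restricted to pure inputs; the general-state identity above is genuinely needed, and it is also the source of the slack — for pure $\rho$ the same collision-entropy step already yields the stronger estimate $\log\tfrac{d(d+1)}{2}$.
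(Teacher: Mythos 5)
Your proof is correct and follows essentially the same route as the paper's: reduce to a canonical Naimark extension via extremality (Theorem~\ref{thm:ExtRCan}), lower-bound the Shannon entropy of the outcome distribution by $\log\frac{d(d+1)}{1+\tr(\rho^2)}$, subtract $S(\rho)\le\log d$, and note that the two saturation conditions ($\rho$ pure versus $\rho=\mathbf{1}/d$) are incompatible for $d\ge2$, which gives strictness. Two differences are worth recording. First, where the paper cites Rastegin for the entropy bound, you re-derive it from the $2$-design identity together with $H(\vec p)\ge-\log\sum_ip_i^2$; this makes the argument self-contained and makes transparent why the bound is state-independent. Second, you dispose of the $R_c$ case via $R_q\le R_c$ rather than treating it separately; the paper's separate treatment yields the stronger bound $\log\frac{d(d+1)}{2}$ for $R_c$, which you also observe for pure inputs. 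One technical inaccuracy, harmless here: in the paper's framework the canonical extension of a SIC acts on $\mathcal{H}_A\otimes\mathbb{C}^{d^2}$ and its PVM elements $P_i=U^{\dagger}(\mathbf{1}^{A}\otimes\ketbra{i})U$ have rank $d$, not rank $1$, so $\Delta_{\mathbf{P}_c}$ block-dephases rather than fully diagonalizes $\rho\otimes\ketbra{0}$. Consequently the correct statement is $R_q(\rho,\mathbf{M})=S\left(\Delta_{\mathbf{P}_c}(\rho\otimes\ketbra{0})\right)-S(\rho)\ge H(\vec p)-S(\rho)$, with equality only for pure $\rho$ --- this inequality is exactly step (ii) of the paper's proof. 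Since you only need a lower bound, your chain of estimates survives unchanged, but the asserted identity $R_q=H(\vec p)-S(\rho)$, and the ensuing remark that $R_q$ is non-convex in $\rho$, are not right as stated: $R_q(\cdot,\mathbf{P})=S(\cdot\|\Delta_{\mathbf{P}}(\cdot))$ is convex in the state, as the paper uses elsewhere.
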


It is worth mentioning that for a SIC measurement with an arbitrary input state, the min-entropy given by the guessing probability is lower bounded by $2\log d-1$ \cite{avesani2020unbounded}. This min-entropy based work only considers the case where Eve does not know the ancillary state of detection devices.

\textit{Intrinsic randomness as a POVM coherence measure.}—
There is a close relation between randomness and coherence for PVMs. In Supplemental Material, we show that for a general PVM, the intrinsic randomness of the outcomes can identify the quantum coherence of the input state with respect to the PVM. Following this spirit, it is natural to regard the intrinsic randomness $R$ in Eq.~\eqref{eq:minR} as a coherence measure for a general POVM. Define the set of non-random states to be the set of POVM-incoherent states,
\begin{equation}
\mathcal{I}_{\mathbf{M}}=\{\rho \mid R(\rho,\mathbf{M})=0\}.
\end{equation}
Corollary \ref{co:commoneigen} gives the characterization of pure POVM-incoherent states. If we take the convex-roof construction in Theorem \ref{thm:POVMrandom}, $R^{cf}(\rho,\mathbf{M})$, Corollary \ref{co:iffnonrandom} would give a full description of this POVM-incoherent state set. The set of POVM-incoherent states, $\mathcal{I}_{\mathbf{M}}$, is convex and can be empty for some special POVMs. Unlike the set of POVM-incoherent states which can be empty, incoherent operations always exist for any POVM. For example, the identity map is a trivial POVM-incoherent operation. Here, we give a definition of the POVM-incoherent operations.

\begin{definition}\label{def:incooper}
For POVM $\mathbf{M}$, operation $\Lambda$ is called incoherent, if for any generalized Naimark extension $\{\mathbf{P},\sigma\}$ on a larger space $\mathcal{H}'$, $\Lambda$ has a corresponding extended operation $\Lambda'$ on $\mathcal{H}'$ that satisfies the following two conditions,
\begin{itemize}
\item
$\Lambda'(\rho\otimes\sigma)=\Lambda(\rho)\otimes\sigma$;

\item
$K_i' \mathcal{I}_{\mathbf{P}} K_i'^{\dagger} \subseteq \mathcal{I}_{\mathbf{P}}$ where $\mathcal{I}_{\mathbf{P}}$ is the set of incoherent states of $\mathbf{P}$, and $K_i'$'s are the Kraus operators of $\Lambda'$.
\end{itemize}
\end{definition}


Under this definition, the coherence measures $C$ defined by $R$ and $R^{cf}$ both satisfy the following essential criteria.
(i) Nonnegativity: $C(\rho,\mathbf{M})\ge 0$, and $C(\delta,\mathbf{M})=0$ iff $\delta\in \mathcal{I}_{\mathbf{M}}$;
(ii) Monotonicity: for any POVM-incoherent operation $\Lambda$, $ C(\Lambda(\rho),\mathbf{M})\le C(\rho,\mathbf{M})$;
(iii) Strong monotonicity: for any POVM-incoherent operation $\Lambda$ with Kraus operators $K_i$, $\sum_i p_i C(\rho_i,\mathbf{M})\le C(\rho,\mathbf{M})$ with $p_i=\tr(K_i\rho K_i^{\dagger})$ and $\rho_i=K_i\rho K_i^{\dagger}/p_i$;
(iv) Convexity: $C(\sum_{j}\lambda_{j}\rho_{j},\mathbf{M})\leq \sum_{j}\lambda_{j} C(\rho_{j},\mathbf{M})$ with $\lambda_{j}>0$ and $\sum_j \lambda_{j}=1$.
Besides, in the special case of PVMs, the definitions for coherence measures, incoherent states, and incoherent operations are identical with their corresponding part in the block-coherence theory.
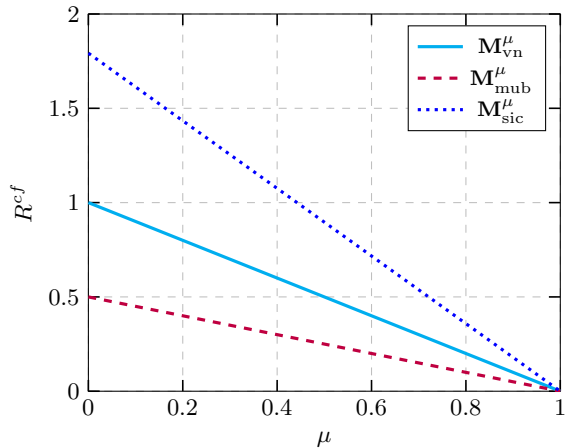
\begin{figure}
\begin{tikzpicture}
\pgfplotsset{every axis/.append style={solid},
        every tick/.append style={semithick,color=black},
    }
    \begin{axis}[smooth,
    scale only axis = true, width = 0.35\textwidth, height = 0.28\textwidth,
    ymin =0, ymax =2, xmin =0,  xmax=1,
    ylabel = {$R^{cf}$},
    xlabel={$\mu$},
    grid style ={dashed},
    grid = both,
    legend style={font=\footnotesize,
    legend pos =north east},
    ]
\addplot [cyan,very thick,solid, domain=0:1] {(1-x)*1};
\addlegendentry{$\mathbf{M}^{\mu}_{\mathrm{vn}}$};

\addplot [purple,very thick,dashed, domain=0:1] {(1-x)*0.5};
\addlegendentry{$\mathbf{M}^{\mu}_{\mathrm{mub}}$};

\addplot [blue,very thick,dotted, domain=0:1] {(1-x)*0.5*log2(3)+1-x};
\addlegendentry{$\mathbf{M}^{\mu}_{\mathrm{sic}}$};
\end{axis}
\end{tikzpicture}
\caption{Comparison among coherence measures of the qubit state $\ket{0}$ with respect to three specific POVMs.} \label{fig:coherence}
\end{figure}

Note that the coherence measures $R^{cf}$ derived from two randomness functions given in Eq.~\eqref{eq:tworanfun}, $R^{cf}_{c}$ and $R^{cf}_{q}$, can both be efficiently evaluated, see, Supplemental Material \cite{suppmaterial}. For pure input states, $R^{cf}_{c}=R^{cf}_q$ and hence can be briefly written as $R^{cf}$. As an example, we calculate the coherence measure $R^{cf}$ of the qubit state $\ket{0}$ with respect to three POVMs, $\mathbf{M}^{\mu}_{\mathrm{vn}}$, $\mathbf{M}^{\mu}_{\mathrm{mub}}$, and $\mathbf{M}_{\mathrm{sic}}^{\mu}$. When $\mu=0$, these three POVMs are
\begin{equation}
\begin{split}
\mathbf{M}_{\mathrm{vn}}&=\{\ketbra{+},\ketbra{-}\},\\
\mathbf{M}_{\mathrm{mub}}&=\left\{\frac12\ketbra{0},\frac12\ketbra{1},\frac12\ketbra{+},\frac12\ketbra{-}\right\},\\
\mathbf{M}_{\mathrm{sic}}&=\left\{\frac12\ketbra{0},\frac12\ketbra{\phi_0},\frac12\ketbra{\phi_1},\frac12\ketbra{\phi_2}\right\},
\end{split}
\end{equation}
where $\ket{\pm}=(\ket{0}\pm\ket{1})/\sqrt{2}$ and $\ket{\phi_k}=\sqrt{1/3}\ket{0}+{\rm e}^{2k\pi i/3}\sqrt{2/3}\ket{1}$, $k=0,1,2$.
When $0<\mu\le 1$, $\mathbf{M}^{\mu}_{\mathrm{vn}}=(1-\mu)\mathbf{M}_{\mathrm{vn}}+\mu\mathbf{I}_2$ with $\mathbf{I}_2=\{a\mathbf{1},(1-a)\mathbf{1}\},a\ge0$. Then, from the definition, we can see that $R^{cf}(\ket{0},\mathbf{M}^{\mu}_{\mathrm{vn}})=(1-\mu) R^{cf}(\ket{0},\mathbf{M}_{\mathrm{vn}})$. There are similar definitions and results for $\mathbf{M}^{\mu}_{\mathrm{mub}}$ and $\mathbf{M}^{\mu}_{\mathrm{sic}}$. The state coherence under these three POVMs are compared in Figure~\ref{fig:coherence}. We can see $R^{cf}(\ket{0},\mathbf{M}^{\mu}_{\mathrm{sic}})$ maintains larger than $1$ even with strong deterministic noise, while $R^{cf}(\rho,\mathbf{M}^{\mu}_{\mathrm{mub}})\le 1$ and $ R^{cf}(\rho,\mathbf{M}^{\mu}_{\mathrm{vn}})\le 1$ hold for arbitrary states. This demonstrates the advantage of the SIC measurement in design of QRNG.



\textit{Discussion.}—
In this work, we characterize intrinsic randomness for general states under general measurements. We conjecture that the solution to the minimization problem in Eq.~\eqref{eq:minR} is a fixed Naimark extension in a finite-dimensional space, independent of the state $\rho$.


Our results also shed light on the information-theoretic analysis of quantum measurement processes \cite{buscemi2008global,luo2010information,buscemi2014noise}. In quantum mechanics, we know that any measurement that extracts information out of a state would inevitably introduce disturbance. The global information balance relation claims that there is a gap between the introduced disturbance and the information gain, as information leakage \cite{buscemi2008global}. The relation between intrinsic randomness and the information gain is left for future research.

\acknowledgments
We acknowledge Junjie Chen, Zhenhuan~Liu, Xiao Yuan, and Pei Zeng for the insightful discussions. This work is supported by the National Natural Science Foundation of China Grants No.~11875173 and No.~12174216, the National Key Research and Development Program of China Grants No.~2019QY0702 and No.~2017YFA0304004.

%

\newpage
\begin{appendix}
\onecolumngrid
\setcounter{table}{0}   
\setcounter{figure}{0}
\setcounter{definition}{0}
\setcounter{theorem}{0}
\setcounter{proposition}{0}
\setcounter{lemma}{0}
\setcounter{corollary}{0}
\begin{center}
\textbf{Supplemental Material: Intrinsic randomness under general quantum measurements}
\end{center}

In this supplemental material, we complement the main text with technical details. In Section \ref{App:BlockCoherence}, we derive the intrinsic randomness for general PVMs and its some useful properties. In Section \ref{App:Generalizedextension}, we present the generalized Naimark extension from the perspective of preprocessing and prove Proposition \ref{appPro:CorrDecomp}. In Section \ref{App:ExtremalPOVM}, we review the definition and a property of extremal POVM and prove Theorem \ref{appthm:ExtRCan}. In Section \ref{App:Intrinsicrandomness}, we prove Theorem \ref{appthm:POVMrandom}, Corollary \ref{appco:iffnonrandom}, Corollary \ref{appco:commoneigen}, and Theorem \ref{appth:lbousic}, and present a numerical approach for the intrinsic randomness function.


\begin{table}[h!]
  \begin{center}
    \caption{NOTATION}
    \begin{tabular}{l|l} 
      \textbf{notation} & \textbf{description} \\
      \hline
      $\mathcal{H} $& Hilbert space  \\
      $\mathbf{M}$ &  POVM \\
      $M_{i} $& element of POVM $\mathbf{M}$   \\
      $ \mathbf{P}$  &  PVM \\
      $\Delta_{\mathbf{P}}$ & block-dephasing operation defined by $\mathbf{P}$ \\
      $\mathcal{I}_{\mathbf{P}}$ & the set of block-incoherent states \\
      $H  $ & Shannon entropy\\
      $S  $ & von Neumann entropy\\
      $R  $ & randomness function\\
      $[m]$ & the set $\{1,\cdots,m\}$\\
      $ \mathcal{P}(m) $&the set of POVMs with at most $ m $ outcomes\\
      $ \mathcal{P}$&the set of POVMs with discrete outcomes\\
      $\{\mathbf{P},\sigma\}$ & generalized Naimark extension\\
      $\log$ & logarithm based 2 \\
      $\rho$ & state to be measured and held by Alice $\rho^A$ \\
    \end{tabular}
  \end{center}
\end{table}

\clearpage

\section{Block randomness and block coherence}\label{App:BlockCoherence}
Here, we calculate the intrinsic randomness under projection-valued measures (PVMs), prove several useful properties and link randomness with block coherence. Before discussing the general PVMs, we briefly review randomness and coherence for the special case of rank-1 PVMs --- von Neumann measurement $\{\ketbra{i}\}$, where the intrinsic randomness via measuring a state $\rho$ is well-studied in the literature. For a pure state $\ket{\psi}$, the randomness of its measurement outcomes is directly given by Born's rule, $S(\Delta_{\{\ketbra{i}\}}(\ketbra{\psi}))$, where $S(\rho)=-\tr(\rho\log\rho)$ is the von Neumann entropy function.
To extend the result to the general cases of mixed states, we consider an adversary scenario, where we assume an adversary, Eve, holds the purification system, $\ket{\Psi}^{AE}$. For simplicity,  we consider the case where Alice inputs the same state independently for many rounds of measurement, which corresponds to the independent and identically distributed (i.i.d.) case in the asymptotic limit. In quantum information theory, there are two ways to quantify the amount of randomness by measuring a mixed state.
In the first case, one can take the convex-roof construction of the randomness function. This corresponds to the case where Eve measures each purification system, $\rho_E=\tr_A(\ketbra{\Psi}^{AE})$, with an optimal i.i.d.~measurement. Since Eve can optimize her measurement, in randomness evaluation, one should minimize over all possible decomposition of $\rho$ \cite{yuan2015intrinsic}. For a decomposition $\rho=\sum q_j\ketbra{\psi_j}$, where $\sum_j q_j=1$ and $q_j\geq0,\forall j$,
\begin{equation} \label{appeq:1PVMRc}
\begin{split}	R_c(\rho,\{\ketbra{i}\})&=\underset{\{q_j,\ket{\psi_j}\}}{\min}\sum_j q_j S(\Delta_{\{\ketbra{i}\}}(\ketbra{\psi_j})). \\
\end{split}
\end{equation}
In the second case,
Eve can collect all the purification states and perform a joint measurement. In this case, randomness is quantified by \cite{Devetak2005distillation,yuan2019quantum},
\begin{equation}\label{appeq:RqvonNeum}
\begin{split}		
R_q(\rho,\{\ketbra{i}\})&=S(\rho\|\Delta_{\{\ketbra{i}\}}(\rho)),
\end{split}
\end{equation}
where $S(\rho\| \sigma)=\tr(\rho\log\rho-\rho\log\sigma)$ is the quantum relative entropy function.


The intrinsic randomness functions, Eq.~\eqref{appeq:1PVMRc} and \eqref{appeq:RqvonNeum} directly give coherence measures for both $R_c$ \cite{yuan2015intrinsic} and $R_q$ \cite{yuan2019quantum,Hayashi2018secure} for von Neumann measurements.

\subsection{Block randomness}
\begin{definition}[Dephasing operation \cite{Aberg2006}]
	For a projection measurement, $\mathbf{P}=\{ P_{1},\cdots,P_{m} \}$, acting on a $d$-dimensional Hilbert space $ \mathcal{H}$, the corresponding block-dephasing operation is defined as
\begin{equation}
\Delta_{\mathbf{P}}(\rho)=\sum_{i=1}^{m}P_{i}\rho P_{i}.
\end{equation}
\end{definition}

Consider a general PVM, when the input state is pure, the randomness of outcomes is again directly given by Born's rule,
\begin{equation}\label{appeq:Rpurepvm}
	\begin{split}
R_c(\ketbra{\psi},\mathbf{P}) = S(\Delta_{\mathbf{P}}(\ketbra{\psi})).
	\end{split}
\end{equation}
For a general input state $\rho$, similar to the case of a von Neumann measurement, we have two ways to extend the results from the pure state case. First, if Eve measures the purification system, $\rho_E$, on an optimal basis for her, the randomness of the PVM outcomes is quantified by the convex-roof measure,
\begin{equation}\label{appeq:Rcpvm}
\begin{split} R_c(\rho,\mathbf{P})&=\underset{\{q_j,\ket{\psi_j}\}}{\min}\sum_j q_j S(\Delta_{\mathbf{P}}(\ketbra{\psi_j})). \\
\end{split}
\end{equation}
Second, if Eve performs a joint measurement on the independent purification systems, the intrinsic randomness is characterized by the von Neumann entropy of the post-measurement state of Alice conditioned on Eve. Alice obtains the classical outcomes and stores them in system $A'$, which is $m$-dimensional. After Alice's measurement, the classical-quantum state is given by,
\begin{equation}\label{appeq:postPVMstate}
\tilde{\rho}^{A'E}=\sum_i p_i\ketbra{i}^{A'} \otimes \rho^E_i,
\end{equation}
where $p_i=\tr(P_i\rho)$ is the probability obtaining output $i$ and $\rho^E_i=\frac{1}{p_i}\tr_A\left(\ketbra{\Psi}^{AE}(P_i^A\otimes \mathbf{1}^E)\right)$. As Alice's measurement should not change the state of Eve's system, we have
\begin{equation}
\tr_A(\ketbra{\Psi}^{AE})=\rho^{E}=\tr_A(\tilde{\rho}^{A'E})=\sum_ip_i \rho^E_i.
\end{equation}
The intrinsic randomness of Alice's measurement result, defined by the quantum entropy conditioned on Eve,
\begin{equation}\label{appeq:Rqpvm}
\begin{split}
R_q(\rho,\mathbf{P})&=S(A'|E)_{\tilde{\rho}^{A'E}} \\
&=S(\tilde{\rho}^{A'E})-S(\rho^{E})\\
&=H(\{p_i\})+\sum_i p_i S(\rho^E_i)-S(\rho)\\
&=H(\{p_i\})+\sum_i p_i S\left(\frac{P_i\rho P_i}{p_i}\right)-S(\rho)\\
&=S\left(\sum_i P_i\rho P_i\right)-S(\rho)\\
&=S(\rho\|\Delta_{\mathbf{P}}(\rho)),
\end{split}
\end{equation}
where $H(\{p_i\})=-\sum_i p_i\log p_i$ is the Shannon entropy function, and $\rho$ denotes Alice's initial state.
The fourth equality is due to the fact that
$\rho^E_i$ is a pure state. The result is consistent with Eq.~\eqref{appeq:RqvonNeum}. Similar to the case of a von Neumann measurement, the difference between two randomness function is quantified by the discord between $A'$ and $E$ \cite{yuan2019quantum},
\begin{equation}\label{appeq:diffdisc} R_c(\rho,\mathbf{P})-R_q(\rho,\mathbf{P})=D_E(\tilde{\rho}^{A'E})=\underset{\{q_j^E\}}{\min} S(A'|\{q_j^E\})_{\tilde{\rho}^{A'E}}-S(\tilde{\rho}^{A'E})+S(\rho^{E}),
	\end{equation}
where $\{q_j^E\}$ is a probability distribution given by measurement on system $ E $. The equality holds since the measurement on $E$ corresponds to a decomposition $\rho^A=\sum_j q_j^E \ketbra{\psi_j}$ and $S(A'|\{q_j^E\})_{\tilde{\rho}^{A'E}}=\sum_j q_j^E S(\Delta_{\mathbf{P}}(\ketbra{\psi_j}))$.

It is straightforward to check that both randomness functions, Eq.~\eqref{appeq:Rcpvm} and \eqref{appeq:Rqpvm}, satisfy the convexity condition,
\begin{equation}\label{appeq:RPconvexApp}
\begin{split}
		R_c\left(\sum_j r_j \rho_j, \mathbf{P}\right) &\le \sum_j r_j R_c(\rho_j, \mathbf{P}),  \\
		R_q\left(\sum_j r_j \rho_j, \mathbf{P}\right) &\le \sum_j r_j R_q(\rho_j, \mathbf{P}).
\end{split}
\end{equation}
As for the additivity condition for block-diagonal states, it is less obvious.

\begin{definition}[Block-diagonal state with respect to $\mathbf{P}$]\label{appdef:blockdiagP}
A state is called block-diagonal with respect to $\mathbf{P}$, denoted by $\rho=\sum_j r_j \rho_j \equiv\oplus_jr_j \rho_j$ with $\sum r_j=1$ and $r_j\ge0,\forall j$, if $\forall P_i, j\neq k$,
\begin{equation}\label{appeq:blockdiagP}
\tr(P_i \rho_j P_i \rho_k)=0.
\end{equation}
\end{definition}

For a block-diagonal state, each PVM element projects diagonal decomposed state $\rho_j$ into density operators with different orthogonal supports. From the definition, we can show that the diagonal decomposed states are orthogonal, $\forall j\neq k$, $\tr(\rho_j \rho_k)=0$. Consider the spectral decomposition of $\rho_j=\sum_a \lambda_a \ketbra{\alpha_a}$ and $\rho_k=\sum_b \mu_b \ketbra{\beta_b}$, where $\lambda_a>0$ and $\mu_b>0$, by Eq.~\eqref{appeq:blockdiagP},
\begin{equation}\label{appeq:orthogonal}
\begin{split}
\tr(P_i \rho_j P_i \rho_k) &= \tr[P_i \left(\sum_a \lambda_a \ketbra{\alpha_a}\right) P_i \left(\sum_b \mu_b \ketbra{\beta_b}\right)] \\
&=\sum_{a,b} \lambda_a \mu_b \tr(P_i \ketbra{\alpha_a} P_i \ketbra{\beta_b}) \\
&=\sum_{a,b} \lambda_a \mu_b \abs{\bra{\alpha_a} P_i \ket{\beta_b}}^2 \\
&=0. \\
\end{split}
\end{equation}
Hence, $\forall a,b$, $\bra{\alpha_a} P_i \ket{\beta_b}=0$, and $\forall P_i$,
\begin{equation}\label{appeq:orthogonalPis}
	\begin{split}
\sum_{a,b} \ketbra{\alpha_a} P_i \ketbra{\beta_b} =0.
	\end{split}
\end{equation}
Moreover, there is
\begin{equation}
	\begin{split}
\tr(\rho_j \rho_k) &= \tr[\rho_j \left(\sum_i P_i\right) \rho_k] \\
&= \sum_i \tr(\rho_j P_i \rho_k) \\
&= \sum_i \sum_{a,b} \lambda_a \mu_b \bra{\alpha_a} P_i \ket{\beta_b} \braket{\beta_b}{\alpha_a} \\
&=0.
	\end{split}
\end{equation}

\begin{lemma}[Additivity of randomness functions]\label{applem:addblockdiag}
For a block-diagonal state $\rho=\oplus_j r_j \rho_j$ with respective to $\mathbf{P}$, the randomness functions satisfy the additivity condition,
\begin{equation}
\begin{split}
R_c(\oplus_j r_j \rho_j,\mathbf{P})&=\sum_j r_jR_c(\rho_j,\mathbf{P}), \\
R_q(\oplus_j r_j \rho_j,\mathbf{P})&=\sum_j r_jR_q(\rho_j,\mathbf{P}).
\end{split}
\end{equation}
\end{lemma}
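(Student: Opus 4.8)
The plan is to prove each identity by two inequalities. The direction $R_c(\oplus_j r_j \rho_j,\mathbf{P})\le\sum_j r_j R_c(\rho_j,\mathbf{P})$ (and the same for $R_q$) is already available: it is the special case of the convexity in Eq.~\eqref{appeq:RPconvexApp}. So the work is entirely in the reverse inequality. Before starting, I would record the geometric facts the block-diagonal hypothesis supplies. Writing $\Pi_j$ for the projector onto $\mathrm{supp}(\rho_j)$ and assuming $r_j>0$ throughout, Eq.~\eqref{appeq:orthogonal} gives $\bra{\alpha_a}P_i\ket{\beta_b}=0$ whenever $\ket{\alpha_a}\in\mathrm{supp}(\rho_j)$ and $\ket{\beta_b}\in\mathrm{supp}(\rho_k)$ with $j\neq k$; equivalently $\Pi_j P_i\Pi_k=0$ for all $i$ and all $j\neq k$, which is the operator form of Eq.~\eqref{appeq:orthogonalPis}. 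From this I get two corollaries used below: (a) $\mathrm{supp}(\rho)=\bigoplus_j\mathrm{supp}(\rho_j)$, so every $\ket{\psi}\in\mathrm{supp}(\rho)$ splits as $\ket{\psi}=\sum_j\Pi_j\ket{\psi}$ into its block components; and (b) $\Delta_{\mathbf{P}}(\rho)=\sum_j r_j\,\Delta_{\mathbf{P}}(\rho_j)$ is again a block-diagonal decomposition with each $\Delta_{\mathbf{P}}(\rho_j)$ a normalized state and with pairwise orthogonal supports, since $\tr\big(\Delta_{\mathbf{P}}(\rho_j)\Delta_{\mathbf{P}}(\rho_k)\big)=\sum_i\tr(P_i\rho_j P_i\rho_k)=0$ by Definition~\ref{appdef:blockdiagP}.

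For $R_q$ this already finishes the job by a direct computation. I would use $R_q(\varrho,\mathbf{P})=S(\Delta_{\mathbf{P}}(\varrho))-S(\varrho)$, which is exactly the identity derived in Eq.~\eqref{appeq:Rqpvm}, together with the additivity of the von Neumann entropy on an orthogonal direct sum: for normalized states $\tau_j$ with pairwise orthogonal supports and weights $r_j>0$, $S\big(\sum_j r_j\tau_j\big)=H(\{r_j\})+\sum_j r_j S(\tau_j)$. Applying this to $\rho=\oplus_j r_j\rho_j$ and to $\Delta_{\mathbf{P}}(\rho)=\oplus_j r_j\Delta_{\mathbf{P}}(\rho_j)$ (legitimate by corollary (b)) and subtracting, the common term $H(\{r_j\})$ cancels and one is left with $R_q(\oplus_j r_j\rho_j,\mathbf{P})=\sum_j r_j\big(S(\Delta_{\mathbf{P}}(\rho_j))-S(\rho_j)\big)=\sum_j r_j R_q(\rho_j,\mathbf{P})$.

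For $R_c$ I would first note that on a pure state $S(\Delta_{\mathbf{P}}(\ketbra{\psi}))=H\big(\{\bra{\psi}P_i\ket{\psi}\}_i\big)$, because the blocks $P_i\ketbra{\psi}P_i$ are mutually orthogonal and each has rank at most one with weight $\bra{\psi}P_i\ket{\psi}$. Now take a pure-state decomposition $\rho=\sum_l q_l\ketbra{\psi_l}$ attaining $R_c(\rho,\mathbf{P})$. By corollary (a), write $\ket{\psi_l}=\sum_j\sqrt{t_{lj}}\,\ket{\psi_{lj}}$ with $t_{lj}=\bra{\psi_l}\Pi_j\ket{\psi_l}\ge0$, $\sum_j t_{lj}=1$, and $\ket{\psi_{lj}}=\Pi_j\ket{\psi_l}/\sqrt{t_{lj}}\in\mathrm{supp}(\rho_j)$ normalized. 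Since $\sum_l q_l t_{lj}\ketbra{\psi_{lj}}=\Pi_j\rho\,\Pi_j=r_j\rho_j$, the ensemble $\{q_l t_{lj}/r_j,\ket{\psi_{lj}}\}_l$ is a valid pure-state decomposition of $\rho_j$, hence $r_j R_c(\rho_j,\mathbf{P})\le\sum_l q_l t_{lj}\,H\big(\{\bra{\psi_{lj}}P_i\ket{\psi_{lj}}\}_i\big)$. Summing over $j$, I would then use $\bra{\psi_{lj}}P_i\ket{\psi_{lj'}}=\bra{\psi_{lj}}\Pi_j P_i\Pi_{j'}\ket{\psi_{lj'}}=0$ for $j\neq j'$, which makes the outcome distribution of $\ket{\psi_l}$ equal to the convex mixture $\sum_j t_{lj}\{\bra{\psi_{lj}}P_i\ket{\psi_{lj}}\}_i$; concavity of the Shannon entropy then gives $\sum_j t_{lj}H\big(\{\bra{\psi_{lj}}P_i\ket{\psi_{lj}}\}_i\big)\le H\big(\{\bra{\psi_l}P_i\ket{\psi_l}\}_i\big)$ for each $l$. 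Combining, $\sum_j r_j R_c(\rho_j,\mathbf{P})\le\sum_l q_l\,H\big(\{\bra{\psi_l}P_i\ket{\psi_l}\}_i\big)=R_c(\rho,\mathbf{P})$, which with convexity yields the equality.

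I do not expect a serious obstacle: the only load-bearing input is the vanishing of the off-block pieces $\Pi_j P_i\Pi_k$ for $j\neq k$, and that is exactly what Eq.~\eqref{appeq:orthogonalPis} already establishes; the remaining ingredients are additivity of the von Neumann entropy on orthogonal sums, the rank-one structure of dephased pure states, and concavity of $H$. The points needing care are purely bookkeeping --- handling rank-deficient $\rho$ or $\rho_j$ so that all supports and projectors $\Pi_j$ are defined consistently, and checking that the sub-normalized ensembles used in the $R_c$ step are legitimate decompositions of the $\rho_j$. This parallels the known additivity of the relative-entropy and convex-roof coherence measures on block-diagonal states~\cite{Yu2016Alternative}.
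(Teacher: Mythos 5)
Your proof is correct and takes essentially the same route as the paper's: the $R_q$ identity follows by additivity of the von Neumann entropy on orthogonal direct sums applied to both $\rho$ and $\Delta_{\mathbf{P}}(\rho)$ so that the $H(\{r_j\})$ terms cancel, and the $R_c$ identity follows by splitting each pure state of a decomposition into its block components, noting these yield valid decompositions of the $\rho_j$, and invoking concavity of the Shannon entropy together with the vanishing of the cross terms $\Pi_j P_i \Pi_k$ for $j\neq k$. The only cosmetic difference is that you handle the multi-block case directly while the paper reduces without loss of generality to two blocks.
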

\begin{proof}
Without loss of generality, we consider the two-block case, $\rho=r \rho_1\oplus(1-r)\rho_2$.

Let us first prove for the case of $R_c$. Denote $\Pi_1$ as the projector onto the support of $\rho_1$, $\Pi_1=\sum_a \ketbra{\alpha_a}$ with $\{\ket{\alpha_a}\}$ being the set of non-zero eigenstates of $\rho_1$. Similarly, denote $\Pi_2$ as the projector onto the support of state $\rho_2$. In this proof, only the support of $\rho$ is under consideration, $\Pi_1+\Pi_2=\mathbf{1}$. Then, $\Pi_1\rho\Pi_1=r\rho_1$ and $\Pi_2\rho\Pi_2=(1-r)\rho_2$. 


For arbitrary pure state $\ket{\phi}$ in the support space of $\rho$, let $ \ket{\phi_1}=\Pi_1\ket{\phi}/\sqrt{p_1} $ and
$\ket{\phi_2}=\Pi_2\ket{\phi}/\sqrt{p_2}$ with $ p_1= \bra{\phi}\Pi_1\ket{\phi}$ and
$ p_2=1-p_1= \bra{\phi}\Pi_2\ket{\phi}$, and from Eq.~\eqref{appeq:orthogonalPis},
\begin{equation}
\bra{\phi}\Pi_1 P_i \Pi_1\ket{\phi}+\bra{\phi}\Pi_2 P_i \Pi_2\ket{\phi}=\bra{\phi} P_i \ket{\phi}.
\end{equation}
The quantum entropies,
\begin{equation}\label{appeq:SPphi12}
\begin{split} S(\Delta_{\mathbf{P}}(\ketbra{\phi_1}))&=H\left(\left\{\frac{\bra{\phi}\Pi_1 P_i \Pi_1\ket{\phi}}{p_1}\right\}\right), \\ S(\Delta_{\mathbf{P}}(\ketbra{\phi_2}))&=H\left(\left\{\frac{\bra{\phi}\Pi_2 P_i \Pi_2\ket{\phi}}{p_2}\right\}\right),
\end{split}
\end{equation}
satisfy the concavity condition,
\begin{equation}\label{appeq:Rcconcavity}
p_1 S(\Delta_{\mathbf{P}}(\ketbra{\phi_1}))+p_2 S(\Delta_{\mathbf{P}}(\ketbra{\phi_2}))\le H(\{\bra{\phi}P_i \ket{\phi}\})= S(\Delta_{\mathbf{P}}(\ketbra{\phi})).
\end{equation}

Now, consider an arbitrary decomposition $\rho=\sum_j q_j \ketbra{\psi_j}$,
\begin{equation}
	\begin{split}
\Pi_1\rho\Pi_1 &=r\rho_1= \sum_j {q_j p_{1j}} \ketbra{\psi_{1j}}, \\
\Pi_2\rho\Pi_2 &=(1-r)\rho_2= \sum_j {q_j p_{2j}} \ketbra{\psi_{2j}}, \\
	\end{split}
\end{equation}
where $p_{1j}= \bra{\psi_j}\Pi_1\ket{\psi_j}$, $p_{2j}= \bra{\psi_j}\Pi_2\ket{\psi_j}$, $\ket{\psi_{1j}}= \Pi_1\ket{\psi_j}/\sqrt{p_{1j}}$, and $\ket{\psi_{2j}}= \Pi_2\ket{\psi_j}/\sqrt{p_{2j}}$. Combining with Eq.~\eqref{appeq:Rcconcavity},
\begin{equation}
\begin{split}
\sum_j q_j p_{1j} S(\Delta_{\mathbf{P}}(\ketbra{\psi_{1j}}))+\sum_j q_j p_{2j} S(\Delta_{\mathbf{P}}(\ketbra{\psi_{2j}})) \le \sum_j q_j S(\Delta_{\mathbf{P}}(\ketbra{\psi_j})).
\end{split}
\end{equation}
Note that $\{\ket{\psi_{1j}}\}$ and $\{\ket{\psi_{2j}}\}$ are the pure state decompositions of $\rho_1$ and $\rho_2$, respectively. According to the randomness function defined in Eq.~\eqref{appeq:Rcpvm}, we have $rR_c(\rho_1,\mathbf{P})+(1-r)R_c(\rho_2,\mathbf{P})\le R_c(\rho,\mathbf{P})$. Therefore, by combining it with the convexity condition, Eq.~\eqref{appeq:RPconvexApp}, the additivity condition can be obtained.

For the case of $R_q$, first, from the definition in Eq.~\eqref{appeq:blockdiagP}, $\rho_1$ and $\rho_2$ have different orthogonal supports,
\begin{equation}\label{appeq:Srho}
	\begin{split}
S(\rho) &=S(r \rho_1+(1-r)\rho_2) \\
&=h(r)+rS(\rho_1)+(1-r)S(\rho_2), \\
	\end{split}
\end{equation}
where $h(r)=-r\log r-(1-r)\log(1-r)$ is the binary entropy function. Second, the dephasing state is given by,
\begin{equation}
	\begin{split}
\sum_i P_i\rho P_i 
&= \sum_i r P_i\rho_1P_i + (1-r)P_i\rho_2 P_i \\
&= r\Delta_{\mathbf{P}}(\rho_1) + (1-r)\Delta_{\mathbf{P}}(\rho_2). \\
	\end{split}
\end{equation}
From the definition in Eq.~\eqref{appeq:blockdiagP}, we can see that $\Delta_{\mathbf{P}}(\rho_1)$ and $\Delta_{\mathbf{P}}(\rho_2)$ have orthogonal supports, and hence,
\begin{equation}\label{appeq:Sdephasingrho}
\begin{split}
S\left(\sum_i P_i\rho P_i\right) &= h(r) + r S(\Delta_{\mathbf{P}}(\rho_1)) + (1-r)S(\Delta_{\mathbf{P}}(\rho_2)). \\
\end{split}
\end{equation}
By substituting Eq.~\eqref{appeq:Srho} and \eqref{appeq:Sdephasingrho} into the fifth equality of Eq.~\eqref{appeq:Rqpvm}, we can prove the claim.


\end{proof}

\begin{lemma}[Randomness-invariant unitary]\label{applem:Purdecran}
	For two PVMs $ \mathbf{P}=\{P_1,\cdots,P_m\} $ and $U^{\dagger} \mathbf{P}U=\{U^{\dagger} P_1U,\cdots,U^{\dagger}P_mU\}$ connected by a unitary operator $U$, if for any pure state $\ket{\psi}$ in the support of an input state $\rho$, $\forall i$,
	\begin{equation}\label{appeq:rhoUconstraint}
		\bra{\psi}P_i\ket{\psi}=\bra{\psi}U^{\dagger}P_iU\ket{\psi},
	\end{equation}
	then the randomness of $\rho$ with respect to these two PVMs are the same,
	\begin{equation}
		R(\rho,\mathbf{P})=R(\rho,U^{\dagger} \mathbf{P}U).
	\end{equation}
\end{lemma}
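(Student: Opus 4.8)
The plan is to distill the hypothesis \eqref{appeq:rhoUconstraint} into a single operator identity on the support of $\rho$, and then to observe that each of the two randomness functions $R_c$ and $R_q$ is determined only by the spectra of dephased states, which this identity leaves unchanged. So first I would let $\Pi$ denote the projector onto $\operatorname{supp}(\rho)$. The condition \eqref{appeq:rhoUconstraint} says that the Hermitian operator $P_i-U^{\dagger}P_iU$ has vanishing quadratic form on every vector of $\operatorname{supp}(\rho)$; by polarization its compression to that subspace vanishes, i.e. $\Pi P_i\Pi=\Pi U^{\dagger}P_iU\Pi$ for all $i$. This is the only consequence of the hypothesis I will use.

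For $R_c$: every pure-state decomposition $\rho=\sum_j q_j\ketbra{\psi_j}$ has all $\ket{\psi_j}\in\operatorname{supp}(\rho)$, so for each $j$ the outcome distributions $\{\bra{\psi_j}P_i\ket{\psi_j}\}_i$ and $\{\bra{\psi_j}U^{\dagger}P_iU\ket{\psi_j}\}_i$ coincide. Since $\Delta_{\mathbf{P}}(\ketbra{\psi})$ is block-diagonal with single nonzero eigenvalue $\bra{\psi}P_i\ket{\psi}$ in block $i$, one has $S(\Delta_{\mathbf{P}}(\ketbra{\psi_j}))=H(\{\bra{\psi_j}P_i\ket{\psi_j}\})=S(\Delta_{U^{\dagger}\mathbf{P}U}(\ketbra{\psi_j}))$ by \eqref{appeq:Rpurepvm}. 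Hence the objective in \eqref{appeq:Rcpvm} takes identical values on every decomposition of $\rho$, and minimizing over the same feasible set yields $R_c(\rho,\mathbf{P})=R_c(\rho,U^{\dagger}\mathbf{P}U)$.

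For $R_q$: using the form $R_q(\rho,\mathbf{P})=S(\Delta_{\mathbf{P}}(\rho))-S(\rho)$ obtained in \eqref{appeq:Rqpvm}, it suffices to show $S(\Delta_{\mathbf{P}}(\rho))=S(\Delta_{U^{\dagger}\mathbf{P}U}(\rho))$, which I would do by matching the two dephased states block by block. Since the $P_i$ have mutually orthogonal ranges, $\Delta_{\mathbf{P}}(\rho)=\bigoplus_i P_i\rho P_i$ and its spectrum is the multiset union of the spectra of the blocks $P_i\rho P_i$. Writing $P_i\rho P_i=(P_i\sqrt{\rho})(P_i\sqrt{\rho})^{\dagger}$, the nonzero spectrum of $P_i\rho P_i$ equals that of $\sqrt{\rho}\,P_i\,\sqrt{\rho}=\sqrt{\rho}\,(\Pi P_i\Pi)\,\sqrt{\rho}=\sqrt{\rho}\,(\Pi U^{\dagger}P_iU\Pi)\,\sqrt{\rho}=\sqrt{\rho}\,U^{\dagger}P_iU\,\sqrt{\rho}$, which in turn equals the nonzero spectrum of $(U^{\dagger}P_iU)\rho(U^{\dagger}P_iU)$, the $i$-th block of $\Delta_{U^{\dagger}\mathbf{P}U}(\rho)$. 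Term-by-term isospectrality of the blocks forces $\Delta_{\mathbf{P}}(\rho)$ and $\Delta_{U^{\dagger}\mathbf{P}U}(\rho)$ to be isospectral, hence equal von Neumann entropy, hence $R_q(\rho,\mathbf{P})=R_q(\rho,U^{\dagger}\mathbf{P}U)$.

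The main obstacle is the $R_q$ case: unlike $R_c$, the quantity $R_q$ is not visibly a function of the outcome statistics, so one cannot run a term-by-term argument over a decomposition of $\rho$. The remedy is to reduce to $S(\Delta_{\mathbf{P}}(\rho))$ and to exploit the support identity $\Pi P_i\Pi=\Pi U^{\dagger}P_iU\Pi$ at the level of eigenvalue spectra via the $XX^{\dagger}$ versus $X^{\dagger}X$ correspondence; everything else is routine.
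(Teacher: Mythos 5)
Your proof is correct, and while the $R_c$ half coincides with the paper's (both observe that the hypothesis makes the objective of the convex-roof minimization identical on every pure-state decomposition of $\rho$), your $R_q$ half takes a genuinely different route. The paper first reduces via unitary invariance to comparing $\rho$ with $U\rho U^{\dagger}$ under the fixed PVM $\mathbf{P}$, then purifies $\rho$, applies a polarization argument to pairs of eigenstates to deduce $\bra{\psi_j}P_i\ket{\psi_{j'}}=\bra{\psi_j}U^{\dagger}P_iU\ket{\psi_{j'}}$, and concludes that the entire post-measurement classical-quantum state $\tilde{\rho}^{A'E}$ shared with Eve is unchanged, so that $R_q=S(A'|E)$ is unchanged. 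You instead compress the hypothesis once and for all into the operator identity $\Pi P_i\Pi=\Pi U^{\dagger}P_iU\Pi$ on the support of $\rho$, and then prove the block-by-block isospectrality of $\Delta_{\mathbf{P}}(\rho)$ and $\Delta_{U^{\dagger}\mathbf{P}U}(\rho)$ via the equality of the nonzero spectra of $XX^{\dagger}$ and $X^{\dagger}X$ (using $\sqrt{\rho}\,\Pi=\sqrt{\rho}$ and the fact that each block lives in the mutually orthogonal range of its projector), which gives $S(\Delta_{\mathbf{P}}(\rho))=S(\Delta_{U^{\dagger}\mathbf{P}U}(\rho))$ and hence equality of $R_q=S(\Delta_{\mathbf{P}}(\rho))-S(\rho)$ directly. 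Your argument is more self-contained — it avoids the purification, the conditional-entropy bookkeeping, and the appeal to the discord relation — at the cost of proving only the entropy equality; the paper's version yields the slightly stronger conclusion that the full state on $A'E$ is identical for the two PVMs, which is occasionally convenient when the lemma is invoked downstream. Both arguments rest on the same polarization idea, applied to the whole support in your case and to pairs of eigenstates in the paper's.
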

\begin{proof}
	According to the unitary invariance of the randomness functions, we always have
	$R(\rho,U^{\dagger} \mathbf{P}U)=R(U\rho U^{\dagger} ,\mathbf{P})$ and need to prove $ R(\rho,\mathbf{P})=R(U\rho U^{\dagger} ,\mathbf{P})$.
	
(i) Consider $R=R_c$,
	\begin{equation}\label{appeq:RcPVMUsame}
		\begin{split}
			R_c(\rho,\mathbf{P})&=\underset{\{q_j,\ket{\psi_j}\}}{\min}\sum_j q_j R_c(\ketbra{\psi_j},\mathbf{P})\\
			&=\underset{\{q_j,\ket{\psi_j}\}}{\min}\sum_j q_j R_c(U\ketbra{\psi_j}U^{\dagger},\mathbf{P})\\
			&=R_c(U\rho U^{\dagger} ,\mathbf{P}).
		\end{split}
	\end{equation}
	
(ii) Consider the case $R=R_q$ and suppose $ \rho=\sum_j \lambda_j \ketbra{\psi_j} $ is the spectral decomposition, its purification can be taken as $ \ket{\Psi^{AE}}=\sum_j \sqrt{\lambda}_j \ket{\psi_j}^A \ket{j}^E$. Denote $ \sigma= U\rho U^{\dagger}$ and $(U^A \otimes \mathbf{1}^E) \ket{\Psi^{AE}}$ is its purification. Write the two classical-quantum states after Alice's measurement as given in Eq.~\eqref{appeq:postPVMstate},
\begin{equation}\label{appeq:cqPVMU}
\begin{split}
\tilde{\rho}^{A'E}&=\sum_i p_i\ketbra{i}^{A'} \otimes \rho^E_i,\\
\tilde{\sigma}^{A'E}&=\sum_i p_i\ketbra{i}^{A'} \otimes \sigma^E_i,
\end{split}
\end{equation}
where from Eq.~\eqref{appeq:rhoUconstraint}, $p_i$ and the classical measurement outcome system $A'$ are the same in the two equations, $\rho^E_i=\tr_A[\ketbra{\Psi}^{AE}(P_i^A\otimes \mathbf{1}^E)] /p_i$ and $\sigma^E_i=\tr_A[\ketbra{\Psi}^{AE}((U^{\dagger} P_i U)^A\otimes \mathbf{1}^E)] /p_i$.

For any two different eigenstates $\ket{\psi_j}$ and $\ket{\psi_{j'}}$, consider a pure state $ \ket{\psi}=a \ket{\psi_j}+b\ket{\psi_{j'}}$, with $ \abs{a}^2+ \abs{b}^2=1$. According to Eq.~\eqref{appeq:rhoUconstraint},
\begin{equation}
\abs{a}^2 p_{ij}+\abs{b}^2 p_{ij'}+2\Re(a^*b\bra{\psi_{j}}P_i\ket{\psi_{j'}})=\abs{a}^2 p_{ij}+\abs{b}^2 p_{ij'}+2\Re(a^*b\bra{\psi_{j}}U^{\dagger}P_i U\ket{\psi_{j'}}),
\end{equation}
where $p_{ij}=\bra{\psi_{j}}P_i\ket{\psi_{j}}$ and $p_{ij'}=\bra{\psi_{j'}}P_i\ket{\psi_{j'}}$. Since $ a,b $ are arbitrary, it can be obtained $\bra{\psi_{j}}P_i\ket{\psi_{j'}}=\bra{\psi_{j}}U^{\dagger}P_i U\ket{\psi_{j'}} ,\forall j,j'$. Combined with the fact
\begin{equation}
\begin{split}
\rho^E_i&=\frac{1}{p_i}\sum_{jj'}\sqrt{\lambda_j\lambda_{j'}}\bra{\psi_{j'}}P_i\ket{\psi_{j}}\ketbra{j}{j'}^E, \\
\sigma^E_i&=\frac{1}{p_i}\sum_{jj'}\sqrt{\lambda_j\lambda_{j'}}\bra{\psi_{j'}}U^{\dagger}P_i U\ket{\psi_{j}}\ketbra{j}{j'}^E,
\end{split}
\end{equation}
we have, $\rho^{E}=\sigma^{E}$. Then, from Eq.~\eqref{appeq:cqPVMU}, we get $\tilde{\rho}^{A'E}=\tilde{\sigma}^{A'E}$. Moreover, due to Eq.~\eqref{appeq:RcPVMUsame} and the relation between two randomness functions Eq.~\eqref{appeq:diffdisc}, we have $R_q(\rho,\mathbf{P})=R_q(\rho,U^{\dagger} \mathbf{P}U)$.
\end{proof}

\subsection{Block coherence}
The randomness functions developed in Eq.~\eqref{appeq:Rcpvm} and \eqref{appeq:Rqpvm} can also be used for the block coherence measures, which was introduced by {\AA}berg \cite{Aberg2006} and was put in a resource theory framework later \cite{Bischof2019}. Here, we adopt a slightly reformulated framework. For a PVM $\mathbf{P}$, the set of block-incoherent states is defined as
\begin{equation}\label{appeq:incstateset}
\mathcal{I}_{\mathbf{P}}=\left\{\Delta_{\mathbf{P}}(\rho)\mid\rho \in  \mathcal{D(H)} \right\},
\end{equation}
with $\mathcal{D(H)}$ being the set of all the density matrices defined on the Hilbert space.

There are different ways to define free operations. For example, we can consider block-incoherent operations (IO),
\begin{equation}\label{appeq:IO}
\Lambda_{IO}(\rho)=\sum_n K_n\rho K_n^{\dagger},
\end{equation}
with Kraus operators satisfying
\begin{equation}
K_n \mathcal{I}_{\mathbf{P}} K_n^{\dagger}\subseteq \mathcal{I}_{\mathbf{P}}.
\end{equation}

A block coherence measure should satisfy the criteria presented in Box \ref{appbox:PCoherence}, which are essentially the same as the coherence-measure criteria for von Neumann measurements \cite{streltsov2017colloquium}. These criteria are not independent. In fact, criteria (C3) and (C4) for incoherent operations defined in Eq.~\eqref{appeq:IO} can be replaced by the additivity for the block-diagonal state defined in Definition \ref{appdef:blockdiagP}, $C(\oplus_j \lambda_j \rho_j)=\sum_j \lambda_jC(\rho_j)$ \cite{Yu2016Alternative,streltsov2017colloquium}. Here, we omit PVM $\mathbf{P}$ from the coherence measure $C(\rho,\mathbf{P})$ for simplicity when there is no confusion.

\begin{mybox}[label={appbox:PCoherence}]{{Criteria for block-coherence measures for PVMs}}
\begin{enumerate}[(C1)]
	\item
	Nonnegativity: $C(\rho)\geq 0 ,$ and $ C(\sigma)=0 $ iff $ \sigma\in\mathcal{I}_{\mathbf{P}}$;
	
	\item
	Monotonicity: for any incoherent map $\Lambda_{I}$, $C(\rho)\ge C(\Lambda_{I}(\rho))$;
	
	\item
	Strong monotonicity: for any IO map defined in Eq.~\eqref{appeq:IO}, $C(\rho)\ge \sum_n p_n C(\rho_n) $, where $p_n=\tr(K_n\rho K_n^{\dagger}) $ and $ \rho_n=K_n\rho K_n^{\dagger}/p_n $;
	
	\item
	Convexity: $C(\sum_{j}\lambda_{j}\rho_{j})\leq \sum_{j}\lambda_j C(\rho_{j})$ with $\lambda_{j}>0$ and $\sum \lambda_{j}=1$;
	
	\item
	Uniqueness for pure states: $C(\ketbra{\psi}) = S(\Delta_{\mathbf{P}}(\ketbra{\psi}))$;

\item
Additivity for a tensor state, $C(\rho\otimes \delta,\mathbf{P}_1\otimes\mathbf{P}_2)= C(\rho,\mathbf{P}_1)+C(\delta,\mathbf{P}_2)$, where the joint projection measurement on $\rho\otimes \delta$ takes the tensor form of local projectors on $\rho$ and $\delta$.
	
\end{enumerate}
\end{mybox}

With criterion (C5), we can employ the convex-roof construction for a block coherence measure. Not surprisingly, same as the von Neumann measurement case, the measure is given by the randomness function $R_c$ defined in Eq.~\eqref{appeq:Rcpvm}.

We can also define block coherence measure $C(\rho)$ via the relative entropy, which, again, is the same as randomness function Eq.~\eqref{appeq:Rqpvm},
\begin{equation}
\begin{split}
C(\rho)&=\underset{\sigma\in\mathcal{I}_{\mathbf{P}}}{\min}S(\rho\|\sigma)\\
&=-\underset{\sigma\in\mathcal{I}_{\mathbf{P}}}{\max}\tr(\rho \log \sigma)-S(\rho)\\
&=-\underset{\sigma\in\mathcal{I}_{\mathbf{P}}}{\max}\tr(\Delta_{\mathbf{P}}(\rho) \log \sigma)-S(\rho)\\
&=S(\Delta_{\mathbf{P}}(\rho))-S(\rho)+\underset{\sigma\in\mathcal{I}_{\mathbf{P}}}{\min}S(\Delta_{\mathbf{P}}(\rho)\|\sigma)\\
&=S(\Delta_{\mathbf{P}}(\rho))-S(\rho)\\
&=R_q(\rho,\mathbf{P}),
\end{split}
\end{equation}
where the third equality holds because any incoherent state $\sigma$ in Eq.~\eqref{appeq:incstateset} is diagonal with respective to $\mathbf{P}$.

\section{Generalized Naimark extension}\label{App:Generalizedextension}

\begin{lemma}[Consistency condition]
A Naimark extension should satisfy the consistency condition, $\forall \rho^A$, $\tr(M_i\rho^A)=\tr[P_i(\rho^A\otimes \sigma^Q)]$, which is equivalent to,
	\begin{equation}\label{appeq:consisMitrQ}
		M_i= \tr_{Q}[P_i(\mathbf{1}^{A}\otimes\sigma^{Q})].
	\end{equation}
\end{lemma}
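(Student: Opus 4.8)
The plan is to reduce the claimed equivalence to a single partial-trace identity followed by an appeal to non-degeneracy of the trace pairing. First I would record the elementary fact that for any operator $X$ on $\mathcal{H}^{A}\otimes\mathcal{H}^{Q}$, any $\rho^{A}$, and any $\sigma^{Q}$,
\[
\tr\!\left[X(\rho^{A}\otimes\sigma^{Q})\right]
=\tr_{A}\!\left[\rho^{A}\,\tr_{Q}\!\left(X(\mathbf{1}^{A}\otimes\sigma^{Q})\right)\right].
\]
To prove this I would write $\rho^{A}\otimes\sigma^{Q}=(\rho^{A}\otimes\mathbf{1}^{Q})(\mathbf{1}^{A}\otimes\sigma^{Q})$, use cyclicity of the full trace to bring $\rho^{A}\otimes\mathbf{1}^{Q}$ to the front, and then apply the standard rule $\tr_{AQ}[(\rho^{A}\otimes\mathbf{1}^{Q})Y]=\tr_{A}[\rho^{A}\,\tr_{Q}(Y)]$ with $Y=X(\mathbf{1}^{A}\otimes\sigma^{Q})$; that rule is in turn immediate from expanding both sides in a product basis of $\mathcal{H}^{A}\otimes\mathcal{H}^{Q}$.

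Specializing $X=P_{i}$ and setting $\widetilde{M}_{i}:=\tr_{Q}[P_{i}(\mathbf{1}^{A}\otimes\sigma^{Q})]$, the identity says that the consistency condition $\tr(M_{i}\rho^{A})=\tr[P_{i}(\rho^{A}\otimes\sigma^{Q})]$ is the same as $\tr(M_{i}\rho^{A})=\tr(\widetilde{M}_{i}\rho^{A})$ holding for all $\rho^{A}$. One direction is then trivial: if $M_{i}=\widetilde{M}_{i}$, i.e.\ Eq.~\eqref{appeq:consisMitrQ} holds, the two traces coincide for every $\rho^{A}$, which is the consistency condition.

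For the converse I would note that $P_{i}$ and $\sigma^{Q}$ are Hermitian, hence so is $\widetilde{M}_{i}$, and $M_{i}$ is Hermitian by assumption. Since $\tr(M_{i}\rho^{A})=\tr(\widetilde{M}_{i}\rho^{A})$ for all density matrices $\rho^{A}$, and the density matrices span the real vector space of Hermitian operators on $\mathcal{H}^{A}$ (equivalently one may let $\rho^{A}$ range over all rank-one projectors $\ketbra{\phi}$), the non-degeneracy of the Hilbert--Schmidt inner product on Hermitian operators forces $M_{i}=\widetilde{M}_{i}$, which is Eq.~\eqref{appeq:consisMitrQ}.

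There is no genuine obstacle here: the statement is a routine consequence of the partial-trace calculus together with non-degeneracy of the trace form. The only points requiring a little care are bookkeeping which subsystem each partial trace acts on in the displayed identity, and making explicit that the hypothesis ranges over \emph{all} states --- a single orthonormal basis of pure states would not suffice to pin down the off-diagonal part of $M_i$ --- so that equality of expectation values genuinely upgrades to the operator identity.
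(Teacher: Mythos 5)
Your proof is correct and follows essentially the same route as the paper's: both rest on the identity $\tr[P_i(\rho^A\otimes\sigma^Q)]=\tr_A\!\left(\rho^A\,\tr_Q[P_i(\mathbf{1}^A\otimes\sigma^Q)]\right)$ obtained by factoring $\rho^A\otimes\sigma^Q=(\mathbf{1}^A\otimes\sigma^Q)(\rho^A\otimes\mathbf{1}^Q)$ and applying the partial-trace calculus. The only difference is that you make explicit the final step --- that equality of expectation values over all states, together with hermiticity, upgrades to the operator identity via non-degeneracy of the Hilbert--Schmidt pairing --- which the paper leaves implicit.
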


\begin{proof}
$\forall \rho^A$,
	\begin{equation}
		\begin{split}
			\tr(M_i \rho^A)&=\tr(P_i(\mathbf{1}^{A}\otimes\sigma^{Q})( \rho^A \otimes \mathbf{1}^{Q} ))\\
			&=\tr_{A}\left(\rho^{A}\tr_{Q}[P_i(\mathbf{1}^{A}\otimes\sigma^{Q})]\right),
		\end{split}
	\end{equation}
which concludes Eq.~\eqref{appeq:consisMitrQ}.
\end{proof}

Consider a Naimark extension for a POVM, as shown in Figure~\ref{appfig:pvmeq}(a). One can also view the extended PVM as a unitary followed by a von Neumann measurement, also called rank-1 PVM --- a set of rank-1 projectors, as shown in Figure~\ref{appfig:pvmeq}(b). Now, the extended PVM $\mathbf{P}=\{P_1,\cdots,P_{m}\}  $ can be written as
\begin{equation}\label{appeq:PiextU}
	P_i=U^{\dagger}(\mathbf{1}^{A'}\otimes\ketbra{i})U,
\end{equation}
where system $A'$ is generally different from $A$. The unitary operation is sometimes called preprocessing \cite{ozawa1984quantum,buscemi2008global}.

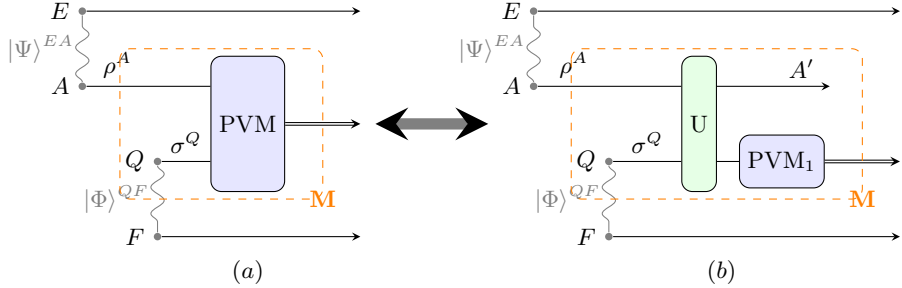
\begin{figure}[hbtp!]
	\begin{tikzpicture}[
		scale=1,
		]
		\node[circle,fill,gray,inner sep=1pt,label=left:$A$] (A) at (0,0) {};
		\node[circle,fill,gray,inner sep=1pt,label=left:$Q$] (Q) at ($ (A) + (1,-1) $) {};
		\node[circle,fill,gray,inner sep=1pt,label=left:$E$] (E) at ($ (A) + (0,1) $) {};
		\node[circle,fill,gray,inner sep=1pt,label=left:$F$] (F) at ($ (A) + (1,-2) $) {};
		\draw[snake=snake,gray] (A) -- (E) node[midway,left] {$\ket{\Psi}^{EA}$};
		\draw[snake=snake,gray] (Q) -- (F) node[midway,left] {$\ket{\Phi}^{QF}$};
		\node[draw,black,fill=blue!10!white,minimum height=1.8cm,rounded corners] (M) at ($ (A) + (2.2,-.5) $) {PVM};
		\draw (A) -- (A-|M.west) node[near start,above] {$\rho^A$} (Q) -- (Q-|M.west) node[midway,above] {$\sigma^Q$};
		\draw[-stealth,double] (M.east) -- ++ (1cm,0) node[near start,above] {};
		\draw[-stealth] (E) -- ($(E-|M.east) + (1cm, 0) $);
		\draw[-stealth] (F) -- ($(F-|M.east) + (1cm, 0) $);
		\draw[orange,dashed,rounded corners] ($ (A-|Q)!0.5!(E) $) rectangle ($(Q-|M.east)!0.5!(F-|M.east)+(.5cm,0)$) node{$\mathbf{M}$};
		
		\coordinate[label = below:$(a)$] (a) at (2.2, -2.2);
		
		\draw[stealth-stealth,line width=1.5mm,draw=gray] ($(M.east) + (1.2cm, 0) $) -- ++ (1.5cm,0);
		
		\node[circle,fill,gray,inner sep=1pt,label=left:$A$] (Ab) at (6cm,0) {};
		\node[circle,fill,gray,inner sep=1pt,label=left:$Q$] (Qb) at ($ (Ab) + (1,-1) $) {};
		\node[circle,fill,gray,inner sep=1pt,label=left:$E$] (Eb) at ($ (Ab) + (0,1) $) {};
		\node[circle,fill,gray,inner sep=1pt,label=left:$F$] (Fb) at ($ (Ab) + (1,-2) $) {};
		\draw[snake=snake,gray] (Ab) -- (Eb) node[midway,left] {$\ket{\Psi}^{EA}$};
		\draw[snake=snake,gray] (Qb) -- (Fb) node[midway,left] {$\ket{\Phi}^{QF}$};
		\node[draw,black,fill=green!10!white,minimum height=1.8cm,rounded corners] (U) at ($ (Ab) + (2.2,-.5) $) {U};
		\node[draw,black,fill=blue!10!white,minimum size=20pt,rounded corners] (Mb) at ($ (Qb) + (2.3,0) $) {PVM$_1$};
		\draw (Ab) -- (Ab-|U.west) node[near start,above] {$\rho^A$} (Qb) -- (Qb-|U.west) node[midway,above] {$\sigma^Q$} (Qb-|U.east) -- (Mb.west);
		\draw[-stealth] (Ab-|U.east) -- ++ (1.5cm,0) node[near end,above] {$A'$};
		\draw[-stealth,double] (Mb.east) -- ++ (1cm,0) node[near start,above] {};
		\draw[-stealth] (Eb) -- ($(Eb-|Mb.east) + (1cm, 0) $);
		\draw[-stealth] (Fb) -- ($(Fb-|Mb.east) + (1cm, 0) $);
		\draw[orange,dashed,rounded corners] ($ (Ab-|Qb)!0.5!(Eb) $) rectangle ($(Qb-|Mb.east)!0.5!(Fb-|Mb.east)+(.5cm,0)$) node{$\mathbf{M}$};
		\coordinate[label = below:$(b)$] (b) at ($ (Ab) + (2.5,-2.2) $) {};
	\end{tikzpicture}
	\caption{Two equivalent pictures of the adversary scenario. (a) Naimark extension: a PVM on the joint system $AQ$. (b) Equivalent extension: a joint preprocessing unitary operation on system $AQ$ followed by a rank-1 PVM on some subsystem.} \label{appfig:pvmeq}
\end{figure}

\begin{lemma}[Equivalence between two forms of Naimark extensions]\label{applem:UPOVMrepre}
For any generalized Naimark extension of a POVM, $\{\mathbf{\tilde{P}},\sigma^Q\}$, we can find an equivalent extension, $\{\mathbf{P},\sigma^Q\}$, in the form of Eq.~\eqref{appeq:PiextU}, as shown in Figure~\ref{appfig:pvmeq}. That is, $\forall i, \rho^A$,
\begin{equation}
	\begin{split}
P_i(\rho^{A}\otimes\sigma^{Q})P_i &= \tilde{P}_i(\rho^{A}\otimes\sigma^{Q})\tilde{P}_i,
	\end{split}
\end{equation}
where we ignore a trivial zero subspace.
\end{lemma}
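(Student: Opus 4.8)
\emph{Approach.} The plan is to exploit the fact that a PVM is, up to appending a trivial zero subspace to equalize the ranks of its elements, nothing more than a rank-1 von Neumann measurement on an ``outcome register'' conjugated by a unitary (the preprocessing). So I will (i) enlarge the Hilbert space by a zero subspace so that all blocks of $\mathbf{\tilde P}$ have the same dimension and the enlarged space factors as ``register $A'$'' $\otimes$ ``outcomes'', (ii) write down the unitary $U$ that rotates the $i$-th block onto $\mathcal{H}_{A'}\otimes\ket{i}$, and (iii) observe that $\rho^A\otimes\sigma^Q$ has no weight on the appended dimensions, so the projective actions of $\mathbf{P}$ and $\mathbf{\tilde P}$ coincide on it.

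\emph{Steps 1--2 (enlargement and preprocessing unitary).} Write the orthogonal decomposition $\mathcal{H}_{AQ}=\bigoplus_{i=1}^m\mathcal{H}_i$ with $\mathcal{H}_i=\tilde P_i\mathcal{H}_{AQ}$ and $d_i=\operatorname{rank}\tilde P_i$. Choose $D=\dim\mathcal{H}_{AQ}$ (any multiple of $\dim\mathcal{H}_A$ with $D\ge\max_i d_i$ works), set $\mathcal{H}_{Q'}$ with $\dim\mathcal{H}_{Q'}=mD/\dim\mathcal{H}_A$ so that $\mathcal{H}':=\mathcal{H}_A\otimes\mathcal{H}_{Q'}\supseteq\mathcal{H}_{AQ}$, and enlarge each $\mathcal{H}_i$ inside $\mathcal{H}'$ to a $D$-dimensional subspace $\mathcal{H}_i'$ using $D-d_i$ of the appended dimensions, so $\mathcal{H}'=\bigoplus_i\mathcal{H}_i'$ and the projector $\tilde P_i'$ onto $\mathcal{H}_i'$ restricts to $\tilde P_i$ on $\mathcal{H}_{AQ}$. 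Now fix orthonormal bases $\{\ket{e^i_k}\}_{k=1}^D$ of the $\mathcal{H}_i'$, a basis $\{\ket{f_k}\}_{k=1}^D$ of a register $\mathcal{H}_{A'}$, and the outcome basis $\{\ket{i}\}_{i=1}^m$, so that $\mathcal{H}'\cong\mathcal{H}_{A'}\otimes\mathcal{H}_{\mathrm{out}}$. Since $\{\ket{e^i_k}\}_{i,k}$ and $\{\ket{f_k}\otimes\ket{i}\}_{k,i}$ are both orthonormal bases of $\mathcal{H}'$, the assignment $U:\ket{e^i_k}\mapsto\ket{f_k}\otimes\ket{i}$ defines a unitary on $\mathcal{H}'$, and
\[
U^\dagger(\mathbf{1}^{A'}\otimes\ketbra{i})U=\sum_{k=1}^D\ketbra{e^i_k}=\tilde P_i',
\]
so $P_i:=\tilde P_i'$ is of the form required by Eq.~\eqref{appeq:PiextU}.

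\emph{Step 3 (the state sees no difference).} The operator $\rho^A\otimes\sigma^Q$ is supported entirely on $\mathcal{H}_{AQ}\subseteq\mathcal{H}'$, i.e.\ it has zero overlap with all the appended dimensions, and $\tilde P_i'$ agrees with $\tilde P_i$ there. Hence for every $i$ and every $\rho^A$,
\[
P_i(\rho^A\otimes\sigma^Q)P_i=\tilde P_i'(\rho^A\otimes\sigma^Q)\tilde P_i'=\tilde P_i(\rho^A\otimes\sigma^Q)\tilde P_i,
\]
which is the claimed identity; the same reasoning gives $\tr_{Q'}[P_i(\mathbf{1}^A\otimes\sigma^Q)]=\tr_Q[\tilde P_i(\mathbf{1}^A\otimes\sigma^Q)]=M_i$, so $\{\mathbf{P},\sigma^Q\}$ is again a valid generalized Naimark extension, with the appended dimensions being the ``trivial zero subspace'' mentioned in the statement.

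\emph{Main obstacle.} The only place care is needed is the bookkeeping of Steps 1 and 3: one must enlarge \emph{only} the ancilla $Q$ (keeping $\mathcal{H}_A$ and the partial trace $\tr_Q$ meaningful), choose the number of appended dimensions so that it simultaneously equalizes the block ranks and makes the new total dimension factor as $\mathcal{H}_{A'}\otimes\mathbb{C}^m$, and verify that the enlargement contributes nothing because $\rho^A\otimes\sigma^Q$ has no support on it. Everything else --- the existence of $U$ and the identity $U^\dagger(\mathbf{1}^{A'}\otimes\ketbra{i})U=\tilde P_i'$ --- is immediate linear algebra.
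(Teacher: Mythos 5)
Your proof is correct and follows essentially the same route as the paper's: pad each $\tilde P_i$ with rank-1 projectors from appended dimensions orthogonal to $\mathcal{H}_{AQ}$ until all blocks have equal rank, note the equal-rank PVM is unitarily equivalent to $\{\mathbf{1}^{A'}\otimes\ketbra{i}\}$, and observe that $\rho^A\otimes\sigma^Q$ has no support on the padding. The only (immaterial) difference is your choice of common block dimension $D=\dim\mathcal{H}_{AQ}$ versus the paper's minimal multiple $sd$ of $\dim\mathcal{H}_A$ exceeding the maximal rank.
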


From Eq.~\eqref{appeq:PiextU}, we can see that all the PVM elements, $P_i$, in this extension have the same ranks. The key point to prove the equivalence between the two forms of extensions, Figure~\ref{appfig:pvmeq}(a) and (b), is to show that any extension is equivalent to an extension with same rank PVM elements.

\begin{proof}
For an extended PVM $\mathbf{\tilde{P}}=\{\tilde{P}_1,\cdots,\tilde{P}_{m}\}$, denote the maximum rank of the elements as $ r,$ and assume $ s $ is an integer satisfying $ (s-1)d<r\leq sd$. Consider a larger space $\mathcal{H}^{\prime}$ with dimension $msd$ and embed $\mathcal{H}_{AQ}$ into $\mathcal{H}^{\prime}$. For each ${\rm rank}(\tilde{P}_i)<sd$, we can further extend $\tilde{P}_i$ to rank-$sd$ by adding additional rank-1 projectors $\{\ketbra{\varphi}\}$, where $\{\ket{\varphi}\}$ are normalized basis states chosen from the complement space of $\mathcal{H}_{AQ}$, $\mathcal{H}^{\prime}\ominus \mathcal{H}_{AQ}$. In the end, we can have a set of $m$ new extended PVM elements, $\mathbf{P}=\{{P}_1,\cdots,{P}_{m}\}$, whose ranks are $sd$ in $\mathcal{H}^{\prime}$ and $dim(\mathcal{H}^{\prime})=msd$. Note that these newly added complement projectors, $\{\ketbra{\varphi}\}$, are orthogonal to the state space $\mathcal{D}(\mathcal{H}_{A})\otimes \sigma^Q$, so, they have no influence on the measurement outcomes. Now, all the elements have the same ranks, the extended PVM is unitarily equivalent to $\{\ketbra{1}\otimes\mathbf{1},\cdots,\ketbra{m}\otimes \mathbf{1}\}$, i.e., $P_i=U^{\dagger} (\ketbra{i}\otimes \mathbf{1})U$, as given in Eq.~\eqref{appeq:PiextU}. 
\end{proof}

Now, we can introduce canonical Naimark extension, where systems $A$ and $A'$ are the same in Figure~\ref{appfig:pvmeq}(b) and the dimension of the ancillary system is the same as the number of POVM elements.

\begin{definition}[Canonical Naimark extension \cite{peres1990neumark}]\label{appdef:CanExt}
For POVM $\mathbf{M}=\{M_{1},\cdots,M_{m}\}$ in $\mathcal{H}_A$,  the canonical Naimark extension results in a PVM $\mathbf{P} =\{P_{1},\cdots,P_{m}\}$ in a larger Hilbert space  $\mathcal{H}_A\otimes\mathcal{H}_{Q},$ where $ \mathcal{H}_{Q} $ is an $m$-dimensional ancillary space. Assume $\{\ket{1},\cdots,\ket{m}\}$ is an orthonormal basis of ancillary space $\mathcal{H}_{Q}$. The ancillary state is $\ketbra{1}$. Then, each $P_i$ has the form of
\begin{equation}\label{appeq:ne}
	P_i=U^{\dagger}(\mathbf{1}^{A} \otimes \ketbra{i}^{Q})U,
\end{equation}
with $U$ being a suitable unitary operator to satisfy the consistency condition
\begin{equation}
M_i=\tr_Q[P_i( \mathbf{1}^A \otimes \ketbra{1}^Q)].
\end{equation}
\end{definition}

\begin{proposition}[Correspondence between ancillary state and measurement decomposition]\label{appPro:CorrDecomp}
In a generalized Naimark extension of a POVM, $\mathbf{M}$, if the ancillary state has a pure state decomposition, $\sigma=\sum_j r_j \ketbra{\varphi_j}$ with $\sum r_j=1$ and $r_j>0$, then there exists a measurement decomposition  $\mathbf{M}=\sum_{j} r_j \mathbf{N}^j$, and vice versa.
\end{proposition}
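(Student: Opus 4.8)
The plan is to prove the two directions separately, each by an explicit construction. For the forward direction, I would start from the consistency condition $M_i = \tr_{Q}[P_i(\mathbf{1}^A\otimes\sigma^Q)]$ (Eq.~\eqref{appeq:consisMitrQ}) and substitute the given pure-state decomposition $\sigma = \sum_j r_j\ketbra{\varphi_j}$ to get $M_i = \sum_j r_j N_i^j$, where $N_i^j := \tr_{Q}[P_i(\mathbf{1}^A\otimes\ketbra{\varphi_j})] = (\mathbf{1}^A\otimes\bra{\varphi_j})\,P_i\,(\mathbf{1}^A\otimes\ket{\varphi_j})$ is the compression of $P_i$ by $\ket{\varphi_j}$ on $Q$. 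The remaining work is to check that $\mathbf{N}^j = \{N_i^j\}_i$ is a genuine POVM: each $N_i^j$ is positive semidefinite because for any $\ket{\psi}^A$ one has $\bra{\psi}N_i^j\ket{\psi} = (\bra{\psi}\otimes\bra{\varphi_j})P_i(\ket{\psi}\otimes\ket{\varphi_j})\ge 0$ since $P_i\ge 0$, and $\sum_i N_i^j = (\mathbf{1}^A\otimes\bra{\varphi_j})\big(\sum_i P_i\big)(\mathbf{1}^A\otimes\ket{\varphi_j}) = (\mathbf{1}^A\otimes\bra{\varphi_j})(\mathbf{1}^{AQ})(\mathbf{1}^A\otimes\ket{\varphi_j}) = \mathbf{1}^A$ because $\braket{\varphi_j}{\varphi_j}=1$. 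This gives $\mathbf{M} = \sum_j r_j\mathbf{N}^j$.

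For the converse, I would glue together Naimark dilations of the decomposition components. Given $\mathbf{M} = \sum_j r_j\mathbf{N}^j$ with each $\mathbf{N}^j$ a POVM, invoke Naimark's theorem (or the canonical extension of Definition~\ref{appdef:CanExt}) to fix, for each $j$, a PVM $\mathbf{P}^j = \{P_i^j\}$ on $\mathcal{H}_A\otimes\mathcal{H}_{Q_0}$ with a pure ancillary state $\ketbra{0}^{Q_0}$ such that $N_i^j = \tr_{Q_0}[P_i^j(\mathbf{1}^A\otimes\ketbra{0})]$, all living on a common $\mathcal{H}_{Q_0}$. Then adjoin a ``which-component'' register $\mathcal{H}_{Q_1}$ with orthonormal basis $\{\ket{j}\}_j$, set $\mathcal{H}_Q = \mathcal{H}_{Q_0}\otimes\mathcal{H}_{Q_1}$, and define the block-structured operators $P_i := \sum_j P_i^j\otimes\ketbra{j}^{Q_1}$. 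I would verify $\{P_i\}$ is a PVM: $P_i^2 = \sum_j (P_i^j)^2\otimes\ketbra{j} = P_i$ using $\braket{j}{j'}=\delta_{jj'}$, and $\sum_i P_i = \sum_j \mathbf{1}^{AQ_0}\otimes\ketbra{j} = \mathbf{1}$. With $\sigma^Q := \sum_j r_j\,\ketbra{0}^{Q_0}\otimes\ketbra{j}^{Q_1} = \sum_j r_j\ketbra{\varphi_j}$, where $\ket{\varphi_j} := \ket{0}^{Q_0}\otimes\ket{j}^{Q_1}$, the consistency condition holds: $\tr_{Q}[P_i(\mathbf{1}^A\otimes\sigma^Q)] = \sum_j r_j\,\tr_{Q_0}[P_i^j(\mathbf{1}^A\otimes\ketbra{0})] = \sum_j r_j N_i^j = M_i$, the cross terms in $j,j'$ vanishing by orthogonality of the $\ket{j}$. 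Hence $\{\mathbf{P},\sigma\}$ is a generalized Naimark extension of $\mathbf{M}$ whose ancillary state carries the stated pure-state decomposition with weights $r_j$.

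Neither direction is computationally deep, so the care is conceptual rather than technical. In the forward direction the point to get right is that the $N_i^j$ are bona fide POVM elements (positivity and completeness), not merely Hermitian operators; in the converse the point is to package the individual Naimark dilations of the $\mathbf{N}^j$ into a single projective measurement on one joint space without spoiling the consistency relation. The block register $\mathcal{H}_{Q_1}$ is exactly the device that makes this gluing work, and checking that $\mathbf{P}$ remains a PVM while the off-diagonal terms drop out of the partial trace is the one place I expect a reader to want the details written out. I would also note that the $\ket{\varphi_j}$ produced by this construction can be taken mutually orthogonal, which is all that is needed for how the Proposition is used (Eve's rank-one measurement on $F$ yielding a pure-state ensemble of the ancilla, which then collapses $\mathbf{M}$ into the corresponding mixture of POVMs).
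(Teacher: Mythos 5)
Your proposal is correct and follows essentially the same route as the paper: the forward direction is the identical substitution of the decomposition of $\sigma$ into the consistency condition (you just spell out the positivity and completeness of the $N_i^j$, which the paper asserts without proof), and your converse construction $P_i=\sum_j P_i^j\otimes\ketbra{j}^{Q_1}$ is exactly what the paper's controlled unitary $U^{AQ}=\sum_j U_j^{AQ_1}\otimes\ketbra{\varphi_j}^{Q_2}$ produces once expanded, so the two gluings coincide up to presentation. No gaps.
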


\begin{proof}
Denote a generalized Naimark extension to be $\{\mathbf{P},\sigma\}$.

$\Rightarrow$: The ancillary state has decomposition of $\sigma=\sum_j r_j \ketbra{\varphi_j}$. Then, the consistency condition becomes,
\begin{equation}
\begin{aligned}
M_i &= \tr_{Q}[P_i(\mathbf{1}^{A}\otimes\sigma^{Q})] \\
&=\sum_{j} r_j \tr_{Q}[P_i(\mathbf{1}^{A}\otimes \ketbra{\varphi_j}^Q)]. \\
\end{aligned}
\end{equation}
Denote $N_i^j=\tr_{Q}[P_i(\mathbf{1}^{A}\otimes \ketbra{\varphi_j}^Q)]$ and hence,
\begin{equation}
\begin{aligned}
M_i = \sum_{j} r_jN_i^j.
\end{aligned}
\end{equation}
Here, $\mathbf{N}^j=\{N_1^j,\cdots,N_{m}^j\} $ forms a POVM on system $ A $ for each $j$. Thus, this gives a decomposition of POVM $ \mathbf{M}=\sum_{j} r_j \mathbf{N}^j$. Note that each $\mathbf{N}^j$ has an extended PVM of $\mathbf{P}$, independent of $j$.

$\Leftarrow$: The POVM has decomposition of $\mathbf{M}=\sum_{j=1}^l r_j \mathbf{N}^j$. According to the canonical Naimark extension, for each POVM $\mathbf{N}^j$, there exists an ancillary system $ Q_1 $ with an orthonormal basis $ \{|1\rangle,\cdots,|m\rangle\} $ and a unitary operator $U_j^{AQ_1}$ on global system $\mathcal{H}^{A}\otimes\mathcal{H}^{Q_1}$ such that
\begin{equation}
N^j_i=\tr_{Q_1}\left[(U_j^{AQ_1})^{\dagger}(\mathbf{1}^{A}\otimes \ketbra{i}^{Q_1})U_j^{AQ_1}(\mathbf{1}^{A}\otimes \ketbra{1} ^{Q_1} )\right].
 \end{equation}
Add another ancillary system $Q_2$ with an orthonormal basis $\{\ket{\varphi_1},\cdots,\ket{\varphi_l}\} $ and $Q_1Q_2$ forms the total ancillary system $Q$ in Figure~\ref{appfig:pvmeq}(b). Let the input ancillary state be $\sigma^Q =\sum_{j=1}^{l}r_j \ketbra{1}^{Q_1}\otimes \ketbra{\varphi_j}^{Q_2}$, which is in a form of pure state decomposition. The unitary operator and the extended PVM are
\begin{equation}\label{appeq:decompN2P}
\begin{split}
U^{AQ} &=\sum_{j=1}^{l}U_j^{A Q_1}\otimes \ketbra{\varphi_j}^{Q_2}, \\
P_i &=(U^{AQ})^{\dagger}(\mathbf{1}^{A Q_2}\otimes \ketbra{i}^{Q_1}) U^{AQ}.
\end{split}
\end{equation}
We need to show the consistency condition,
\begin{equation}
\begin{aligned}
&\tr_Q [P_i(\mathbf{1}^A\otimes \sigma^Q)]\\
&=\tr_Q [ (U^{AQ})^{\dagger}(\mathbf{1}^{A Q_2}\otimes \ketbra{i}^{Q_1}) U^{AQ}(\mathbf{1}^A\otimes \sigma^Q))]\\
&=\sum_{j=1}^{l}  r_j \tr_{Q}(\mathbf{1}^{A}\otimes \ketbra{1\varphi_j} ^{Q} )(U^{AQ})^{\dagger}(\mathbf{1}^{A Q_2}\otimes \ketbra{i}^{Q_1}) U^{AQ}(\mathbf{1}^{A}\otimes \ketbra{1\varphi_j}^{Q} ) \\
&=\sum_{j=1}^{l}  r_j \tr_{Q_1}(\mathbf{1}^{A}\otimes \ketbra{1} ^{Q_1} )(U_j^{A Q_1})^{\dagger}(\mathbf{1}^{A}\otimes \ketbra{i}^{Q_1}) U_j^{A Q_1} (\mathbf{1}^{A}\otimes |1\rangle\langle 1 | ^{Q_1} )\\
&=\sum_{j=1}^{l} r_j N_i^j\\
&=M_i.
\end{aligned}
\end{equation}

\end{proof}

Here, we need to emphasize that the correspondence between the state and measurement decomposition is not unique in general. That is, different $\ketbra{\varphi_j}$ might correspond to the same $\mathbf{N}^j$.

\section{Extremal POVM and its randomness}\label{App:ExtremalPOVM}
The POVM set is convex and some of the POVMs can be treated as a mixture of others,
\begin{equation}
	\mathbf{M}=\sum_{j} r_j \mathbf{N}^j,
\end{equation}
which is equivalent to the existence of a hidden variable in some sense. Those indecomposable POVMs are extreme points of the convex set and play a similar role as pure states \cite{d2005classical,sentis2013decomposition}. Similar to the case for mixed states, general POVMs can be decomposed into a mixture of extremal ones. Here, we introduce the definition and some important properties of extremal POVMs. 

Denote the set of POVMs with at most $ m $ outcomes and the one with discrete outcomes as $ \mathcal{P}(m) $ and $ \mathcal{P}$, respectively. For $m\le n$, $\mathcal{P}(m)\subseteq\mathcal{P}(n)\subseteq\mathcal{P}$, since one can let the $ n-m $ additional elements be 0. The sets $ \mathcal{P}(m)$ and $\mathcal{P}$ are both convex. Clearly, the extremal points in $\mathcal{P}(m)$ are also extremal in $\mathcal{P}$ and the extremal points of $\mathcal{P}$ are called extremal POVMs \cite{chiribella2004extremal,d2005classical}. 

\begin{definition}[Extremal POVM]\label{appdef:ExtremalPOVM}
	A POVM $ \mathbf{M} $ in $ \mathcal{P} $ is said to be extremal if, for every expression 
	\begin{equation}
		\mathbf{M} =\lambda \mathbf{M}^{\prime}+(1-\lambda) \mathbf{M}^{\prime\prime}
	\end{equation} 
	with $ 0<\lambda<1 $ and $ \mathbf{M}^{\prime}, \mathbf{M}^{\prime\prime}\in  \mathcal{P}$, it holds that $ \mathbf{M}=\mathbf{M}^{\prime} =\mathbf{M}^{\prime\prime}$.
\end{definition}
As an example, a PVM is extremal. Next we give a property of extremal POVMs.
\begin{lemma}[Linear independence of extremal POVM elements \cite{d2005classical,Haapasalo2012}]\label{applem:linIndExt}
A POVM is extremal then its elements are linearly independent. A rank-1 POVM is extremal iff its elements are linearly independent.
\end{lemma}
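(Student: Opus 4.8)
The plan is to handle the two directions separately. The first statement---that an extremal POVM has linearly independent elements---I would in fact prove for \emph{all} extremal POVMs (not just rank-1 ones), via the contrapositive. Suppose $\mathbf{M}=\{M_1,\dots,M_m\}$ (with every $M_i\neq 0$, as is implicit in the statement) admits a nontrivial linear relation $\sum_i c_i M_i=0$. Since the $M_i$ are Hermitian, taking the adjoint of this relation and separating it into real and imaginary parts lets us assume $c_i\in\mathbb{R}$; and since no $M_i$ vanishes, a nonzero $c_i$ gives $c_i M_i\neq 0$. Define $M_i^{\pm}=(1\pm\varepsilon c_i)M_i$ for $0<\varepsilon<1/\max_i\abs{c_i}$. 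Then each $M_i^{\pm}$ is positive semidefinite and $\sum_i M_i^{\pm}=\mathbf{1}\pm\varepsilon\sum_i c_i M_i=\mathbf{1}$, so $\mathbf{M}^{+},\mathbf{M}^{-}\in\mathcal{P}$; moreover $\tfrac12\mathbf{M}^{+}+\tfrac12\mathbf{M}^{-}=\mathbf{M}$ while $M_i^{+}-M_i^{-}=2\varepsilon c_i M_i\neq 0$ for some $i$, so $\mathbf{M}^{+}\neq\mathbf{M}^{-}$---contradicting extremality (Definition~\ref{appdef:ExtremalPOVM}). In particular extremal rank-1 POVMs have linearly independent elements.

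For the converse in the rank-1 case, let $\mathbf{M}=\{M_i=\ketbra{\phi_i}\}$ (unnormalised) have linearly independent elements and suppose $M_i=\lambda M_i'+(1-\lambda)M_i''$ for all $i$, with $0<\lambda<1$ and $\mathbf{M}',\mathbf{M}''\in\mathcal{P}$. Since $M_i'\geq 0$, we have $0\leq\lambda M_i'\leq M_i$; hence for any $\ket{\psi}$ orthogonal to $\ket{\phi_i}$, $\bra{\psi}M_i'\ket{\psi}\leq\lambda^{-1}\abs{\braket{\psi}{\phi_i}}^2=0$, forcing $M_i'\ket{\psi}=0$, so that $M_i'=a_i M_i$ for some $a_i\geq 0$, and likewise $M_i''=b_i M_i$ with $b_i\ge 0$. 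Then $\sum_i M_i'=\mathbf{1}=\sum_i M_i$ gives $\sum_i(a_i-1)M_i=0$, and linear independence forces $a_i=1$ for every $i$, i.e.\ $\mathbf{M}'=\mathbf{M}$; the same argument gives $\mathbf{M}''=\mathbf{M}$, so $\mathbf{M}$ is extremal.

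The steps are all elementary: the only real bookkeeping is the positivity bound $\varepsilon<1/\max_i\abs{c_i}$ in the first part and the standard fact that a positive semidefinite operator $X$ with $0\leq X\leq c\ketbra{\phi}$ must be a nonnegative multiple of $\ketbra{\phi}$ in the second. The one genuine subtlety---which I expect to be the main point requiring care, rather than any hard argument---is the role of trivial zero outcomes: a rank-1 POVM automatically has all elements nonzero, so the ``iff'' is clean, but for general POVMs zero elements must be excluded (for instance $\{\mathbf{1},0\}$ is extremal yet its elements are linearly dependent), which is why the forward implication is understood under the assumption that $\mathbf{M}$ has no zero elements.
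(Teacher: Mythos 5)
Your proof is correct. For the forward direction you do essentially what the paper does: turn a nontrivial real linear relation $\sum_i c_i M_i=0$ among nonzero elements into two distinct POVMs whose average is $\mathbf{M}$. You use the symmetric perturbation $(1\pm\varepsilon c_i)M_i$ with a uniform small $\varepsilon$, while the paper orders the coefficients and takes the asymmetric weights $(1-a_i/a_1)M_i$ and $(1-a_i/a_m)M_i$, which has the cosmetic advantage that one element of each new POVM is forced to zero so $\mathbf{M}'\neq\mathbf{M}$ is immediate; both work, and your explicit reduction from complex to real coefficients and your remark that zero elements must be excluded (e.g.\ $\{\mathbf{1},0\}$ is extremal with linearly dependent elements) make explicit two points the paper leaves tacit. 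The converse is where you genuinely diverge. The paper asserts that rank-one-ness forces, for each $i$, either $M_i'=M_i''=M_i$ or one of $M_i',M_i''$ to vanish, partitions the indices accordingly, and extracts a linear dependence; that trichotomy is not justified as stated, since nothing a priori rules out $M_i'=a_iM_i$ with $a_i\notin\{0,1\}$. Your route closes exactly that gap: the operator inequality $0\le\lambda M_i'\le M_i$ with $M_i$ rank one forces $M_i'=a_iM_i$ by a support argument, and then $\sum_i a_iM_i=\mathbf{1}=\sum_i M_i$ together with linear independence gives $a_i=1$ for all $i$. This is the cleaner and more robust version of the argument, and it reaches the same conclusion without the paper's case analysis.
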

\begin{proof}
Assume elements of the extremal POVM $ \mathbf{M} $ are linearly dependent,
	\begin{equation}
		a_1M_1+\cdots+a_n M_{m}=0,
	\end{equation}
	where not all the coefficients $a_i$'s are 0. Without loss of generality, we can assume $a_1\geq\cdots \geq a_n$. Each element $M_i$ is nonzero positive semidefinite operator, so $a_1>0$, $a_m<0$. Define two new POVMs $\mathbf{M}^{\prime}$ and $\mathbf{M} ^{\prime\prime}$ with, $\forall i\in[m]=\{1,\cdots,m\}$,
	\begin{equation}
		\begin{split}
			M^{\prime}_{i} &=(1-a_i/a_1)M_{i}, \\
			M^{\prime\prime}_{i} &=(1-a_i/a_m)M_{i}, \\
		\end{split}
	\end{equation}
	where $M^{\prime}_{1}=0$ and $M^{\prime\prime}_{m}=0$. Hence, $\mathbf{M}^{\prime}\neq \mathbf{M}$ and $\mathbf{M}^{\prime\prime}\neq \mathbf{M}$. Meanwhile,
	\begin{equation}
		\mathbf{M}=\frac{a_1}{a_1-a_m}\mathbf{M}^{\prime}+\frac{-a_m}{a_1-a_m}\mathbf{M}^{\prime\prime}.
	\end{equation}
	This contradicts to Definition \ref{appdef:ExtremalPOVM}.
	
Next, we prove in the case of rank-1 POVM, the reverse is also true. Assume $\mathbf{M}$ is not extremal and then it can be decomposed to,
	\begin{equation}
		\mathbf{M}=\lambda\mathbf{M}^{\prime}+(1-\lambda)\mathbf{M}^{\prime\prime},
	\end{equation}
with $0<\lambda<1$ and $\mathbf{M}^{\prime}\neq\mathbf{M}^{\prime\prime}$. Since $\mathbf{M}$ is rank-1, it implies that $\forall i$, $M^{\prime}_i=M^{\prime\prime}_i=M_i$, $M^{\prime}_i=0$, or $M^{\prime\prime}_i=0 $. Without loss of generality, assume for $i\le n_1$, $M^{\prime}_i=M^{\prime\prime}_i=M_i$; for $n_1<i\le n_2$, $M^{\prime\prime}_i=0$; and for $i>n_2$, $M^{\prime}_i=0$, where $n_1<n_2<m$. Then, we have
\begin{equation}
\begin{split}
\sum_{i=n_1+1}^{n_2}M_i&=\lambda \sum_{i=n_1+1}^{n_2}M^{\prime}_i=\lambda(\mathbf{1}-\sum_{i=1}^{n_1}M_i),\\
\sum_{i=n_2+1}^{m}M_i&=(1-\lambda) \sum_{i=n_2+1}^{m}M^{\prime\prime}_i=(1-\lambda)(\mathbf{1}-\sum_{i=1}^{n_1}M_i),
\end{split}
\end{equation}
and hence,
\begin{equation}
	\begin{split}
\frac{1}{\lambda}\sum_{i=n_1+1}^{n_2}M_i -\frac{1}{1-\lambda}\sum_{i=n_2+1}^{m}M_i=0
	\end{split}
\end{equation}
which contradicts to the linear independence of $M_i$.
\end{proof}
From the Lemma, we can see that the symmetric and information-complete (SIC) POVM is extremal, which is composed of $ d^{2} $ rank-1 operators, $\ketbra{\phi_{i}}/d$, with normalized vectors $\ket{\phi_{i}}$ satisfying \cite{renes2004symmetric,fuchs2017sic}, $\forall i\neq j$, $\abs{\braket{\phi_{j}}{\phi_{i}}}^{2}=\frac{1}{d+1}$.

Next, we study the intrinsic randomness for extremal POVMs and give a proof of Theorem \ref{appthm:ExtRCan}.

\begin{lemma}\label{applem:purinpsr}
For any two Naimark extensions with pure ancillary state, $\{\mathbf{P}_1,\ketbra{\varphi_1}\}$ and $\{\mathbf{P}_2,\ketbra{\varphi_2}\}$, for arbitrary input state $\rho$, $R(\rho\otimes\ketbra{\varphi_1},\mathbf{P}_1) = R(\rho\otimes\ketbra{\varphi_2},\mathbf{P}_2)$.
\end{lemma}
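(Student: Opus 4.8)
The plan is to compute $R(\rho\otimes\ketbra{\varphi},\mathbf{P})$ in closed form and observe that the answer depends only on the POVM $\mathbf{M}$ and the state $\rho$, with no residual dependence on the extension $\{\mathbf{P},\ketbra{\varphi}\}$; evaluating this formula for the two extensions $\{\mathbf{P}_1,\ketbra{\varphi_1}\}$ and $\{\mathbf{P}_2,\ketbra{\varphi_2}\}$ then yields the equality. The engine is the consistency condition Eq.~\eqref{appeq:consisMitrQ}: define $V_i:\mathcal{H}_A\to\mathrm{range}(P_i)$ by $V_i\ket{\chi}=P_i(\ket{\chi}\otimes\ket{\varphi})$, so that $\bra{\chi}V_i^{\dagger}V_i\ket{\chi'}=\bra{\chi}\bra{\varphi}P_i\ket{\chi'}\ket{\varphi}=\bra{\chi}M_i\ket{\chi'}$, i.e. $V_i^{\dagger}V_i=M_i$. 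Hence $V_i$ has a polar decomposition $V_i=W_i\sqrt{M_i}$ with $W_i$ an isometry on $\mathrm{range}(M_i)$; in particular $\tr[P_i(\rho\otimes\ketbra{\varphi})]=\tr(M_i\rho)$, and, expanding $\rho=\sum_k\lambda_k\ketbra{\chi_k}$ and using $P_i\ket{\chi_k}\ket{\varphi}=V_i\ket{\chi_k}$, one gets $P_i(\rho\otimes\ketbra{\varphi})P_i=V_i\rho V_i^{\dagger}=W_i\big(\sqrt{M_i}\rho\sqrt{M_i}\big)W_i^{\dagger}$, which is isospectral to $\sqrt{M_i}\rho\sqrt{M_i}$.

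For $R=R_q$ I would use the identity $R_q(\varrho,\mathbf{P})=S(\Delta_{\mathbf{P}}(\varrho))-S(\varrho)$ from Eq.~\eqref{appeq:Rqpvm}, which gives $R_q(\rho\otimes\ketbra{\varphi},\mathbf{P})=S(\Delta_{\mathbf{P}}(\rho\otimes\ketbra{\varphi}))-S(\rho)$. Since $P_iP_{i'}=0$ for $i\neq i'$, the blocks $P_i(\rho\otimes\ketbra{\varphi})P_i$ have pairwise orthogonal supports, so $S(\Delta_{\mathbf{P}}(\rho\otimes\ketbra{\varphi}))=H(\{\tr(M_i\rho)\})+\sum_i\tr(M_i\rho)\,S\!\big(\sqrt{M_i}\rho\sqrt{M_i}/\tr(M_i\rho)\big)$ by the previous paragraph, manifestly a function of $\mathbf{M}$ and $\rho$ only.

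For $R=R_c$ I would first note that every pure-state decomposition of $\rho\otimes\ketbra{\varphi}$ consists of vectors in $\mathrm{supp}(\rho)\otimes\mathbb{C}\ket{\varphi}$, hence is of the form $\rho\otimes\ketbra{\varphi}=\sum_j q_j\ketbra{\chi_j}\otimes\ketbra{\varphi}$ with $\rho=\sum_j q_j\ketbra{\chi_j}$, and conversely. Because $\{P_i\ket{\chi_j}\ket{\varphi}\}_i$ are mutually orthogonal, $S(\Delta_{\mathbf{P}}(\ketbra{\chi_j}\otimes\ketbra{\varphi}))=H(\{\bra{\chi_j}M_i\ket{\chi_j}\})$, so Eq.~\eqref{appeq:Rcpvm} collapses to $R_c(\rho\otimes\ketbra{\varphi},\mathbf{P})=\min\big\{\sum_j q_j H(\{\bra{\chi_j}M_i\ket{\chi_j}\})\ :\ \rho=\sum_j q_j\ketbra{\chi_j}\big\}$, once again a function of $\mathbf{M}$ and $\rho$ alone.

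I expect the $R_q$ step to be the main obstacle: one must carefully justify the polar-decomposition/isospectrality claim for $P_i(\rho\otimes\ketbra{\varphi})P_i$ and check that it applies uniformly even when the two ancillary spaces $\mathcal{H}_{Q_1},\mathcal{H}_{Q_2}$ have different dimensions, so that both extensions truly reduce to the same $\mathbf{M}$-and-$\rho$ expression. (Alternatively, one could first invoke Lemma~\ref{applem:UPOVMrepre} to bring both extensions to the form $P_i=U^{\dagger}(\mathbf{1}^{A'}\otimes\ketbra{i})U$, making $V_i$ and $W_i$ fully explicit, but the argument above does not need this reduction.)
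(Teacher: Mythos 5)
Your proof is correct, and it takes a genuinely different route from the paper's. The paper first invokes Lemma~\ref{applem:UPOVMrepre} to pad both extensions with zero subspaces so that each takes the form $P_i=U^{\dagger}(\mathbf{1}\otimes\ketbra{i})U$ on a common space with a common ancilla $\ket{1}$, and then applies the randomness-invariant-unitary Lemma~\ref{applem:Purdecran}: the consistency condition forces $\bra{\psi}P_i^{(1)}\ket{\psi}=\bra{\psi}P_i^{(2)}\ket{\psi}$ on the support of $\rho\otimes\ketbra{1}$, which is exactly the hypothesis of that lemma. You instead bypass both lemmas by producing closed-form expressions: the map $V_i\ket{\chi}=P_i(\ket{\chi}\otimes\ket{\varphi})$ satisfies $V_i^{\dagger}V_i=M_i$ by consistency, so $P_i(\rho\otimes\ketbra{\varphi})P_i=V_i\rho V_i^{\dagger}$ is isospectral to $\sqrt{M_i}\rho\sqrt{M_i}$, yielding $R_q=H(\{p_i\})+\sum_i p_i S\bigl(\sqrt{M_i}\rho\sqrt{M_i}/p_i\bigr)-S(\rho)$ and the analogous extension-free formula for $R_c$; both manifestly depend only on $(\rho,\mathbf{M})$. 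The step you flag as the ``main obstacle'' is in fact unproblematic: isospectrality follows from $AA^{\dagger}$ and $A^{\dagger}A$ sharing nonzero spectrum (applied to $A=W_i(\sqrt{M_i}\rho\sqrt{M_i})^{1/2}$ with $\mathrm{supp}(\sqrt{M_i}\rho\sqrt{M_i})$ inside the initial space of $W_i$), and the dimension mismatch between ancillas is moot precisely because your final expressions live entirely on $\mathcal{H}_A$. What your approach buys is explicitness --- the formulas are reusable (e.g.\ they immediately give the canonical-extension value in Eq.~\eqref{appeq:RexCanon} without further work) and avoid the bookkeeping of embedding into larger spaces; what the paper's approach buys is modularity, since Lemmas~\ref{applem:UPOVMrepre} and~\ref{applem:Purdecran} are needed elsewhere anyway and the argument then reduces to two citations.
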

\begin{proof}

By inserting necessary zero subspace, we can link $\mathbf{P}_1$ and $\mathbf{P}_2$ with a unitary matrix, according to Lemma \ref{applem:UPOVMrepre}. With the freedom to choose a unitary transformation, we can assume $\ket{\varphi_1}=\ket{\varphi_2}=\ket{1}$. By the consistency condition in Eq.~\eqref{appeq:consisMitrQ} and Lemma \ref{applem:Purdecran}, we can obtain $R(\rho\otimes\ketbra{\varphi_1},\mathbf{P}_1) = R(\rho\otimes\ketbra{\varphi_2},\mathbf{P}_2)$.
\end{proof}

\begin{theorem}\label{appthm:ExtRCan}
For an extremal POVM $\mathbf{M}$ and a fixed input state $\rho$, all the generalized Naimark extensions give the same amount of randomness.
\end{theorem}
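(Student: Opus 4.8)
The plan is to reduce $R(\rho\otimes\sigma,\mathbf{P})$ for a mixed ancilla $\sigma$ to the pure-ancilla case, where Lemma~\ref{applem:purinpsr} already fixes the value, by showing that extremality forces $\rho\otimes\sigma$ to be block-diagonal with respect to any extending PVM $\mathbf{P}$ and then invoking the additivity of the PVM randomness functions (Lemma~\ref{applem:addblockdiag}).

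First I would fix a generalized Naimark extension $\{\mathbf{P},\sigma\}$ of $\mathbf{M}$ and spectrally decompose $\sigma=\sum_j r_j\ketbra{\varphi_j}$ with $\{\ket{\varphi_j}\}$ orthonormal and $r_j>0$. The step that really uses extremality (Definition~\ref{appdef:ExtremalPOVM}) is the claim that $\tr_Q[P_i(\mathbf{1}^A\otimes\ketbra{\chi}^Q)]=M_i$ for \emph{every} normalized $\ket{\chi}$ in the support of $\sigma$, not merely for the eigenvectors $\ket{\varphi_j}$: any such $\ket{\chi}$ obeys $\ketbra{\chi}\le c\,\sigma$ for some $c>0$, hence $\sigma$ has a pure-state decomposition $\sigma=p\ketbra{\chi}+(1-p)\sum_k t_k\ketbra{\eta_k}$ with $p>0$; by Proposition~\ref{appPro:CorrDecomp} this produces a decomposition $\mathbf{M}=p\,\mathbf{N}^\chi+(1-p)\sum_k t_k\mathbf{N}^k$ whose $\chi$-term has elements $N^\chi_i=\tr_Q[P_i(\mathbf{1}^A\otimes\ketbra{\chi}^Q)]$, and extremality of $\mathbf{M}$ forces $\mathbf{N}^\chi=\mathbf{M}$.

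Then I would polarize. Writing $\tr_Q[P_i(\mathbf{1}^A\otimes\ketbra{\chi}^Q)]=(\mathbf{1}^A\otimes\bra{\chi}^Q)\,P_i\,(\mathbf{1}^A\otimes\ket{\chi}^Q)$ and applying the previous claim to $\ket{\varphi_j}$, $\ket{\varphi_k}$, $(\ket{\varphi_j}+\ket{\varphi_k})/\sqrt2$ and $(\ket{\varphi_j}+\mathrm{i}\ket{\varphi_k})/\sqrt2$ (all in the support of $\sigma$) shows that the cross block $(\mathbf{1}^A\otimes\bra{\varphi_k}^Q)\,P_i\,(\mathbf{1}^A\otimes\ket{\varphi_j}^Q)$ vanishes whenever $j\neq k$. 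Since $\tr[P_i(\rho\otimes\ketbra{\varphi_j})P_i(\rho\otimes\ketbra{\varphi_k})]$ expands, in an eigenbasis of $\rho$, into a sum of squared magnitudes of matrix elements of precisely that cross block (as in the computation around Eq.~\eqref{appeq:orthogonal}), the state $\rho\otimes\sigma=\bigoplus_j r_j(\rho\otimes\ketbra{\varphi_j})$ is block-diagonal with respect to $\mathbf{P}$ in the sense of Definition~\ref{appdef:blockdiagP}; note this conclusion is insensitive to the rank of $\rho$ and to degeneracies in the spectrum of $\sigma$.

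To finish, Lemma~\ref{applem:addblockdiag} gives $R(\rho\otimes\sigma,\mathbf{P})=\sum_j r_j R(\rho\otimes\ketbra{\varphi_j},\mathbf{P})$; each $\{\mathbf{P},\ketbra{\varphi_j}\}$ is itself a pure-ancilla Naimark extension of $\mathbf{M}$ (the $\ket{\chi}=\ket{\varphi_j}$ instance of the key claim, which is exactly the consistency condition), so Lemma~\ref{applem:purinpsr} yields $R(\rho\otimes\ketbra{\varphi_j},\mathbf{P})=R(\rho\otimes\ketbra{0},\mathbf{P}_c)$ with $\mathbf{P}_c$ the canonical extension (Definition~\ref{appdef:CanExt}), for every $j$ and every extension. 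Hence $R(\rho\otimes\sigma,\mathbf{P})=R(\rho\otimes\ketbra{0},\mathbf{P}_c)$ for all generalized Naimark extensions. I expect the main obstacle to be exactly the second and third paragraphs — recognising that extremality pins down $\tr_Q[P_i(\mathbf{1}^A\otimes\ketbra{\chi}^Q)]$ along every direction of $\mathrm{supp}(\sigma)$, and turning this into block-diagonality by polarization; everything afterwards is routine use of the already-established Lemmas~\ref{applem:addblockdiag} and~\ref{applem:purinpsr}.
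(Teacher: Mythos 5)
Your proposal is correct and follows essentially the same route as the paper's proof: use extremality together with Proposition~\ref{appPro:CorrDecomp} to pin down $\tr_Q[P_i(\mathbf{1}^A\otimes\ketbra{\chi}^Q)]=M_i$ for every $\ket{\chi}$ in $\mathrm{supp}(\sigma)$, polarize to kill the cross blocks $W_i=\tr_Q[P_i(\mathbf{1}^A\otimes\ketbra{\varphi_j}{\varphi_k})]$, conclude that $\rho\otimes\sigma$ is block-diagonal with respect to $\mathbf{P}$, and finish with Lemmas~\ref{applem:addblockdiag} and~\ref{applem:purinpsr}. The only difference is cosmetic — you treat a general-rank $\sigma$ directly and spell out the $\ketbra{\chi}\le c\,\sigma$ justification, where the paper works with a rank-2 ancilla without loss of generality and varies the superposition coefficients $a,b$ instead of using four fixed polarization vectors.
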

\begin{proof}
We need to show that the mixture of ancillary states would not affect the intrinsic randomness. Consider a generalized Naimark extension to be $\{\mathbf{P},\sigma\}$.  Without loss of generality, suppose the ancillary state is rank-2 and has the spectral decomposition $\sigma=r\ketbra{\varphi_1}+(1-r)\ketbra{\varphi_2}$ and $0<r<1$. Then, according to Proposition \ref{appPro:CorrDecomp} and Definition \ref{appdef:ExtremalPOVM}, we know that $\{\mathbf{P},\ketbra{\varphi_1}\}$ and $\{\mathbf{P},\ketbra{\varphi_2}\}$ are also Naimark extensions of $\mathbf{M}$.

Consider a pure state $\ket{\varphi}=a\ket{\varphi_1}+b\ket{\varphi_2}$ with $ \abs{a}^2+ \abs{b}^2=1$ and $a,b\in \mathbb{C}$, we have
\begin{equation}
		\begin{split}
			M_i&=\tr_Q[P_i(\mathbf{1}^A\otimes \ketbra{\varphi}^Q)]\\
			&=M_i+ab^*W_i+a^*bW_i^{\dagger},
		\end{split}
\end{equation}
where $ W_i=\tr_Q[P_i(\mathbf{1}^A\otimes \ketbra{\varphi_1}{\varphi_2})] $. The coefficients $a,b$ are arbitrary, thus, $W_i=0$. For any pure states $ \ket{\psi_1} $ and $ \ket{\psi_2} $, we have
	\begin{align}
		\tr(P_i\ketbra{\psi_1\varphi_1}P_i\ketbra{\psi_2\varphi_2})&=\abs{\bra{\psi_1\varphi_1}P_i\ket{\psi_2\varphi_2}}^2\\
		&=\abs{\bra{\psi_1}W_i\ket{\psi_2}}^2=0.
	\end{align}
This implies that the global state $\rho\otimes \sigma= r\rho\otimes\ketbra{\varphi_1}+(1-r)\rho\otimes \ketbra{\varphi_2}$ is always block-diagonal in Definition \ref{appdef:blockdiagP} and according to additivity condition of the randomness functions in Lemma \ref{applem:addblockdiag},
	\begin{equation}
		\begin{split}
			R(\rho\otimes \sigma,\mathbf{P})=rR(\rho\otimes \ketbra{\varphi_1} ,\mathbf{P})+(1-r)R(\rho\otimes \ketbra{\varphi_2} ,\mathbf{P}).
		\end{split}
	\end{equation}
Combine this equation with Lemma \ref{applem:purinpsr}, then the randomness for extremal POVM is given by a canonical Naimark extension
\begin{equation}\label{appeq:RexCanon}
R(\rho,\mathbf{M}) = R(\rho\otimes \ketbra{1},\mathbf{P}_c).
\end{equation}
\end{proof}

\section{Intrinsic randomness for general POVM}\label{App:Intrinsicrandomness}
Here, we prove Theorem \ref{appthm:POVMrandom}, Corollary \ref{appco:iffnonrandom},  Corollary \ref{appco:commoneigen} and Theorem \ref{appth:lbousic}.

\subsection{Intrinsic randomness for general POVM}

\begin{theorem}\label{appthm:POVMrandom}
When Eve performs a measurement on her system $F$, the intrinsic randomness of POVM outcomes is given by,
\begin{equation}\label{appeq:Rexdecomp}
	\begin{split}
R^{cf}(\rho,\mathbf{M}) &= \min_{\{\mathbf{N}^j,r_j\}} \sum_j r_j R(\rho, \mathbf{N}^{j}), \\
\textrm{s.t.} \quad  & \mathbf{M}=\sum_{j} r_j \mathbf{N}^j, \\
	\end{split}
\end{equation}
where the decomposed POVMs $\{\mathbf{N}^{j}\}$ are all extremal and the randomness function $R(\rho, \mathbf{N}^{j})$ is given by Eq.~\eqref{appeq:RexCanon}.
\end{theorem}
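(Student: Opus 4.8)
The plan is to establish the two inequalities $R^{cf}(\rho,\mathbf{M})\le f(\rho,\mathbf{M})$ and $R^{cf}(\rho,\mathbf{M})\ge f(\rho,\mathbf{M})$, where for brevity $f(\rho,\mathbf{M})$ denotes the right-hand side of Eq.~\eqref{appeq:Rexdecomp}, i.e.\ the convex roof of the canonical randomness over all decompositions of $\mathbf{M}$ into extremal POVMs. The common starting point is the reduction described just before the statement: once Eve performs a measurement on her purifying system $F$ of $\sigma^Q$ --- which may be taken rank-$1$, a coarser measurement amounting to a rank-$1$ one followed by Eve discarding part of her data --- the Hughston--Jozsa--Wootters steering argument shows her outcomes realize exactly the discrete pure-state ensembles $\{r_j,\ketbra{\varphi_j}\}$ of $\sigma$, with the conditional state of $AQ$ being $\rho\otimes\ketbra{\varphi_j}$ when outcome $j$ occurs. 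In the i.i.d.\ picture, averaging the block-randomness over Eve's classical record shows that, for a fixed generalized Naimark extension $\{\mathbf{P},\sigma\}$ and a fixed pure-state decomposition $\sigma=\sum_j r_j\ketbra{\varphi_j}$, the intrinsic randomness equals $\sum_j r_j R(\rho\otimes\ketbra{\varphi_j},\mathbf{P})$; hence $R^{cf}(\rho,\mathbf{M})$ is the infimum of this quantity over all extensions and all such decompositions.

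For $R^{cf}\le f$: given any decomposition $\mathbf{M}=\sum_j r_j\mathbf{N}^j$ into extremal POVMs, I would apply the ``$\Leftarrow$'' direction of Proposition~\ref{appPro:CorrDecomp} to build a generalized Naimark extension $\{\mathbf{P},\sigma\}$ of $\mathbf{M}$ whose ancillary state carries the decomposition $\sigma=\sum_j r_j\ketbra{\varphi_j}$ with $N_i^j=\tr_Q[P_i(\mathbf{1}^A\otimes\ketbra{\varphi_j}^Q)]$. Then each $\{\mathbf{P},\ketbra{\varphi_j}\}$ is a generalized Naimark extension of the extremal POVM $\mathbf{N}^j$, so Theorem~\ref{appthm:ExtRCan} gives $R(\rho\otimes\ketbra{\varphi_j},\mathbf{P})=R(\rho,\mathbf{N}^j)$ with the canonical value of Eq.~\eqref{appeq:RexCanon}. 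Having Eve steer into this ensemble yields $R^{cf}(\rho,\mathbf{M})\le\sum_j r_j R(\rho,\mathbf{N}^j)$, and minimizing over extremal decompositions gives $R^{cf}\le f$.

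For $R^{cf}\ge f$: I would take an arbitrary extension $\{\mathbf{P},\sigma\}$ together with an arbitrary pure-state decomposition $\sigma=\sum_j r_j\ketbra{\varphi_j}$, and use the ``$\Rightarrow$'' direction of Proposition~\ref{appPro:CorrDecomp} to obtain a (generally non-extremal) decomposition $\mathbf{M}=\sum_j r_j\mathbf{N}^j$ for which $\{\mathbf{P},\ketbra{\varphi_j}\}$ is a generalized Naimark extension of $\mathbf{N}^j$; since $R(\rho,\cdot)$ in Eq.~\eqref{eq:minR} is a minimum over all such extensions, $R(\rho\otimes\ketbra{\varphi_j},\mathbf{P})\ge R(\rho,\mathbf{N}^j)$. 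I would then decompose each $\mathbf{N}^j$ further into extremal POVMs $\mathbf{N}^j=\sum_k s_{jk}\mathbf{L}^{jk}$; convexity of $R(\rho,\cdot)$ in the POVM argument (which descends from convexity of the block-randomness $R(\varrho,\mathbf{P})$, Eq.~\eqref{eq:RPconvex}, as noted after Eq.~\eqref{eq:additivity}) gives $R(\rho,\mathbf{N}^j)\ge\sum_k s_{jk}R(\rho,\mathbf{L}^{jk})$, and for each extremal $\mathbf{L}^{jk}$ the value $R(\rho,\mathbf{L}^{jk})$ is again the canonical one by Theorem~\ref{appthm:ExtRCan}. Since $\{r_j s_{jk},\mathbf{L}^{jk}\}$ is an admissible extremal decomposition of $\mathbf{M}$, chaining the inequalities gives $\sum_j r_j R(\rho\otimes\ketbra{\varphi_j},\mathbf{P})\ge f(\rho,\mathbf{M})$; taking the infimum over extensions and decompositions yields $R^{cf}\ge f$. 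Combining the two bounds proves the theorem, and in particular any optimizing decomposition $\mathbf{M}=\sum_j r_j^*\mathbf{N}^{*j}$ is extremal; the final remark that $R(\rho,\mathbf{M})=R^{cf}(\rho,\mathbf{M})$ for extremal $\mathbf{M}$ is then immediate, the one-term decomposition being the only extremal one.

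The step I expect to be the main obstacle is the reduction in the first paragraph: making the steering argument watertight --- that Eve's $F$-measurements realize all and only the discrete pure-state ensembles of $\sigma$, that restricting to rank-$1$ measurements is genuinely without loss, and above all that in the i.i.d.\ regime the averaging over Eve's outcome converts the relevant intrinsic randomness into $\sum_j r_j R(\rho\otimes\ketbra{\varphi_j},\mathbf{P})$ exactly, using only convexity and additivity of the block-randomness functions. By contrast, the remaining manipulations --- invoking Proposition~\ref{appPro:CorrDecomp}, Theorem~\ref{appthm:ExtRCan}, and convexity of $R$ in the POVM argument --- are essentially bookkeeping, provided one is careful that $R(\rho,\mathbf{N}^j)$ in Eq.~\eqref{appeq:Rexdecomp} is the canonical value of Eq.~\eqref{appeq:RexCanon}, which coincides with the minimization in Eq.~\eqref{eq:minR} only because the $\mathbf{N}^j$ have been forced to be extremal.
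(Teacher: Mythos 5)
Your upper-bound direction ($R^{cf}\le f$) is fine and matches the paper's setup: build the extension from an extremal decomposition via Proposition~\ref{appPro:CorrDecomp}, invoke Theorem~\ref{appthm:ExtRCan} to get the canonical values, and average. The gap is in the lower-bound direction, at the step where you write that convexity of $R(\rho,\cdot)$ in the POVM argument gives $R(\rho,\mathbf{N}^j)\ge\sum_k s_{jk}R(\rho,\mathbf{L}^{jk})$. Convexity gives exactly the opposite inequality, $R\left(\rho,\sum_k s_{jk}\mathbf{L}^{jk}\right)\le\sum_k s_{jk}R(\rho,\mathbf{L}^{jk})$, and the reverse bound is not available for the unrestricted minimum of Eq.~\eqref{eq:minR}: indeed, feeding the mixed-ancilla extension built from the extremal decomposition into that minimum already shows $R(\rho,\mathbf{N}^j)\le\sum_k s_{jk}R(\rho,\mathbf{L}^{jk})$, so your chain cannot close. (If the reverse inequality held in general you would have proved $R=R^{cf}$ for all POVMs, which the paper only claims for extremal ones.)

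What is actually needed, and what the paper proves, is a \emph{refinement inequality at the level of specific extensions}, Eq.~\eqref{appeq:exdecran}: if $\mathbf{N}^0=\lambda\mathbf{N}^{00}+(1-\lambda)\mathbf{N}^{01}$ and $\{\mathbf{P}',\sigma'\}$ is the extension realizing the finer decomposition, then $\lambda R(\rho\otimes\ketbra{\varphi_{00}},\mathbf{P}')+(1-\lambda)R(\rho\otimes\ketbra{\varphi_{01}},\mathbf{P}')\le R(\rho\otimes\ketbra{\varphi_0},\mathbf{P})$. This does not follow from convexity of $R$ in $\mathbf{M}$; the paper proves it for $R_c$ via the consistency condition plus concavity of the Shannon entropy, and for $R_q$ via purifying the refined ancilla and applying monotonicity of the relative entropy under the partial trace over the purifying system, together with the additivity Lemma~\ref{applem:addblockdiag}. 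Iterating this inequality down to an extremal decomposition (where Theorem~\ref{appthm:ExtRCan} fixes the canonical value) is what yields $R^{cf}\ge f$. So the step you flagged as routine bookkeeping is in fact the technical core of the proof, while the steering reduction you identified as the main obstacle is taken essentially as the definition of the post-measurement scenario in the paper.
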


\begin{proof}
Consider a generalized Naimark extension $\{\mathbf{P},\sigma\}$. After Eve's measurement, the ancillary state can be treated as a pure state ensemble, $\sigma=\sum_j r_j \ketbra{\varphi_j}$. 
Then, the randomness of $\rho$ is given by
\begin{equation}
	\begin{split}
R^{cf}(\rho,\mathbf{M}) &= \sum_j r_j R(\rho\otimes \ketbra{\varphi_j},\mathbf{P}).
	\end{split}
\end{equation}

The state decomposition corresponds to a POVM decomposition, $\mathbf{M}=\sum_j r_j \mathbf{N}^j$, from Proposition \ref{appPro:CorrDecomp}. To prove the theorem, we only need to prove that all $\mathbf{N}^j$ are extremal, otherwise, we can find a smaller randomness value with a different extension.

Suppose one of the decomposed POVM, $\mathbf{N}^0$, is not extremal and can be decomposed into $\mathbf{N}^0=\lambda \mathbf{N}^{00}+(1-\lambda)\mathbf{N}^{01}, 0<\lambda<1$. Then, there is another decomposition of $\mathbf{M}$,
\begin{equation}
	\begin{split}
\mathbf{M}=\lambda r_0 \mathbf{N}^{00}+(1-\lambda)r_0\mathbf{N}^{01}+\sum_{j\neq 0} r_j \mathbf{N}^j.
	\end{split}
\end{equation}
From Eq.~\eqref{appeq:decompN2P}, we can construct another Nairmark extension $\{\mathbf{P'},\sigma'\}$ with ancillary state,
\begin{equation}
	\begin{split}
\sigma'=\lambda r_0 \ketbra{\varphi_{00}}+(1-\lambda)r_0\ketbra{\varphi_{01}}+\sum_{j\neq 0} r_j \ketbra{\varphi_j'},
	\end{split}
\end{equation}
where $\forall i$, the operator on system $A$, $\bra{\varphi_{00}}P'_i\ket{\varphi_{01}}=0$. After Eve's measurement, the randomness is given by,
\begin{equation}
\begin{split}
& \lambda r_0 R(\rho\otimes \ketbra{\varphi_{00}},\mathbf{P'})+ (1-\lambda)r_0 R(\rho\otimes\ketbra{\varphi_{01}}),\mathbf{P'})+\sum_{j\neq 0} r_j R(\rho\otimes \ketbra{\varphi_j'},\mathbf{P'})\\
&=r_0 R(\rho\otimes (\lambda \ketbra{\varphi_{00}}+ (1-\lambda)\ketbra{\varphi_{01}}),\mathbf{P}')+\sum_{j\neq 0} r_j R(\rho\otimes \ketbra{\varphi_j},\mathbf{P}),
\end{split}
\end{equation}
where the equality comes from the additivity condition of the randomness functions for PVMs and Lemma \ref{applem:purinpsr}.

Now, we only need to prove that
\begin{equation}\label{appeq:exdecran}
\lambda R(\rho\otimes \ketbra{\varphi_{00}},\mathbf{P'})+(1-\lambda)R(\rho\otimes\ketbra{\varphi_{01}}),\mathbf{P'}) \leq R(\rho\otimes \ketbra{\varphi_0},\mathbf{P}),
\end{equation}
for the randomness functions, $R_c$ in Eq.~\eqref{appeq:Rcpvm} and $R_q$ in Eq.~\eqref{appeq:Rqpvm}.

First, we consider the case of $R_c$. For pure input state $\ket{\psi}$, consistency condition implies
\begin{equation}
\lambda \bra{\psi \varphi_{00}}P'_i\ket{\psi \varphi_{00}}+(1-\lambda)\bra{\psi \varphi_{01}}P'_i\ket{\psi \varphi_{01}}=\bra{\psi \varphi_{0}}P_i\ket{\psi \varphi_{0}}.
\end{equation}
Due to the concavity of Shannon entropy, we have
\begin{equation}
\lambda H( \{\bra{\psi \varphi_{00}}P'_i\ket{\psi \varphi_{00}}\})+(1-\lambda)H(\{\bra{\psi \varphi_{01}}P'_i\ket{\psi \varphi_{01}}\})\leq H(\{\bra{\psi \varphi_{0}}P_i\ket{\psi \varphi_{0}}\}).
\end{equation}
Then the Eq.~\eqref{appeq:exdecran} for a general mixed state $\rho$ can be obtained directly.

Now we consider the case of $R_q$. Suppose $\ket{\Phi}^{QF}$ is a purification of state $\tilde{\sigma}=\lambda \ketbra{\varphi_{00}}+ (1-\lambda)\ketbra{\varphi_{01}}$, combine the consistency condition with Lemma \ref{applem:Purdecran} and Lemma \ref{applem:purinpsr}, then
\begin{equation}
S(\rho^{A}\otimes \ketbra{\Phi}^{QF}\parallel\Delta_{\mathbf{P'}^{AQ}\otimes\mathbf{1}^{F}}(\rho^{A}\otimes \ketbra{\Phi}^{QF}))= S(\rho\otimes \ketbra{\varphi_0}\parallel\Delta_{ \mathbf{P}}(\rho\otimes \ketbra{\varphi_0})).
\end{equation}
Take partial trace $\tr_F$ on the l.h.s. of the above equality, from the monotonicity of relative entropy, there is
\begin{equation}
S(\rho \otimes\tilde{\sigma}\parallel\Delta_{\mathbf{P'}}(\rho\otimes \tilde{\sigma}))\leq S(\rho\otimes \ketbra{\varphi_0}\parallel\Delta_{ \mathbf{P}}(\rho\otimes \ketbra{\varphi_0})).
\end{equation}
Combine the inequality with the additivity condition and Eq.~\eqref{appeq:exdecran} holds.

\end{proof}

\subsection{Non-random states}
\begin{corollary}[Necessary and sufficient condition for non-random states]\label{appco:iffnonrandom}
Given a POVM $\mathbf{M}$, a state $\rho$ is non-random, $R^{cf}(\rho,\mathbf{M})=0$, iff the measurement has an extremal decomposition, $\mathbf{M}=\sum_{j}r_j \mathbf{N}^{j}$, satisfying one of the following two equivalent conditions,
\begin{enumerate}
\item
$\forall j, i\neq i'$, $N_{i}^{j}\rho N_{i'}^{j}=0$;

\item
for each $\mathbf{N}^j$, the state has a corresponding spectral decomposition, $\rho=\sum_{k} q_k^j \ketbra{\psi_k^j}$, such that $\forall k$, $N_{i}^{j}\ket{\psi_k^j}=\ket{\psi_k^j}$ for some element $N_{i}^{j}$.
\end{enumerate}
\end{corollary}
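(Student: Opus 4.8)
The plan is to derive the corollary from Theorem~\ref{appthm:POVMrandom} together with a local characterization of when an \emph{extremal} POVM produces no randomness on a fixed state. Since every term $R(\rho,\mathbf{N}^{j})\ge 0$ and the decomposition weights satisfy $r_j>0$, Theorem~\ref{appthm:POVMrandom} gives at once that $R^{cf}(\rho,\mathbf{M})=0$ if and only if there is an extremal decomposition $\mathbf{M}=\sum_j r_j\mathbf{N}^{j}$ with $R(\rho,\mathbf{N}^{j})=0$ for \emph{every} $j$ (the ``only if'' direction uses that the minimum in Theorem~\ref{appthm:POVMrandom} is attained, the ``if'' direction is immediate since $R^{cf}\ge 0$). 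So the whole statement reduces to the local claim: for an extremal POVM $\mathbf{N}=\{N_i\}$ and a fixed $\rho$, the equality $R(\rho,\mathbf{N})=0$ is equivalent to condition~1, and condition~1 is equivalent to condition~2.

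For the local claim I would first unfold $R(\rho,\mathbf{N})=R(\rho\otimes\ketbra{1},\mathbf{P}_c)$ through the canonical Naimark extension of Eq.~\eqref{appeq:RexCanon}: writing $P_i=U^{\dagger}(\mathbf{1}^A\otimes\ketbra{i})U$ and $A_i\ket{\psi}:=(\mathbf{1}^A\otimes\bra{i})U(\ket{\psi}\otimes\ket{1})$ one has $N_i=A_i^{\dagger}A_i$ and $U(\rho\otimes\ketbra{1})U^{\dagger}=\sum_{k,l}A_k\rho A_l^{\dagger}\otimes\ketbra{k}{l}$. As established in the PVM analysis, both $R_c$ and $R_q$ vanish on a PVM precisely on the block-diagonal states (for $R_q$ because the block-dephasing map cannot decrease the von Neumann entropy and is entropy-preserving only at a fixed point; for $R_c$ because $S(\Delta_{\mathbf{P}}(\ketbra{\psi}))=0$ forces $\ket{\psi}$ into a single block). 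Hence $R(\rho,\mathbf{N})=0$ iff $\rho\otimes\ketbra{1}=\Delta_{\mathbf{P}_c}(\rho\otimes\ketbra{1})$, which after conjugation by $U$ reads $A_k\rho A_l^{\dagger}=0$ for all $k\neq l$; since any canonical extension has $A_k=W_k\sqrt{N_k}$ for unitaries $W_k$, this is equivalent to $\sqrt{N_k}\,\rho\,\sqrt{N_l}=0$ for all $k\neq l$, which in turn trivially implies condition~1 because $N_k\rho N_l=\sqrt{N_k}(\sqrt{N_k}\rho\sqrt{N_l})\sqrt{N_l}$. It is equally direct that condition~2 implies both condition~1 and $\sqrt{N_k}\rho\sqrt{N_l}=0$: if $\rho=\sum_k q_k\ketbra{\psi_k}$ with $N_{i(k)}\ket{\psi_k}=\ket{\psi_k}$, then $\sum_iN_i=\mathbf{1}$ and positivity give $N_{i'}\ket{\psi_k}=0$, hence $\sqrt{N_{i'}}\ket{\psi_k}=0$, for all $i'\neq i(k)$, so each summand of $N_a\rho N_b$ and of $\sqrt{N_a}\rho\sqrt{N_b}$ with $a\neq b$ vanishes. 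The main obstacle is therefore the single remaining implication, condition~1~$\Rightarrow$~condition~2, i.e.\ passing from the bilinear relations $N_i\rho N_{i'}=0$ with \emph{non-projective} $N_i$ to a genuine orthogonal splitting of $\operatorname{supp}\rho$.

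To handle that implication I would argue as follows. Condition~1 gives $\rho=\sum_i N_i\rho N_i$ (add $\sum_iN_i\rho N_i$ to the vanishing off-diagonal sum and use $\sum_iN_i=\mathbf{1}$), so $\operatorname{tr}\rho=\sum_i\operatorname{tr}(N_i^2\rho)$; combined with $\operatorname{tr}\rho=\operatorname{tr}((\sum_iN_i)\rho)$ this yields $\sum_i\operatorname{tr}[N_i(\mathbf{1}-N_i)\rho]=0$. Each term is nonnegative (as $0\le N_i\le\mathbf{1}$ makes $N_i(\mathbf{1}-N_i)$ positive and $\rho\ge 0$), hence $\operatorname{tr}[N_i(\mathbf{1}-N_i)\rho]=0$ and therefore $N_i(\mathbf{1}-N_i)\rho=0$, i.e.\ $N_i^2\Pi=N_i\Pi$ with $\Pi$ the projector onto $\operatorname{supp}\rho$. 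This says the range of $N_i\Pi$ lies in the $+1$-eigenspace $E_i$ of $N_i$; these $E_i$ are mutually orthogonal (a $+1$-eigenvector of $N_i$ is annihilated by every $N_{i'}$, $i'\neq i$), and $\Pi=\sum_i N_i\Pi$ forces $\operatorname{supp}\rho\subseteq\bigoplus_i E_i$. Letting $\Pi_i$ be the projector onto $E_i$, on $\bigoplus_i E_i$ the operator $N_i$ acts as $\Pi_i$, so $N_i\rho=\Pi_i\rho$ and $\rho N_{i'}=\rho\Pi_{i'}$, whence $\Pi_i\rho\Pi_{i'}=N_i\rho N_{i'}=0$ for $i\neq i'$. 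Thus $\rho$ is block-diagonal with respect to $\{\Pi_i\}$, its spectral decomposition $\rho=\sum_k q_k\ketbra{\psi_k}$ has each $\ket{\psi_k}$ inside some $E_{i(k)}$, i.e.\ $N_{i(k)}\ket{\psi_k}=\ket{\psi_k}$, which is condition~2. Collecting the equivalences $\text{condition 1}\Leftrightarrow\text{condition 2}\Leftrightarrow\big(\sqrt{N_k}\rho\sqrt{N_l}=0\ \forall k\neq l\big)\Leftrightarrow R(\rho,\mathbf{N})=0$ and feeding them into the reduction of the first paragraph proves the corollary for both $R^{cf}_c$ and $R^{cf}_q$.
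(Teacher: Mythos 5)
Your proof is correct, and while it shares the same top-level skeleton as the paper's --- reduce $R^{cf}(\rho,\mathbf{M})=0$ via Theorem~\ref{appthm:POVMrandom} to $R(\rho,\mathbf{N}^{j})=0$ for every component of an extremal decomposition, then characterize when a single extremal POVM yields zero randomness --- the local characterization is handled quite differently. The paper proves only the equivalence of condition~2 with $R=0$, working with the extended PVM directly: sufficiency by showing $N_i^j\ket{\psi_k^j}=\ket{\psi_k^j}$ forces $\ket{\psi_k^j\varphi_j}$ to be a $+1$ eigenvector of $P_i^j$, and necessity by noting that the dephasing-invariant product state $\rho\otimes\ketbra{\varphi_j}$ admits a blockwise spectral decomposition; for condition~1 it simply cites Ref.~\cite{Bischof2019}. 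You instead unfold the canonical extension into Kraus operators $A_i=W_i\sqrt{N_i}$, identify $R=0$ with $\sqrt{N_k}\rho\sqrt{N_l}=0$ for $k\neq l$, and then close the loop with a self-contained proof that condition~1 implies condition~2: the trace identity $\sum_i\tr[N_i(\mathbf{1}-N_i)\rho]=0$ forces $\operatorname{supp}\rho$ into the direct sum of the mutually orthogonal $+1$-eigenspaces $E_i$, on which each $N_i$ acts as a projector. This buys you two things the paper does not supply in-house: a proof of the equivalence of the two listed conditions (the paper asserts it), and a replacement for the external citation on condition~1. The only caveat --- that the minimum in Eq.~\eqref{appeq:Rexdecomp} is attained when the value is zero --- is assumed identically by the paper, so it is not a gap relative to the target.
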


\begin{proof}
We first prove that item 2 is a sufficient and necessary condition for $R^{cf}(\rho,\mathbf{M})=0$. 

Sufficiency ($\Rightarrow$):
For each extremal POVM $ \mathbf{N}^j$, there exists a Naimark extension $ \{\mathbf{P}^j,\ketbra{\varphi_j}\} $. According to the condition, for $\forall k$, we can find a POVM element $N_{i}^{j}$ such that
\begin{equation}\label{appeq:eigenpvm}
\begin{split}
\bra{\psi_k^j\varphi_{j}}P_{i}^{j} \ket{\psi_k^j\varphi_{j}} &=\tr[\ketbra{\psi_k^j}^{A}\tr_Q (P_{i}^{j}(\mathbf{1}^{A}\otimes\ketbra{\varphi_{j}}^{Q}))]\\
  &=\tr(N_{i}^{j}\ketbra{\psi_k^j})=1.
  \end{split}
  \end{equation}  
It follows that $P_{i}^{j}\ket{\psi_k^j \varphi_j}=\ket{\psi_k^j \varphi_j}$, and moreover, $R^{cf}(\rho,\mathbf{M})=\sum_j r_j R(\rho,\mathbf{N}^j)=0$.

Necessity ($\Leftarrow$):
If $R^{cf}(\rho,\mathbf{M})=0$, there exists an extremal decomposition $ \mathbf{M}=\sum_{j}r_j \mathbf{N}^{j} $, such that $\forall j$, $ R(\rho,\mathbf{N}^j)=R(\rho\otimes \ketbra{\varphi_j},\mathbf{P}^j)=0$. Therefore, $\rho\otimes \ketbra{\varphi_j}$ does not change after the associated block-dephasing operation. We write the dephased state into its spectral decomposition,
\begin{equation}
\rho\otimes \ketbra{\varphi_j}=\sum_i P_i^j(\rho\otimes \ketbra{\varphi_j})P_i^j=\sum_{k} q_{k}^j \ketbra{u_{k}^j}.
\end{equation}
The eigenstate of $ \rho\otimes \ketbra{\varphi_j} $ must be product state. Thus, $ \ket{u_{k}^j}= \ket{\psi_{k}^j \varphi_j}$ is also an eigenstate of some PVM element $P_{i}^j$. From Eq.~\eqref{appeq:eigenpvm}, we can obtain $N_{i}^{j}\ket{\psi_k^j}=\ket{\psi_k^j}$.

The sufficiency and necessity of item 1 can be deduced from the fact that  $N_{i}^{j}\rho N_{i'}^{j}=0, i\neq i'$ is sufficient and necessary conditions for $R(\rho\otimes\ketbra{\varphi_j},\mathbf{P}^j)=0$ \cite{Bischof2019}.

\end{proof}

To prove Corollary \ref{appco:commoneigen}, we need the following definition and two lemmas.
\begin{definition} [Grouping/Coarse graining process \cite{Haapasalo2012}]
 Give two POVMs $ \mathbf{M}=\{M_{1},\cdots,M_{m}\}$ and $ \mathbf{N}=\{N_{1},\cdots,N_{n}\}$ with $ m\leq n$. A grouping is determined by a mapping $ f:[n] \rightarrow [m]$. Concretely, the element of two POVMs have the relation  
	\begin{equation}
		M_{i} =\sum_{j\in f^{-1}(i)}N_{j},	
	\end{equation} 
	where $f^{-1}$ represents the inverse mapping.
\end{definition}
\begin{lemma}\label{applem:extdecom}
For a POVM, $ \mathbf{M} =\{M_{1},\cdots,M_{m}\}$, if $\ket{\psi_1},\cdots,\ket{\psi_l}$ are pairwise orthogonal states and each $\ket{\psi_k}$ is a common eigenstate of all elements $M_i$, then there exists a decomposition 
	\begin{equation}
		\mathbf{M}=\sum_{j}r_j\mathbf{N}^j,
	\end{equation}
such that for arbitrary pure state $\ket{\psi_k}$, each $\mathbf{N}^j$ contains an element $ N_{i}^j$ satisfying $ N_{i}^j\ket{\psi_k}=\ket{\psi_k}$.
\end{lemma}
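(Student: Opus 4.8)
The plan is to separate the ``classical'' part of $\mathbf{M}$ that acts on the span of the $\ket{\psi_k}$'s from the remainder, to decompose only this classical part into deterministic pieces, and to carry the remainder unchanged through every component of the decomposition.

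First I would record the elementary data. Since each $\ket{\psi_k}$ is a common eigenstate of every $M_i$ and $M_i$ is positive semidefinite, write $M_i\ket{\psi_k}=\lambda_{ik}\ket{\psi_k}$ with $\lambda_{ik}\ge 0$; from $\sum_i M_i=\mathbf{1}$ one gets $\sum_{i=1}^m\lambda_{ik}=1$ for each $k$, so every column $(\lambda_{ik})_i$ is a probability vector. Let $\Pi=\sum_{k=1}^l\ketbra{\psi_k}$ and $\Pi^{\perp}=\mathbf{1}-\Pi$. The subspace $V=\mathrm{span}\{\ket{\psi_k}\}$ is invariant under each Hermitian $M_i$, hence so is $V^{\perp}$, and therefore $M_i=\Pi M_i\Pi+\Pi^{\perp}M_i\Pi^{\perp}$ with $\Pi M_i\Pi=\sum_{k}\lambda_{ik}\ketbra{\psi_k}$.

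Next, for every function $g:[l]\to[m]$ I would set $r_g=\prod_{k=1}^l\lambda_{g(k),k}$ and define, for $i\in[m]$,
\begin{equation}
N_i^g=\sum_{k:\,g(k)=i}\ketbra{\psi_k}+\Pi^{\perp}M_i\Pi^{\perp}.
\end{equation}
Each $N_i^g$ is positive semidefinite, being a sum of positive semidefinite operators supported on the mutually orthogonal subspaces $V$ and $V^{\perp}$, and $\sum_i N_i^g=\Pi+\Pi^{\perp}=\mathbf{1}$, so $\mathbf{N}^g=\{N_1^g,\dots,N_m^g\}$ is a POVM. The product form of the weights gives $\sum_g r_g=\prod_k\big(\sum_i\lambda_{ik}\big)=1$ and, for fixed $i_0,k_0$, $\sum_{g:\,g(k_0)=i_0}r_g=\lambda_{i_0k_0}$; summing the definition of $N_i^g$ against $r_g$ then yields $\sum_g r_g N_i^g=\sum_k\lambda_{ik}\ketbra{\psi_k}+\Pi^{\perp}M_i\Pi^{\perp}=M_i$. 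Discarding the finitely many $g$ with $r_g=0$ and relabelling the rest produces the desired decomposition $\mathbf{M}=\sum_j r_j\mathbf{N}^j$ with $r_j>0$.

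Finally, for any $k$ the element $N_{g(k)}^g$ of $\mathbf{N}^g$ contains the term $\ketbra{\psi_k}$ in its defining sum, while $\Pi^{\perp}M_{g(k)}\Pi^{\perp}\ket{\psi_k}=0$ because $\Pi^{\perp}\ket{\psi_k}=0$; hence $N_{g(k)}^g\ket{\psi_k}=\ket{\psi_k}$, which is exactly the claimed property. I do not expect a serious obstacle here; the only point that needs care is to split $\mathbf{M}$ so that the component POVMs simultaneously sum to the identity, remain positive, reconstruct $\mathbf{M}$, and become deterministic on $V$ — which is precisely why the orthogonal part $\Pi^{\perp}M_i\Pi^{\perp}$ must appear identically in every $\mathbf{N}^j$ rather than itself being decomposed.
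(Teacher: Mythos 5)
Your proof is correct, and it takes a genuinely different route from the paper's. The paper refines each $M_i$ into its rank-one eigenprojectors, assembles these into a finer rank-one POVM $\mathbf{A}$, decomposes $\mathbf{A}$ into extremal rank-one POVMs via the algorithm implicit in the linear-independence lemma, argues that every extremal component must contain each $\ketbra{\psi_k}$ with unit coefficient (since the remaining elements live in the orthogonal complement), and finally coarse-grains back to $m$ outcomes. You instead exploit the block structure directly: since $V=\mathrm{span}\{\ket{\psi_k}\}$ is invariant under every $M_i$, you split $M_i=\sum_k\lambda_{ik}\ketbra{\psi_k}+\Pi^{\perp}M_i\Pi^{\perp}$, randomize only the diagonal $V$-part over deterministic assignments $g:[l]\to[m]$ with product weights $r_g=\prod_k\lambda_{g(k),k}$, and carry the $V^{\perp}$-block unchanged through every component; the marginalization identity $\sum_{g:\,g(k_0)=i_0}r_g=\lambda_{i_0k_0}$ recovers $\mathbf{M}$, and positivity and completeness of each $\mathbf{N}^g$ are immediate. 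Your construction is fully explicit and self-contained, avoiding the paper's somewhat informal appeal to the extremal-decomposition algorithm "inheriting elements up to coefficients"; what the paper's route buys is continuity with the extremal-POVM machinery used elsewhere in the supplement (its components are extremal before the final grouping step), which is the form in which the lemma is consumed by the non-randomness criterion. Since neither construction yields extremal $\mathbf{N}^j$ after grouping, and the lemma as stated does not demand extremality, your decomposition establishes exactly the claimed statement.
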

\begin{proof}
From the condition, each POVM element has an eigendecomposition
	\begin{equation}
		M_i=\sum_{i'=1}^{d} \lambda_{ii'}\ketbra{\psi_{ii'}},
	\end{equation}
	where $ \ket{\psi_{i1}} =\ket{\psi_{1}},\cdots, \ket{\psi_{il}} =\ket{\psi_{l}}$ and $\lambda_{ii'}\ge 0, \sum_{i'}\lambda_{ii'}=1.$ All rank-1 terms form a new POVM denoted as $\mathbf{A}=\{\lambda_{ii'}\ketbra{\psi_{ii'}},i\in[m],i'\in[d] \} $. Via the method that decomposes a POVM into a mixture of extremal POVMs in the proof of Lemma \ref{applem:linIndExt}, the rank-1 POVM $\mathbf{A}$ can be decomposed into a mixture of extremals. From the description of the algorithm, the elements of extremal POVMs inherit the elements of the POVM $\mathbf{A}$ with only a difference in coefficients. As a consequence, each extremal POVM has elements $  \{c_1|\psi_1\rangle \langle \psi_1|,\cdots,c_l|\psi_l\rangle \langle \psi_l|\}$ and other elements are in an orthogonal space, thus, $ c_1=\cdots=c_l=1. $ Write the decomposition as 
	\begin{equation}
		\mathbf{A}=\sum_{j}r_j \mathbf{B}^{j},
	\end{equation}
 and each POVM $\mathbf{B}^j=\{B^j_{ii'},i\in[m],i'\in[d]\}$ contains elements $ \{|\psi_1\rangle \langle \psi_1|,\cdots,|\psi_l\rangle \langle \psi_l|\}$. Define $ \mathbf{N}^j =\{N^{j}_1,\cdots,N^{j}_m\}$ as a grouping of $ \mathbf{B}^j$,
	\begin{equation}
		N^{j}_i=\sum_{i'=1}^{d}B_{ii'}^j.
	\end{equation}
Therefore, $ \mathbf{M}=\sum_{j}r_j \mathbf{N}^{j}$, in addition, for each $ \mathbf{N}^{j}$  and each $ \ket{\psi_k}$, there exists a related $N^{j}_{i} $ such that $N^{j}_{i}\ket{\psi_k}=\ket{\psi_k}$.
\end{proof}	
	
\begin{lemma}\label{applem:necenonran}
If $ R(\rho,\mathbf{M}) =0$, then $ [\rho,M_i]=\rho M_i-M_i\rho=0 ,\forall i$.
\end{lemma}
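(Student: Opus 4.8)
The plan is to lift the hypothesis to a generalized Naimark extension, observe that a vanishing randomness value there forces the extended state to be block-diagonal with respect to (hence to commute with) every element of the extended PVM, and then descend back to the POVM level by a partial trace over the ancilla.

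First, by Eq.~\eqref{eq:minR}, $R(\rho,\mathbf{M})=0$ means there is a generalized Naimark extension $\{\mathbf{P},\sigma\}$ obeying the consistency condition~\eqref{appeq:consisMitrQ} with $R(\rho\otimes\sigma,\mathbf{P})=0$. For a PVM, a zero randomness value forces invariance under the block-dephasing map; this is the faithfulness property of a coherence measure, and it is easy to verify directly for the two measures of Eq.~\eqref{eq:tworanfun}: for $R_q$ because $S(\varrho\|\Delta_{\mathbf{P}}(\varrho))=0$ iff $\varrho=\Delta_{\mathbf{P}}(\varrho)$, and for $R_c$ because an optimal decomposition $\varrho=\sum_j q_j\ketbra{\psi_j}$ must satisfy $S(\Delta_{\mathbf{P}}(\ketbra{\psi_j}))=0$ termwise, which — as $\Delta_{\mathbf{P}}(\ketbra{\psi_j})$ is a sum of operators with mutually orthogonal supports — forces $P_i\ket{\psi_j}=\ket{\psi_j}$ for a single $i$, so again $\Delta_{\mathbf{P}}(\varrho)=\varrho$. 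Applying this with $\varrho=\rho\otimes\sigma$ gives $\rho\otimes\sigma=\sum_i P_i(\rho\otimes\sigma)P_i$; left- and right-multiplying by $P_j$ and using $P_iP_j=\delta_{ij}P_i$ yields $P_j(\rho\otimes\sigma)=(\rho\otimes\sigma)P_j$, so $[\rho\otimes\sigma,P_i]=0$ for every $i$.

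The remainder is partial-trace bookkeeping. I would take $\tr_Q$ of the commutator $[\rho\otimes\sigma,P_i]=0$. Writing $\rho\otimes\sigma=(\rho^A\otimes\mathbf{1}^Q)(\mathbf{1}^A\otimes\sigma^Q)$ and using $\tr_Q[(\rho^A\otimes\mathbf{1}^Q)Y]=\rho\,\tr_Q[Y]$ together with $M_i=\tr_Q[(\mathbf{1}^A\otimes\sigma^Q)P_i]$ (the Hermitian conjugate of the consistency condition, valid since $P_i$ and $\sigma$ are Hermitian) gives $\tr_Q[(\rho\otimes\sigma)P_i]=\rho M_i$; writing instead $\rho\otimes\sigma=(\mathbf{1}^A\otimes\sigma^Q)(\rho^A\otimes\mathbf{1}^Q)$ and using $\tr_Q[Y(\rho^A\otimes\mathbf{1}^Q)]=\tr_Q[Y]\,\rho$ with the consistency condition gives $\tr_Q[P_i(\rho\otimes\sigma)]=M_i\rho$. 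Equating the two proves $[\rho,M_i]=\rho M_i-M_i\rho=0$ for all $i$, as claimed.

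I expect the only genuine difficulty to be the first step — identifying that zero randomness on the extended PVM is equivalent to exact block-diagonality of $\rho\otimes\sigma$, uniformly for whichever randomness function is in use; everything after that is routine. A minor technical point is that the argument invokes a Naimark extension attaining the minimum in Eq.~\eqref{eq:minR}, in line with the paper's notation.
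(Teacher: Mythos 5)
Your proposal is correct and follows essentially the same route as the paper's own proof: zero randomness for the extended PVM forces $\rho\otimes\sigma=\Delta_{\mathbf{P}}(\rho\otimes\sigma)$, hence $[\rho\otimes\sigma,P_i]=0$, and the partial trace over $Q$ together with the consistency condition yields $[\rho,M_i]=0$. The only difference is cosmetic --- you derive the commutation by sandwiching with $P_j$ and spell out the faithfulness of $R_c$ and $R_q$ explicitly, whereas the paper argues via a spectral decomposition of the block-diagonal state; both are fine.
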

\begin{proof}
Since $ R(\rho,\mathbf{M}) =0$, there exists a Naimark extension $\{\mathbf{P},\sigma\}$ such that the global input state has a spectral decomposition
\begin{equation}\label{appent}
\rho\otimes \sigma=\Delta_{\mathbf{P}}(\rho\otimes\sigma)=\sum_{k}\lambda_{k}\ketbra{u_k},
\end{equation}
where $\forall i$, $P_i\ket{u_k}=\ket{u_k}$, or $P_i\ket{u_k}=0$. In either case, we can obtain $P_i(\rho\otimes\sigma )=(\rho\otimes\sigma )P_i$. Take partial trace $\tr_Q $ on both sides of the equality, 
\begin{equation}
\tr_Q [P_i (\mathbf{1}^A \otimes \sigma^Q)(\rho^A \otimes \mathbf{1}^Q)]=\tr_Q [(\rho^A\otimes \mathbf{1}^Q)(\mathbf{1}^A \otimes \sigma^Q)P_i].
\end{equation}
Combined with the consistency condition, there is $ M_i\rho=\rho M_i. $
\end{proof}
Now, we give proof of Corollary \ref{appco:commoneigen}.
\begin{corollary}\label{appco:commoneigen}
Given a POVM $\mathbf{M}$, a pure state $\ket{\psi}$ is non-random, $R(\ketbra{\psi},\mathbf{M})=0$, iff $\ket{\psi}$ is a common eigenstate of all measurement elements.
\end{corollary}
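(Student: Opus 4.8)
The plan is to prove the claimed equivalence by treating the two implications separately, using Lemma~\ref{applem:necenonran} for necessity and Lemma~\ref{applem:extdecom} for sufficiency.

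\emph{Necessity} ($\Rightarrow$). Suppose $R(\ketbra{\psi},\mathbf{M})=0$. By Lemma~\ref{applem:necenonran} this forces $[\ketbra{\psi},M_i]=0$, i.e.\ $\ketbra{\psi}M_i=M_i\ketbra{\psi}$, for every $i$. Applying both sides to $\ket{\psi}$ and using $\braket{\psi}{\psi}=1$ gives $M_i\ket{\psi}=\mu_i\ket{\psi}$ with $\mu_i:=\bra{\psi}M_i\ket{\psi}\ge 0$; as this holds for all $i$ (and $\sum_i\mu_i=1$), $\ket{\psi}$ is a common eigenstate of all the measurement elements. This direction is immediate.

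\emph{Sufficiency} ($\Leftarrow$). Suppose $M_i\ket{\psi}=\mu_i\ket{\psi}$ for all $i$. I would first invoke Lemma~\ref{applem:extdecom} with the single vector $\ket{\psi}$ (the $l=1$ case) to obtain a decomposition $\mathbf{M}=\sum_j r_j\mathbf{N}^j$ in which every $\mathbf{N}^j$ has an element $N^j_{i_j}$ with $N^j_{i_j}\ket{\psi}=\ket{\psi}$. For each $j$, the canonical Naimark extension $\{\mathbf{P}^j,\ketbra{1}\}$ of $\mathbf{N}^j$ (Definition~\ref{appdef:CanExt}) satisfies, by the consistency condition, $\bra{\psi\otimes 1}P^j_{i_j}\ket{\psi\otimes 1}=\tr(N^j_{i_j}\ketbra{\psi})=\bra{\psi}N^j_{i_j}\ket{\psi}=1$, and since $P^j_{i_j}$ is a projector this forces $P^j_{i_j}\ket{\psi\otimes 1}=\ket{\psi\otimes 1}$. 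Hence $\Delta_{\mathbf{P}^j}(\ketbra{\psi\otimes 1})=\ketbra{\psi\otimes 1}$ is pure, so $R(\ketbra{\psi},\mathbf{N}^j)\le R(\ketbra{\psi}\otimes\ketbra{1},\mathbf{P}^j)=0$ for every $j$. Finally, since the randomness function of Eq.~\eqref{eq:minR} is convex in its POVM argument (the $\mathbf{M}$-convexity stated in the main text, which follows from Eq.~\eqref{appeq:RPconvexApp} and Lemma~\ref{applem:addblockdiag} once the $\{\mathbf{P}^j,\ketbra{1}\}$ are assembled with a flag register into one Naimark extension of $\mathbf{M}$ on which $\rho\otimes\sigma$ is block-diagonal, exactly as in the proof of Proposition~\ref{appPro:CorrDecomp}), I obtain $R(\ketbra{\psi},\mathbf{M})\le\sum_j r_j R(\ketbra{\psi},\mathbf{N}^j)=0$, and $R\ge 0$ closes the argument.

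The one delicate point is the appeal to Lemma~\ref{applem:extdecom}. Working with the canonical extension of $\mathbf{M}$ itself does not suffice, because there $\bra{\psi\otimes 1}P_i\ket{\psi\otimes 1}=\mu_i$ is generically neither $0$ nor $1$, so $\ket{\psi\otimes 1}$ is not a $+1$ eigenstate of any $P_i$ and the extended pure state is genuinely coherent; the lemma is precisely what splits $\mathbf{M}$ into pieces in each of which the $\ket{\psi}$-component is routed into a single element's $+1$ eigenspace. If a self-contained substitute is preferred, one can instead take the explicit decomposition $\mathbf{M}=\sum_{i:\,\mu_i>0}\mu_i\,\mathbf{N}^{(i)}$ with $N^{(i)}_i=\ketbra{\psi}+\Pi^{\perp}M_i\Pi^{\perp}$ and $N^{(i)}_{i'}=\Pi^{\perp}M_{i'}\Pi^{\perp}$ for $i'\neq i$, where $\Pi^{\perp}=\mathbf{1}-\ketbra{\psi}$; using $[\ketbra{\psi},M_i]=0$ one checks each $\mathbf{N}^{(i)}$ is a POVM with $N^{(i)}_i\ket{\psi}=\ket{\psi}$, and the same extension argument then applies (indeed the associated Naimark extension of $\mathbf{M}$ has $\ketbra{\psi}\otimes\sigma$ fixed by $\Delta_{\mathbf{P}}$, giving zero randomness directly, without invoking $\mathbf{M}$-convexity).
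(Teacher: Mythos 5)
Your proposal is correct, and its main line is essentially the paper's own route: the direction $R=0\Rightarrow$ common eigenstate via Lemma~\ref{applem:necenonran} exactly as in the paper, and the converse via Lemma~\ref{applem:extdecom} — the paper then closes by citing $R\le R^{cf}$ together with Corollary~\ref{appco:iffnonrandom}, whereas you unpack what that citation means (each $\mathbf{N}^j$ in the decomposition has zero randomness because $\ket{\psi\otimes 1}$ is a $+1$ eigenvector of one projector of its canonical extension, and POVM-convexity of the minimization in Eq.~\eqref{eq:minR} then kills $R(\ketbra{\psi},\mathbf{M})$). The one genuinely different ingredient is your explicit decomposition $\mathbf{M}=\sum_{i:\mu_i>0}\mu_i\,\mathbf{N}^{(i)}$ built from $\Pi^{\perp}=\mathbf{1}-\ketbra{\psi}$: it checks out (positivity, normalization, and the reconstruction $\sum_i\mu_iN^{(i)}_{i'}=\mu_{i'}\ketbra{\psi}+\Pi^{\perp}M_{i'}\Pi^{\perp}=M_{i'}$ all follow from $[\ketbra{\psi},M_{i'}]=0$), and it buys a self-contained sufficiency argument that bypasses Lemma~\ref{applem:extdecom}'s eigendecomposition-and-grouping machinery entirely, at the cost of working only for a single common eigenstate rather than the family of pairwise orthogonal ones that Lemma~\ref{applem:extdecom} handles and that Corollary~\ref{appco:iffnonrandom} needs.
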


\begin{proof}
From Lemma \ref{applem:necenonran}, the sufficiency is apparent. The two randomness functions have the relation $ R(\rho,\mathbf{M})\le R^{cf}(\rho,\mathbf{M})$, therefore, the necessity can be directly obtained by Corollary \ref{appco:iffnonrandom} and Lemma \ref{applem:extdecom}.
\end{proof}

\subsection{Lower bound for intrinsic randomness of SIC measurement}
\begin{theorem}\label{appth:lbousic}
For a SIC measurement $\mathbf{M}$, a lower bound of intrinsic randomness is given by,
\begin{equation}
R(\mathbf{M})>\log(\frac{d+1}{2}),
\end{equation}
where $d$ is the dimension of the corresponding space.
\end{theorem}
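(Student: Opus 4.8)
The plan is to reduce the statement to a lower bound for the relative-entropy randomness $R_q$ and then exploit the $2$-design property of SIC measurements. Since a SIC measurement consists of linearly independent rank-$1$ elements, Lemma~\ref{applem:linIndExt} shows that it is extremal, so by Theorem~\ref{appthm:ExtRCan} the randomness $R(\rho,\mathbf{M})$ equals its value computed on \emph{any} generalized Naimark extension; moreover Eq.~\eqref{appeq:diffdisc} gives $R(\rho,\mathbf{M})\ge R_q(\rho,\mathbf{M})$, so it suffices to show $R_q(\rho,\mathbf{M})>\log\tfrac{d+1}{2}$ for every state $\rho$ on $\mathbb{C}^{d}$ with $d\ge2$. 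First I would use the minimal rank-$1$ Naimark dilation of $\mathbf{M}$: a rank-$1$ PVM $\mathbf{P}=\{\ketbra{e_i}\}_{i=1}^{d^{2}}$ on a $d^{2}$-dimensional space whose ancillary sector is one-dimensional, so that the ancillary state $\ketbra{1}$ is pure. For this extension $\Delta_{\mathbf{P}}(\rho\otimes\ketbra{1})$ is diagonal in the basis $\{\ket{e_i}\}$, and the consistency condition $M_i=\tr_Q[P_i(\mathbf{1}^{A}\otimes\ketbra{1}^{Q})]$ forces the eigenvalues to be $q_i=\bra{e_i}(\rho\otimes\ketbra{1})\ket{e_i}=\tr(M_i\rho)=\bra{\phi_i}\rho\ket{\phi_i}/d$. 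Then the chain in Eq.~\eqref{appeq:Rqpvm} collapses to the clean identity $R_q(\rho,\mathbf{M})=H(\{q_i\})-S(\rho)$.

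The arithmetic core is to bound $H(\{q_i\})$ from below via the second moment of $\{q_i\}$. Because a SIC measurement is a complex projective $2$-design, $\sum_i(\ketbra{\phi_i})^{\otimes2}=\tfrac{2d}{d+1}\,\Pi_{\mathrm{sym}}$ (this can also be obtained directly from $\abs{\braket{\phi_j}{\phi_i}}^{2}=\tfrac1{d+1}$ by expanding in a spectral basis of $\rho$), which yields $\sum_i q_i^{2}=\tfrac1{d^{2}}\sum_i(\bra{\phi_i}\rho\ket{\phi_i})^{2}=\tfrac{1+\tr(\rho^{2})}{d(d+1)}$. Since the Shannon entropy dominates the order-$2$ Rényi (collision) entropy, $H(\{q_i\})\ge-\log\sum_i q_i^{2}=\log\!\big(d(d+1)\big)-\log\!\big(1+\tr(\rho^{2})\big)$, so
\[
R_q(\rho,\mathbf{M})\;\ge\;\log\!\big(d(d+1)\big)-\log\!\big(1+\tr(\rho^{2})\big)-S(\rho).
\]
To finish I would observe that $g(\rho):=\log\!\big(1+\tr(\rho^{2})\big)+S(\rho)\le\log2+\log d=\log(2d)$, with equality forcing $\rho$ to be at once pure ($\tr(\rho^{2})=1$) and maximally mixed ($S(\rho)=\log d$), which is impossible for $d\ge2$; as $g$ is continuous on the compact set of states, $\max_\rho g(\rho)<\log(2d)$. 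Substituting back, $R_q(\rho,\mathbf{M})\ge\log\!\big(d(d+1)\big)-\max_\rho g(\rho)>\log\!\big(d(d+1)\big)-\log(2d)=\log\tfrac{d+1}{2}$ uniformly in $\rho$, and hence $R(\mathbf{M})=\min_\rho R(\rho,\mathbf{M})\ge\min_\rho R_q(\rho,\mathbf{M})>\log\tfrac{d+1}{2}$.

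The entropy estimates are essentially routine; the main obstacle is setting up the reduction so that one is entitled to use the simple formula $R_q(\rho,\mathbf{M})=H(\{q_i\})-S(\rho)$. This rests on three ingredients that have to be invoked with care: extremality of a SIC measurement (Lemma~\ref{applem:linIndExt}); the extension-independence of the randomness for extremal POVMs (Theorem~\ref{appthm:ExtRCan}), which is what allows replacing the canonical dilation of Definition~\ref{appdef:CanExt} by the more convenient minimal rank-$1$ one; and the inequality $R\ge R_q$ from Eq.~\eqref{appeq:diffdisc}. A secondary delicate point is that the \emph{strict} inequality in the statement is exactly the impossibility of saturating $\tr(\rho^{2})\le1$ and $S(\rho)\le\log d$ simultaneously in dimension $d\ge2$, promoted to a strict bound on $\min_\rho$ by compactness of the state space.
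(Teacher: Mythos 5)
Your proof is correct and rests on the same core estimate as the paper's: the collision-entropy bound $H(\{q_i\})\ge\log\frac{d(d+1)}{1+\tr(\rho^2)}$ (which the paper imports from Rastegin's SIC uncertainty relation as Eq.~\eqref{appeq:lowbousic}, and which you re-derive from the $2$-design property), followed by the observation that $\log(1+\tr(\rho^2))+S(\rho)<\log(2d)$ because purity and maximal mixedness cannot be saturated simultaneously when $d\ge 2$. The scaffolding differs in three ways worth recording. First, the paper treats $R_c$ and $R_q$ as two separate cases and, for $R_c$, obtains the stronger bound $\log(d(d+1)/2)$ directly from the pure-state form of Eq.~\eqref{appeq:Rcpvm}; you instead reduce everything to $R_q$ via $R_c\ge R_q$ from Eq.~\eqref{appeq:diffdisc}, which is valid and shorter but discards that stronger $R_c$ bound (harmless here, since the theorem only claims $\log\frac{d+1}{2}$). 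Second, the paper works with the canonical extension, whose projectors are not rank-$1$, and therefore needs the intermediate inequality $S(\Delta_{\mathbf{P}}(\rho\otimes\ketbra{1}))\ge H(\{p_i\})$; by invoking extremality (Lemma~\ref{applem:linIndExt}) and extension-independence (Theorem~\ref{appthm:ExtRCan}) to pass to the minimal rank-$1$ dilation, you get the clean identity $R_q=H(\{q_i\})-S(\rho)$ and skip that step entirely. Third, you make explicit the strictness and uniformity of the final inequality (impossibility of simultaneous saturation, promoted to the minimum over $\rho$ by compactness), which the paper asserts without comment. All three ingredients you flag as delicate are indeed available in the paper and are invoked correctly, so there is no gap.
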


\begin{proof}

For SIC POVM $\mathbf{M}=\{\ketbra{\phi_{1}}/d,\cdots,\ketbra{\phi_{d^2}}/d\}$, let $\{\mathbf{P},\ketbra{1}\}$ be a canonical Naimark PVM.

(i) When $R=R_c$, the minimum randomness is achieved by some pure states from the expression Eq.~\eqref{appeq:Rcpvm}. The randomness for pure state $\ket{\psi}$ is equal to  Shannon entropy $ H(\{p_i\})$ with $p_i= \abs{\braket{\phi_{i}}{\psi}}^2/d$. For any state $\rho$ and $ p_i=\tr(\rho \ketbra{\phi_{1}}/d)$, the entropy of the distribution has a state-independent lower bound \cite{rastegin2013uncertainty},
\begin{equation}\label{appeq:lowbousic}
  H(\{p_i\})\ge \log(\frac{d(d+1)}{\tr(\rho^2)+1}).
  \end{equation}  
As a result,
\begin{equation}
R_{c}(\mathbf{M})\ge \log(\frac{d(d+1)}{2})> \log(\frac{d+1}{2}).
\end{equation}

(ii) When $R=R_q(\rho\otimes\ketbra{1},\mathbf{P})=S(\Delta_{\mathbf{P}}(\rho\otimes\ketbra{1}))-S(\rho)$, assume $ P_i(\rho\otimes\ketbra{1})P_i=\sum_{k} q_{ik}\ketbra{u_{ik}}$ is a spectral decomposition and $\sum_{k} q_{ik}=\tr(\rho \ketbra{\phi_{1}}/d)=p_i $. The entropy of probability distribution $ \{q_{ik}/p_i\} $ is 
\begin{equation}
0\le -\sum_k\frac{q_{ik}}{p_i}\log(\frac{q_{ik}}{p_i})=-\frac{1}{p_i}[\sum_k (q_{ik}\log q_{ik})- p_i\log p_i ].
\end{equation}
Hence, $-\sum_k q_{ik}\log q_{ik}\ge -p_i\log p_i$. Then the term $ S(\Delta_{\mathbf{P}}(\rho\otimes\ketbra{1}))=S[\sum_iP_i(\rho\otimes\ketbra{1})P_i]\ge H(\{p_i\})$ and equality holds only if $\rho$ is pure. Combine this with Eq.~\eqref{appeq:lowbousic} and there is 
\begin{equation}
R_q(\rho,\mathbf{M})\ge H(\{p_i\})-S(\rho)\ge \log(\frac{d(d+1)}{\tr(\rho^2)+1})-S(\rho)>\log(\frac{d+1}{2}).
 \end{equation} 
Then, $R_{q}(\mathbf{M})> \log((d+1)/2)$.
\end{proof}

Remark: 
When $R=R_q$, to evaluate how tight our lower bound is, let us examine the special case of the maximally mixed input state. The amount of randomness from a general POVM has an upper bound
\begin{equation}
R_q(\mathbf{1}/d,\mathbf{M})=S(\Delta_{\mathbf{P}}(\mathbf{1}/d\otimes\ketbra{1}))-S(\rho)\le \log d^2-\log d=\log d.
\end{equation}
This indicates $R_{q}(\mathbf{M})\le \log d$. The difference between our lower bound $\log((d+1)/2)$ and this upper bound is less than $1$ bit.

\subsection{Numerical evaluation}
Beyond special cases such as SIC measurements, we present a numerical approach to evaluate intrinsic randomness under a general POVM. For the convex-roof-type randomness measure defined by the optimization in Eq.~\eqref{appeq:Rexdecomp}, the objective function is biconvex in its arguments, the probability distribution $\{r_j\}$ and decomposed extremal POVMs $\{\mathbf{N}^j\}$. Also, the constraints form a convex set. For such optimization problems, a global optimal value can be obtained in principle~\cite{gorski2007biconvex}. Nevertheless, there are two difficulties: (1) the characterization of the set of extremal POVMs is relatively complex~\cite{d2005classical}, (2) the dimension of the probability distribution $\{r_j\}$ is not fixed.

To tackle the first problem, we can remove the constraint that $\{\mathbf{N}^j\}$ are all extremal. To see why this is the case, suppose the optimal value to Eq.~\eqref{appeq:Rexdecomp} is given by the tuple $\{r_j^{*},\mathbf{N}^{*j}\}$. In Theorem~\ref{appthm:POVMrandom}, we show that for any probability distribution $\{\tilde{r}_k\}$ and POVMs $\{\tilde{\mathbf{N}}^k\}$ such that $\mathbf{M}=\sum_k\tilde{r}_k\tilde{\mathbf{N}}^k$, the following inequality holds,
\begin{equation}
  \sum_k\tilde{r}_k R\left(\rho\otimes\ketbra{0},\tilde{\mathbf{P}^k}\right)\geq\sum_jr_j^{*}R\left(\rho\otimes\ketbra{0},\mathbf{P}^{*j}\right) =\sum_jr_j^{*}R\left(\rho,\mathbf{N}^{*j}\right),
\end{equation}
where $\tilde{\mathbf{P}^k},\mathbf{P}^{*j}$ represent the canonical Naimark extension of $\tilde{\mathbf{N}}^k,\mathbf{N}^j$, respectively. With a slight abuse of notation, we also write $R\left(\rho\otimes\ketbra{0},\tilde{\mathbf{P}^k}\right)$ as $R\left(\rho,\tilde{\mathbf{N}^k}\right)$ for a general POVM $\tilde{\mathbf{N}^k}$. Therefore, the optimization in Eq.~\eqref{appeq:Rexdecomp} is equivalent to the following problem,
\begin{equation}\label{appEq:EquivOpt}
	\begin{split}
R^{cf}(\rho,\mathbf{M}) &= \min_{\{\mathbf{N}^j,r_j\}} \sum_j r_j R(\rho, \mathbf{N}^{j}), \\
\textrm{s.t.} \quad   \mathbf{M}&=\sum_{j} r_j \mathbf{N}^j, \\
\mathbf{N}^j&\in \mathcal{P} ,\forall j, \\
r_j&\geq0,\forall j, \\
\sum_jr_j&=1,
	\end{split}
\end{equation}
where $ \mathcal{P}$ is the set of POVMs.

To tackle the second problem that the dimension of $\{r_j\}$ is not fixed, we can apply Carath\'eodory's theorem for convex hulls. Suppose $\mathbf{M}$ is a POVM acting on a $d$-dimensional Hilbert space with $m$ elements. After considering the positive-semidefinite property and completeness, it can be parameterized by $(md^2-1)$ real parameters. Then, according to Carath\'eodory's theorem, the optimal value to Eq.~\eqref{appEq:EquivOpt} can be attained by a probability distribution $\{r_j\}$ with at most $md^2$ terms. Consequently, we can restrict the dimension of $\{r_j\}$ to be $md^2$ in Eq.~\eqref{appEq:EquivOpt} without loss of generality.

With the above results, the global optimum to Eq.~\eqref{appEq:EquivOpt} can now be efficiently solved numerically. In particular, for a fixed probability distribution $\{r_j\}$, the problem becomes a semi-definite programming optimization in the arguments $\{\textbf{N}^j\}$. 
\end{appendix}


\begin{thebibliography}{47}%
\makeatletter
\providecommand \@ifxundefined [1]{%
 \@ifx{#1\undefined}
}%
\providecommand \@ifnum [1]{%
 \ifnum #1\expandafter \@firstoftwo
 \else \expandafter \@secondoftwo
 \fi
}%
\providecommand \@ifx [1]{%
 \ifx #1\expandafter \@firstoftwo
 \else \expandafter \@secondoftwo
 \fi
}%
\providecommand \natexlab [1]{#1}%
\providecommand \enquote  [1]{``#1''}%
\providecommand \bibnamefont  [1]{#1}%
\providecommand \bibfnamefont [1]{#1}%
\providecommand \citenamefont [1]{#1}%
\providecommand \href@noop [0]{\@secondoftwo}%
\providecommand \href [0]{\begingroup \@sanitize@url \@href}%
\providecommand \@href[1]{\@@startlink{#1}\@@href}%
\providecommand \@@href[1]{\endgroup#1\@@endlink}%
\providecommand \@sanitize@url [0]{\catcode `\\12\catcode `\$12\catcode
  `\&12\catcode `\#12\catcode `\^12\catcode `\_12\catcode `\%12\relax}%
\providecommand \@@startlink[1]{}%
\providecommand \@@endlink[0]{}%
\providecommand \url  [0]{\begingroup\@sanitize@url \@url }%
\providecommand \@url [1]{\endgroup\@href {#1}{\urlprefix }}%
\providecommand \urlprefix  [0]{URL }%
\providecommand \Eprint [0]{\href }%
\providecommand \doibase [0]{http://dx.doi.org/}%
\providecommand \selectlanguage [0]{\@gobble}%
\providecommand \bibinfo  [0]{\@secondoftwo}%
\providecommand \bibfield  [0]{\@secondoftwo}%
\providecommand \translation [1]{[#1]}%
\providecommand \BibitemOpen [0]{}%
\providecommand \bibitemStop [0]{}%
\providecommand \bibitemNoStop [0]{.\EOS\space}%
\providecommand \EOS [0]{\spacefactor3000\relax}%
\providecommand \BibitemShut  [1]{\csname bibitem#1\endcsname}%
\let\auto@bib@innerbib\@empty
\bibitem [{\citenamefont {Born}(1926)}]{born1926quantenmechanik}%
  \BibitemOpen
  \bibfield  {author} {\bibinfo {author} {\bibfnamefont {M.}~\bibnamefont
  {Born}},\ }\href {https://doi.org/10.1007/BF01397477} {\bibfield  {journal}
  {\bibinfo  {journal} {Z. Phys}\ }\textbf {\bibinfo {volume} {37}},\ \bibinfo
  {pages} {863} (\bibinfo {year} {1926})}\BibitemShut {NoStop}%
\bibitem [{\citenamefont {Ma}\ \emph {et~al.}(2016)\citenamefont {Ma},
  \citenamefont {Yuan}, \citenamefont {Cao}, \citenamefont {Qi},\ and\
  \citenamefont {Zhang}}]{ma2016quantum}%
  \BibitemOpen
  \bibfield  {author} {\bibinfo {author} {\bibfnamefont {X.}~\bibnamefont
  {Ma}}, \bibinfo {author} {\bibfnamefont {X.}~\bibnamefont {Yuan}}, \bibinfo
  {author} {\bibfnamefont {Z.}~\bibnamefont {Cao}}, \bibinfo {author}
  {\bibfnamefont {B.}~\bibnamefont {Qi}}, \ and\ \bibinfo {author}
  {\bibfnamefont {Z.}~\bibnamefont {Zhang}},\ }\href {\doibase
  10.1038/npjqi.2016.21} {\bibfield  {journal} {\bibinfo  {journal} {npj
  Quantum Inf.}\ }\textbf {\bibinfo {volume} {2}},\ \bibinfo {pages} {16021}
  (\bibinfo {year} {2016})}\BibitemShut {NoStop}%
\bibitem [{\citenamefont {Herrero-Collantes}\ and\ \citenamefont
  {Garcia-Escartin}(2017)}]{herrero2017quantum}%
  \BibitemOpen
  \bibfield  {author} {\bibinfo {author} {\bibfnamefont {M.}~\bibnamefont
  {Herrero-Collantes}}\ and\ \bibinfo {author} {\bibfnamefont {J.~C.}\
  \bibnamefont {Garcia-Escartin}},\ }\href {\doibase
  10.1103/RevModPhys.89.015004} {\bibfield  {journal} {\bibinfo  {journal}
  {Rev. Mod. Phys.}\ }\textbf {\bibinfo {volume} {89}},\ \bibinfo {pages}
  {015004} (\bibinfo {year} {2017})}\BibitemShut {NoStop}%
\bibitem [{\citenamefont {Cramer}\ \emph {et~al.}(2010)\citenamefont {Cramer},
  \citenamefont {Plenio}, \citenamefont {Flammia}, \citenamefont {Somma},
  \citenamefont {Gross}, \citenamefont {Bartlett}, \citenamefont
  {Landon-Cardinal}, \citenamefont {Poulin},\ and\ \citenamefont
  {Liu}}]{cramer2010efficient}%
  \BibitemOpen
  \bibfield  {author} {\bibinfo {author} {\bibfnamefont {M.}~\bibnamefont
  {Cramer}}, \bibinfo {author} {\bibfnamefont {M.~B.}\ \bibnamefont {Plenio}},
  \bibinfo {author} {\bibfnamefont {S.~T.}\ \bibnamefont {Flammia}}, \bibinfo
  {author} {\bibfnamefont {R.}~\bibnamefont {Somma}}, \bibinfo {author}
  {\bibfnamefont {D.}~\bibnamefont {Gross}}, \bibinfo {author} {\bibfnamefont
  {S.~D.}\ \bibnamefont {Bartlett}}, \bibinfo {author} {\bibfnamefont
  {O.}~\bibnamefont {Landon-Cardinal}}, \bibinfo {author} {\bibfnamefont
  {D.}~\bibnamefont {Poulin}}, \ and\ \bibinfo {author} {\bibfnamefont {Y.-K.}\
  \bibnamefont {Liu}},\ }\href {\doibase 10.1038/ncomms1147} {\bibfield
  {journal} {\bibinfo  {journal} {Nature Commun.}\ }\textbf {\bibinfo {volume}
  {1}},\ \bibinfo {pages} {149} (\bibinfo {year} {2010})}\BibitemShut {NoStop}%
\bibitem [{\citenamefont {Haah}\ \emph {et~al.}(2017)\citenamefont {Haah},
  \citenamefont {Harrow}, \citenamefont {Ji}, \citenamefont {Wu},\ and\
  \citenamefont {Yu}}]{haah2017sample}%
  \BibitemOpen
  \bibfield  {author} {\bibinfo {author} {\bibfnamefont {J.}~\bibnamefont
  {Haah}}, \bibinfo {author} {\bibfnamefont {A.~W.}\ \bibnamefont {Harrow}},
  \bibinfo {author} {\bibfnamefont {Z.}~\bibnamefont {Ji}}, \bibinfo {author}
  {\bibfnamefont {X.}~\bibnamefont {Wu}}, \ and\ \bibinfo {author}
  {\bibfnamefont {N.}~\bibnamefont {Yu}},\ }\href {\doibase
  10.1109/TIT.2017.2719044} {\bibfield  {journal} {\bibinfo  {journal} {IEEE
  Trans. Inf. Theory}\ }\textbf {\bibinfo {volume} {63}},\ \bibinfo {pages}
  {5628} (\bibinfo {year} {2017})}\BibitemShut {NoStop}%
\bibitem [{\citenamefont {D'Ariano}\ \emph {et~al.}(2004)\citenamefont
  {D'Ariano}, \citenamefont {Maccone},\ and\ \citenamefont
  {Presti}}]{d2004quantum}%
  \BibitemOpen
  \bibfield  {author} {\bibinfo {author} {\bibfnamefont {G.~M.}\ \bibnamefont
  {D'Ariano}}, \bibinfo {author} {\bibfnamefont {L.}~\bibnamefont {Maccone}}, \
  and\ \bibinfo {author} {\bibfnamefont {P.~L.}\ \bibnamefont {Presti}},\
  }\href {\doibase 10.1103/PhysRevLett.93.250407} {\bibfield  {journal}
  {\bibinfo  {journal} {Phys. Rev. Lett.}\ }\textbf {\bibinfo {volume} {93}},\
  \bibinfo {pages} {250407} (\bibinfo {year} {2004})}\BibitemShut {NoStop}%
\bibitem [{\citenamefont {Lundeen}\ \emph {et~al.}(2009)\citenamefont
  {Lundeen}, \citenamefont {Feito}, \citenamefont {Coldenstrodt-Ronge},
  \citenamefont {Pregnell}, \citenamefont {Silberhorn}, \citenamefont {Ralph},
  \citenamefont {Eisert}, \citenamefont {Plenio},\ and\ \citenamefont
  {Walmsley}}]{lundeen2009tomography}%
  \BibitemOpen
  \bibfield  {author} {\bibinfo {author} {\bibfnamefont {J.}~\bibnamefont
  {Lundeen}}, \bibinfo {author} {\bibfnamefont {A.}~\bibnamefont {Feito}},
  \bibinfo {author} {\bibfnamefont {H.}~\bibnamefont {Coldenstrodt-Ronge}},
  \bibinfo {author} {\bibfnamefont {K.}~\bibnamefont {Pregnell}}, \bibinfo
  {author} {\bibfnamefont {C.}~\bibnamefont {Silberhorn}}, \bibinfo {author}
  {\bibfnamefont {T.}~\bibnamefont {Ralph}}, \bibinfo {author} {\bibfnamefont
  {J.}~\bibnamefont {Eisert}}, \bibinfo {author} {\bibfnamefont
  {M.}~\bibnamefont {Plenio}}, \ and\ \bibinfo {author} {\bibfnamefont
  {I.}~\bibnamefont {Walmsley}},\ }\href {https://doi.org/10.1038/nphys1133}
  {\bibfield  {journal} {\bibinfo  {journal} {Nat. Phys.}\ }\textbf {\bibinfo
  {volume} {5}},\ \bibinfo {pages} {27} (\bibinfo {year} {2009})}\BibitemShut
  {NoStop}%
\bibitem [{\citenamefont {Cao}\ \emph {et~al.}(2016)\citenamefont {Cao},
  \citenamefont {Zhou}, \citenamefont {Yuan},\ and\ \citenamefont
  {Ma}}]{Cao2016Source}%
  \BibitemOpen
  \bibfield  {author} {\bibinfo {author} {\bibfnamefont {Z.}~\bibnamefont
  {Cao}}, \bibinfo {author} {\bibfnamefont {H.}~\bibnamefont {Zhou}}, \bibinfo
  {author} {\bibfnamefont {X.}~\bibnamefont {Yuan}}, \ and\ \bibinfo {author}
  {\bibfnamefont {X.}~\bibnamefont {Ma}},\ }\href {\doibase
  10.1103/PhysRevX.6.011020} {\bibfield  {journal} {\bibinfo  {journal} {Phys.
  Rev. X}\ }\textbf {\bibinfo {volume} {6}},\ \bibinfo {pages} {011020}
  (\bibinfo {year} {2016})}\BibitemShut {NoStop}%
\bibitem [{\citenamefont {Marangon}\ \emph {et~al.}(2017)\citenamefont
  {Marangon}, \citenamefont {Vallone},\ and\ \citenamefont
  {Villoresi}}]{Marangon2017Source}%
  \BibitemOpen
  \bibfield  {author} {\bibinfo {author} {\bibfnamefont {D.~G.}\ \bibnamefont
  {Marangon}}, \bibinfo {author} {\bibfnamefont {G.}~\bibnamefont {Vallone}}, \
  and\ \bibinfo {author} {\bibfnamefont {P.}~\bibnamefont {Villoresi}},\ }\href
  {\doibase 10.1103/PhysRevLett.118.060503} {\bibfield  {journal} {\bibinfo
  {journal} {Phys. Rev. Lett.}\ }\textbf {\bibinfo {volume} {118}},\ \bibinfo
  {pages} {060503} (\bibinfo {year} {2017})}\BibitemShut {NoStop}%
\bibitem [{\citenamefont {Zurek}(2009)}]{zurek2009quantum}%
  \BibitemOpen
  \bibfield  {author} {\bibinfo {author} {\bibfnamefont {W.~H.}\ \bibnamefont
  {Zurek}},\ }\href {https://doi.org/10.1038/nphys1202} {\bibfield  {journal}
  {\bibinfo  {journal} {Nat. Phys.}\ }\textbf {\bibinfo {volume} {5}},\
  \bibinfo {pages} {181} (\bibinfo {year} {2009})}\BibitemShut {NoStop}%
\bibitem [{\citenamefont {{\AA}berg}(2006)}]{Aberg2006}%
  \BibitemOpen
  \bibfield  {author} {\bibinfo {author} {\bibfnamefont {J.}~\bibnamefont
  {{\AA}berg}},\ }\href {https://arxiv.org/abs/quant-ph/0612146} {\bibfield
  {journal} {\bibinfo  {journal} {arXiv preprint quant-ph/0612146}\ } (\bibinfo
  {year} {2006})}\BibitemShut {NoStop}%
\bibitem [{\citenamefont {Baumgratz}\ \emph {et~al.}(2014)\citenamefont
  {Baumgratz}, \citenamefont {Cramer},\ and\ \citenamefont
  {Plenio}}]{baumgratz2014quantifying}%
  \BibitemOpen
  \bibfield  {author} {\bibinfo {author} {\bibfnamefont {T.}~\bibnamefont
  {Baumgratz}}, \bibinfo {author} {\bibfnamefont {M.}~\bibnamefont {Cramer}}, \
  and\ \bibinfo {author} {\bibfnamefont {M.~B.}\ \bibnamefont {Plenio}},\
  }\href {\doibase 10.1103/PhysRevLett.113.140401} {\bibfield  {journal}
  {\bibinfo  {journal} {Phys. Rev. Lett.}\ }\textbf {\bibinfo {volume} {113}},\
  \bibinfo {pages} {140401} (\bibinfo {year} {2014})}\BibitemShut {NoStop}%
\bibitem [{\citenamefont {Yuan}\ \emph {et~al.}(2015)\citenamefont {Yuan},
  \citenamefont {Zhou}, \citenamefont {Cao},\ and\ \citenamefont
  {Ma}}]{yuan2015intrinsic}%
  \BibitemOpen
  \bibfield  {author} {\bibinfo {author} {\bibfnamefont {X.}~\bibnamefont
  {Yuan}}, \bibinfo {author} {\bibfnamefont {H.}~\bibnamefont {Zhou}}, \bibinfo
  {author} {\bibfnamefont {Z.}~\bibnamefont {Cao}}, \ and\ \bibinfo {author}
  {\bibfnamefont {X.}~\bibnamefont {Ma}},\ }\href {\doibase
  10.1103/PhysRevA.92.022124} {\bibfield  {journal} {\bibinfo  {journal} {Phys.
  Rev. A}\ }\textbf {\bibinfo {volume} {92}},\ \bibinfo {pages} {022124}
  (\bibinfo {year} {2015})}\BibitemShut {NoStop}%
\bibitem [{\citenamefont {Hayashi}\ and\ \citenamefont
  {Zhu}(2018)}]{Hayashi2018secure}%
  \BibitemOpen
  \bibfield  {author} {\bibinfo {author} {\bibfnamefont {M.}~\bibnamefont
  {Hayashi}}\ and\ \bibinfo {author} {\bibfnamefont {H.}~\bibnamefont {Zhu}},\
  }\href {\doibase 10.1103/PhysRevA.97.012302} {\bibfield  {journal} {\bibinfo
  {journal} {Phys. Rev. A}\ }\textbf {\bibinfo {volume} {97}},\ \bibinfo
  {pages} {012302} (\bibinfo {year} {2018})}\BibitemShut {NoStop}%
\bibitem [{\citenamefont {Yuan}\ \emph {et~al.}(2019)\citenamefont {Yuan},
  \citenamefont {Zhao}, \citenamefont {Girolami},\ and\ \citenamefont
  {Ma}}]{yuan2019quantum}%
  \BibitemOpen
  \bibfield  {author} {\bibinfo {author} {\bibfnamefont {X.}~\bibnamefont
  {Yuan}}, \bibinfo {author} {\bibfnamefont {Q.}~\bibnamefont {Zhao}}, \bibinfo
  {author} {\bibfnamefont {D.}~\bibnamefont {Girolami}}, \ and\ \bibinfo
  {author} {\bibfnamefont {X.}~\bibnamefont {Ma}},\ }\href {\doibase
  https://doi.org/10.1002/qute.201900053} {\bibfield  {journal} {\bibinfo
  {journal} {Adv. Quantum Technol.}\ }\textbf {\bibinfo {volume} {2}},\
  \bibinfo {pages} {1900053} (\bibinfo {year} {2019})}\BibitemShut {NoStop}%
\bibitem [{\citenamefont {Du\ifmmode~\check{s}\else \v{s}\fi{}ek}\ and\
  \citenamefont {Bu\ifmmode~\check{z}\else
  \v{z}\fi{}ek}(2002)}]{duvsek2002quantum}%
  \BibitemOpen
  \bibfield  {author} {\bibinfo {author} {\bibfnamefont {M.}~\bibnamefont
  {Du\ifmmode~\check{s}\else \v{s}\fi{}ek}}\ and\ \bibinfo {author}
  {\bibfnamefont {V.}~\bibnamefont {Bu\ifmmode~\check{z}\else \v{z}\fi{}ek}},\
  }\href {\doibase 10.1103/PhysRevA.66.022112} {\bibfield  {journal} {\bibinfo
  {journal} {Phys. Rev. A}\ }\textbf {\bibinfo {volume} {66}},\ \bibinfo
  {pages} {022112} (\bibinfo {year} {2002})}\BibitemShut {NoStop}%
\bibitem [{\citenamefont {Peres}(1990)}]{peres1990neumark}%
  \BibitemOpen
  \bibfield  {author} {\bibinfo {author} {\bibfnamefont {A.}~\bibnamefont
  {Peres}},\ }\href {https://doi.org/10.1007/BF01883517} {\bibfield  {journal}
  {\bibinfo  {journal} {Found. Phys.}\ }\textbf {\bibinfo {volume} {20}},\
  \bibinfo {pages} {1441} (\bibinfo {year} {1990})}\BibitemShut {NoStop}%
\bibitem [{\citenamefont {Biggerstaff}\ \emph {et~al.}(2009)\citenamefont
  {Biggerstaff}, \citenamefont {Kaltenbaek}, \citenamefont {Hamel},
  \citenamefont {Weihs}, \citenamefont {Rudolph},\ and\ \citenamefont
  {Resch}}]{biggerstaff2009cluster}%
  \BibitemOpen
  \bibfield  {author} {\bibinfo {author} {\bibfnamefont {D.~N.}\ \bibnamefont
  {Biggerstaff}}, \bibinfo {author} {\bibfnamefont {R.}~\bibnamefont
  {Kaltenbaek}}, \bibinfo {author} {\bibfnamefont {D.~R.}\ \bibnamefont
  {Hamel}}, \bibinfo {author} {\bibfnamefont {G.}~\bibnamefont {Weihs}},
  \bibinfo {author} {\bibfnamefont {T.}~\bibnamefont {Rudolph}}, \ and\
  \bibinfo {author} {\bibfnamefont {K.~J.}\ \bibnamefont {Resch}},\ }\href
  {\doibase 10.1103/PhysRevLett.103.240504} {\bibfield  {journal} {\bibinfo
  {journal} {Phys. Rev. Lett.}\ }\textbf {\bibinfo {volume} {103}},\ \bibinfo
  {pages} {240504} (\bibinfo {year} {2009})}\BibitemShut {NoStop}%
\bibitem [{\citenamefont {Law}\ \emph {et~al.}(2014)\citenamefont {Law},
  \citenamefont {Bancal}, \citenamefont {Scarani} \emph
  {et~al.}}]{law2014quantum}%
  \BibitemOpen
  \bibfield  {author} {\bibinfo {author} {\bibfnamefont {Y.~Z.}\ \bibnamefont
  {Law}}, \bibinfo {author} {\bibfnamefont {J.-D.}\ \bibnamefont {Bancal}},
  \bibinfo {author} {\bibfnamefont {V.}~\bibnamefont {Scarani}},  \emph
  {et~al.},\ }\href {\doibase 10.1088/1751-8113/47/42/424028} {\bibfield
  {journal} {\bibinfo  {journal} {J. Phys. A: Math. Theor.}\ }\textbf {\bibinfo
  {volume} {47}},\ \bibinfo {pages} {424028} (\bibinfo {year}
  {2014})}\BibitemShut {NoStop}%
\bibitem [{\citenamefont {Cao}\ \emph {et~al.}(2015)\citenamefont {Cao},
  \citenamefont {Zhou},\ and\ \citenamefont {Ma}}]{cao2015loss}%
  \BibitemOpen
  \bibfield  {author} {\bibinfo {author} {\bibfnamefont {Z.}~\bibnamefont
  {Cao}}, \bibinfo {author} {\bibfnamefont {H.}~\bibnamefont {Zhou}}, \ and\
  \bibinfo {author} {\bibfnamefont {X.}~\bibnamefont {Ma}},\ }\href
  {http://dx.doi.org/10.1088/1367-2630/17/12/125011} {\bibfield  {journal}
  {\bibinfo  {journal} {New J. Phys.}\ }\textbf {\bibinfo {volume} {17}},\
  \bibinfo {pages} {125011} (\bibinfo {year} {2015})}\BibitemShut {NoStop}%
\bibitem [{\citenamefont {Bischof}\ \emph {et~al.}(2017)\citenamefont
  {Bischof}, \citenamefont {Kampermann},\ and\ \citenamefont
  {Bru\ss{}}}]{bischof2017measurement}%
  \BibitemOpen
  \bibfield  {author} {\bibinfo {author} {\bibfnamefont {F.}~\bibnamefont
  {Bischof}}, \bibinfo {author} {\bibfnamefont {H.}~\bibnamefont {Kampermann}},
  \ and\ \bibinfo {author} {\bibfnamefont {D.}~\bibnamefont {Bru\ss{}}},\
  }\href {\doibase 10.1103/PhysRevA.95.062305} {\bibfield  {journal} {\bibinfo
  {journal} {Phys. Rev. A}\ }\textbf {\bibinfo {volume} {95}},\ \bibinfo
  {pages} {062305} (\bibinfo {year} {2017})}\BibitemShut {NoStop}%
\bibitem [{\citenamefont {Oszmaniec}\ \emph {et~al.}(2017)\citenamefont
  {Oszmaniec}, \citenamefont {Guerini}, \citenamefont {Wittek},\ and\
  \citenamefont {Ac\'{\i}n}}]{Oszmaniec2017}%
  \BibitemOpen
  \bibfield  {author} {\bibinfo {author} {\bibfnamefont {M.}~\bibnamefont
  {Oszmaniec}}, \bibinfo {author} {\bibfnamefont {L.}~\bibnamefont {Guerini}},
  \bibinfo {author} {\bibfnamefont {P.}~\bibnamefont {Wittek}}, \ and\ \bibinfo
  {author} {\bibfnamefont {A.}~\bibnamefont {Ac\'{\i}n}},\ }\href {\doibase
  10.1103/PhysRevLett.119.190501} {\bibfield  {journal} {\bibinfo  {journal}
  {Phys. Rev. Lett.}\ }\textbf {\bibinfo {volume} {119}},\ \bibinfo {pages}
  {190501} (\bibinfo {year} {2017})}\BibitemShut {NoStop}%
\bibitem [{\citenamefont {Luo}\ and\ \citenamefont
  {Sun}(2017)}]{Luo2017quantum}%
  \BibitemOpen
  \bibfield  {author} {\bibinfo {author} {\bibfnamefont {S.}~\bibnamefont
  {Luo}}\ and\ \bibinfo {author} {\bibfnamefont {Y.}~\bibnamefont {Sun}},\
  }\href {\doibase 10.1103/PhysRevA.96.022130} {\bibfield  {journal} {\bibinfo
  {journal} {Phys. Rev. A}\ }\textbf {\bibinfo {volume} {96}},\ \bibinfo
  {pages} {022130} (\bibinfo {year} {2017})}\BibitemShut {NoStop}%
\bibitem [{\citenamefont {Bischof}\ \emph {et~al.}(2019)\citenamefont
  {Bischof}, \citenamefont {Kampermann},\ and\ \citenamefont
  {Bru\ss{}}}]{Bischof2019}%
  \BibitemOpen
  \bibfield  {author} {\bibinfo {author} {\bibfnamefont {F.}~\bibnamefont
  {Bischof}}, \bibinfo {author} {\bibfnamefont {H.}~\bibnamefont {Kampermann}},
  \ and\ \bibinfo {author} {\bibfnamefont {D.}~\bibnamefont {Bru\ss{}}},\
  }\href {\doibase 10.1103/PhysRevLett.123.110402} {\bibfield  {journal}
  {\bibinfo  {journal} {Phys. Rev. Lett.}\ }\textbf {\bibinfo {volume} {123}},\
  \bibinfo {pages} {110402} (\bibinfo {year} {2019})}\BibitemShut {NoStop}%
\bibitem [{\citenamefont {Xu}\ \emph {et~al.}(2020)\citenamefont {Xu},
  \citenamefont {Shao},\ and\ \citenamefont {Fei}}]{xu2020coherence}%
  \BibitemOpen
  \bibfield  {author} {\bibinfo {author} {\bibfnamefont {J.}~\bibnamefont
  {Xu}}, \bibinfo {author} {\bibfnamefont {L.-H.}\ \bibnamefont {Shao}}, \ and\
  \bibinfo {author} {\bibfnamefont {S.-M.}\ \bibnamefont {Fei}},\ }\href
  {\doibase 10.1103/PhysRevA.102.012411} {\bibfield  {journal} {\bibinfo
  {journal} {Phys. Rev. A}\ }\textbf {\bibinfo {volume} {102}},\ \bibinfo
  {pages} {012411} (\bibinfo {year} {2020})}\BibitemShut {NoStop}%
\bibitem [{\citenamefont {Bischof}\ \emph {et~al.}(2021)\citenamefont
  {Bischof}, \citenamefont {Kampermann},\ and\ \citenamefont
  {Bru\ss{}}}]{bischof2021quantifying}%
  \BibitemOpen
  \bibfield  {author} {\bibinfo {author} {\bibfnamefont {F.}~\bibnamefont
  {Bischof}}, \bibinfo {author} {\bibfnamefont {H.}~\bibnamefont {Kampermann}},
  \ and\ \bibinfo {author} {\bibfnamefont {D.}~\bibnamefont {Bru\ss{}}},\
  }\href {\doibase 10.1103/PhysRevA.103.032429} {\bibfield  {journal} {\bibinfo
   {journal} {Phys. Rev. A}\ }\textbf {\bibinfo {volume} {103}},\ \bibinfo
  {pages} {032429} (\bibinfo {year} {2021})}\BibitemShut {NoStop}%
\bibitem [{\citenamefont {Skrzypczyk}\ and\ \citenamefont
  {Linden}(2019)}]{skrzypczyk2019robustness}%
  \BibitemOpen
  \bibfield  {author} {\bibinfo {author} {\bibfnamefont {P.}~\bibnamefont
  {Skrzypczyk}}\ and\ \bibinfo {author} {\bibfnamefont {N.}~\bibnamefont
  {Linden}},\ }\href {\doibase 10.1103/PhysRevLett.122.140403} {\bibfield
  {journal} {\bibinfo  {journal} {Phys. Rev. Lett.}\ }\textbf {\bibinfo
  {volume} {122}},\ \bibinfo {pages} {140403} (\bibinfo {year}
  {2019})}\BibitemShut {NoStop}%
\bibitem [{\citenamefont {Neumark}(1943)}]{neumark1943spectral}%
  \BibitemOpen
  \bibfield  {author} {\bibinfo {author} {\bibfnamefont {M.}~\bibnamefont
  {Neumark}},\ }\href@noop {} {\bibfield  {journal} {\bibinfo  {journal} {C.R.
  (Dokl.) Acad. Sci. U.R.S.S. (N.S.)}\ }\textbf {\bibinfo {volume} {41}},\
  \bibinfo {pages} {359} (\bibinfo {year} {1943})}\BibitemShut {NoStop}%
\bibitem [{\citenamefont {D'Ariano}\ \emph {et~al.}(2005)\citenamefont
  {D'Ariano}, \citenamefont {Presti},\ and\ \citenamefont
  {Perinotti}}]{d2005classical}%
  \BibitemOpen
  \bibfield  {author} {\bibinfo {author} {\bibfnamefont {G.~M.}\ \bibnamefont
  {D'Ariano}}, \bibinfo {author} {\bibfnamefont {P.~L.}\ \bibnamefont
  {Presti}}, \ and\ \bibinfo {author} {\bibfnamefont {P.}~\bibnamefont
  {Perinotti}},\ }\href {\doibase 10.1088/0305-4470/38/26/010} {\bibfield
  {journal} {\bibinfo  {journal} {J. Phys. A: Math. Gen.}\ }\textbf {\bibinfo
  {volume} {38}},\ \bibinfo {pages} {5979} (\bibinfo {year}
  {2005})}\BibitemShut {NoStop}%
\bibitem [{\citenamefont {Renes}\ \emph {et~al.}(2004)\citenamefont {Renes},
  \citenamefont {Blume-Kohout}, \citenamefont {Scott},\ and\ \citenamefont
  {Caves}}]{renes2004symmetric}%
  \BibitemOpen
  \bibfield  {author} {\bibinfo {author} {\bibfnamefont {J.~M.}\ \bibnamefont
  {Renes}}, \bibinfo {author} {\bibfnamefont {R.}~\bibnamefont {Blume-Kohout}},
  \bibinfo {author} {\bibfnamefont {A.~J.}\ \bibnamefont {Scott}}, \ and\
  \bibinfo {author} {\bibfnamefont {C.~M.}\ \bibnamefont {Caves}},\ }\href
  {https://doi.org/10.1063/1.1737053} {\bibfield  {journal} {\bibinfo
  {journal} {J. Math. Phys.}\ }\textbf {\bibinfo {volume} {45}},\ \bibinfo
  {pages} {2171} (\bibinfo {year} {2004})}\BibitemShut {NoStop}%
\bibitem [{\citenamefont {Fuchs}\ \emph {et~al.}(2017)\citenamefont {Fuchs},
  \citenamefont {Hoang},\ and\ \citenamefont {Stacey}}]{fuchs2017sic}%
  \BibitemOpen
  \bibfield  {author} {\bibinfo {author} {\bibfnamefont {C.~A.}\ \bibnamefont
  {Fuchs}}, \bibinfo {author} {\bibfnamefont {M.~C.}\ \bibnamefont {Hoang}}, \
  and\ \bibinfo {author} {\bibfnamefont {B.~C.}\ \bibnamefont {Stacey}},\
  }\href {https://www.mdpi.com/2075-1680/6/3/21} {\bibfield  {journal}
  {\bibinfo  {journal} {Axioms}\ }\textbf {\bibinfo {volume} {6}},\ \bibinfo
  {pages} {21} (\bibinfo {year} {2017})}\BibitemShut {NoStop}%
\bibitem [{\citenamefont {Busch}\ \emph {et~al.}(1996)\citenamefont {Busch},
  \citenamefont {Lahti},\ and\ \citenamefont
  {Mittelstaedt}}]{busch1996quantum}%
  \BibitemOpen
  \bibfield  {author} {\bibinfo {author} {\bibfnamefont {P.}~\bibnamefont
  {Busch}}, \bibinfo {author} {\bibfnamefont {P.~J.}\ \bibnamefont {Lahti}}, \
  and\ \bibinfo {author} {\bibfnamefont {P.}~\bibnamefont {Mittelstaedt}},\
  }\href {\doibase 10.1007/978-3-540-37205-9_3} {\emph {\bibinfo {title} {The
  quantum theory of measurement}}}\ (\bibinfo  {publisher} {Springer Berlin
  Heidelberg},\ \bibinfo {year} {1996})\BibitemShut {NoStop}%
\bibitem [{\citenamefont {Devetak}\ and\ \citenamefont
  {Winter}(2005)}]{Devetak2005distillation}%
  \BibitemOpen
  \bibfield  {author} {\bibinfo {author} {\bibfnamefont {I.}~\bibnamefont
  {Devetak}}\ and\ \bibinfo {author} {\bibfnamefont {A.}~\bibnamefont
  {Winter}},\ }\href {https://doi.org/10.1098/rspa.2004.1372} {\bibfield
  {journal} {\bibinfo  {journal} {Proc. Roy. Soc. London A}\ }\textbf {\bibinfo
  {volume} {461}},\ \bibinfo {pages} {207} (\bibinfo {year}
  {2005})}\BibitemShut {NoStop}%
\bibitem [{sup()}]{suppmaterial}%
  \BibitemOpen
  \href@noop {} {\bibinfo  {journal} {See Supplementary Material for the
  detailed information of the intrinsic randomness for PVMs and the proofs of
  the results, which includes Refs. [41–47]}\ }\BibitemShut {NoStop}%
\bibitem [{\citenamefont {Yu}\ \emph {et~al.}(2016)\citenamefont {Yu},
  \citenamefont {Zhang}, \citenamefont {Xu},\ and\ \citenamefont
  {Tong}}]{Yu2016Alternative}%
  \BibitemOpen
  \bibfield  {author} {\bibinfo {author} {\bibfnamefont {X.-D.}\ \bibnamefont
  {Yu}}, \bibinfo {author} {\bibfnamefont {D.-J.}\ \bibnamefont {Zhang}},
  \bibinfo {author} {\bibfnamefont {G.~F.}\ \bibnamefont {Xu}}, \ and\ \bibinfo
  {author} {\bibfnamefont {D.~M.}\ \bibnamefont {Tong}},\ }\href {\doibase
  10.1103/PhysRevA.94.060302} {\bibfield  {journal} {\bibinfo  {journal} {Phys.
  Rev. A}\ }\textbf {\bibinfo {volume} {94}},\ \bibinfo {pages} {060302}
  (\bibinfo {year} {2016})}\BibitemShut {NoStop}%
\bibitem [{\citenamefont {Tavakoli}\ \emph {et~al.}(2021)\citenamefont
  {Tavakoli}, \citenamefont {Farkas}, \citenamefont {Rosset}, \citenamefont
  {Bancal},\ and\ \citenamefont {Kaniewski}}]{tavakoli2021mutually}%
  \BibitemOpen
  \bibfield  {author} {\bibinfo {author} {\bibfnamefont {A.}~\bibnamefont
  {Tavakoli}}, \bibinfo {author} {\bibfnamefont {M.}~\bibnamefont {Farkas}},
  \bibinfo {author} {\bibfnamefont {D.}~\bibnamefont {Rosset}}, \bibinfo
  {author} {\bibfnamefont {J.-D.}\ \bibnamefont {Bancal}}, \ and\ \bibinfo
  {author} {\bibfnamefont {J.}~\bibnamefont {Kaniewski}},\ }\href {\doibase
  10.1126/sciadv.abc3847} {\bibfield  {journal} {\bibinfo  {journal} {Sci.
  Adv.}\ }\textbf {\bibinfo {volume} {7}},\ \bibinfo {pages} {eabc3847}
  (\bibinfo {year} {2021})}\BibitemShut {NoStop}%
\bibitem [{\citenamefont {Avesani}\ \emph
  {et~al.}(2020{\natexlab{b}})\citenamefont {Avesani}, \citenamefont
  {Tebyanian}, \citenamefont {Villoresi},\ and\ \citenamefont
  {Vallone}}]{avesani2020unbounded}%
  \BibitemOpen
  \bibfield  {author} {\bibinfo {author} {\bibfnamefont {M.}~\bibnamefont
  {Avesani}}, \bibinfo {author} {\bibfnamefont {H.}~\bibnamefont {Tebyanian}},
  \bibinfo {author} {\bibfnamefont {P.}~\bibnamefont {Villoresi}}, \ and\
  \bibinfo {author} {\bibfnamefont {G.}~\bibnamefont {Vallone}},\ }\href
  {https://arxiv.org/abs/2010.05798} {\bibfield  {journal} {\bibinfo  {journal}
  {arXiv:2004.08344[quant-ph]}\ } (\bibinfo {year}
  {2020}{\natexlab{b}})}\BibitemShut {NoStop}%
\bibitem [{\citenamefont {Buscemi}\ \emph {et~al.}(2008)\citenamefont
  {Buscemi}, \citenamefont {Hayashi},\ and\ \citenamefont
  {Horodecki}}]{buscemi2008global}%
  \BibitemOpen
  \bibfield  {author} {\bibinfo {author} {\bibfnamefont {F.}~\bibnamefont
  {Buscemi}}, \bibinfo {author} {\bibfnamefont {M.}~\bibnamefont {Hayashi}}, \
  and\ \bibinfo {author} {\bibfnamefont {M.}~\bibnamefont {Horodecki}},\ }\href
  {\doibase 10.1103/PhysRevLett.100.210504} {\bibfield  {journal} {\bibinfo
  {journal} {Phys. Rev. Lett.}\ }\textbf {\bibinfo {volume} {100}},\ \bibinfo
  {pages} {210504} (\bibinfo {year} {2008})}\BibitemShut {NoStop}%
\bibitem [{\citenamefont {Luo}(2010)}]{luo2010information}%
  \BibitemOpen
  \bibfield  {author} {\bibinfo {author} {\bibfnamefont {S.}~\bibnamefont
  {Luo}},\ }\href {\doibase 10.1103/PhysRevA.82.052103} {\bibfield  {journal}
  {\bibinfo  {journal} {Phys. Rev. A}\ }\textbf {\bibinfo {volume} {82}},\
  \bibinfo {pages} {052103} (\bibinfo {year} {2010})}\BibitemShut {NoStop}%
\bibitem [{\citenamefont {Buscemi}\ \emph {et~al.}(2014)\citenamefont
  {Buscemi}, \citenamefont {Hall}, \citenamefont {Ozawa},\ and\ \citenamefont
  {Wilde}}]{buscemi2014noise}%
  \BibitemOpen
  \bibfield  {author} {\bibinfo {author} {\bibfnamefont {F.}~\bibnamefont
  {Buscemi}}, \bibinfo {author} {\bibfnamefont {M.~J.~W.}\ \bibnamefont
  {Hall}}, \bibinfo {author} {\bibfnamefont {M.}~\bibnamefont {Ozawa}}, \ and\
  \bibinfo {author} {\bibfnamefont {M.~M.}\ \bibnamefont {Wilde}},\ }\href
  {\doibase 10.1103/PhysRevLett.112.050401} {\bibfield  {journal} {\bibinfo
  {journal} {Phys. Rev. Lett.}\ }\textbf {\bibinfo {volume} {112}},\ \bibinfo
  {pages} {050401} (\bibinfo {year} {2014})}\BibitemShut {NoStop}%
\bibitem [{\citenamefont {Streltsov}\ \emph {et~al.}(2017)\citenamefont
  {Streltsov}, \citenamefont {Adesso},\ and\ \citenamefont
  {Plenio}}]{streltsov2017colloquium}%
  \BibitemOpen
  \bibfield  {author} {\bibinfo {author} {\bibfnamefont {A.}~\bibnamefont
  {Streltsov}}, \bibinfo {author} {\bibfnamefont {G.}~\bibnamefont {Adesso}}, \
  and\ \bibinfo {author} {\bibfnamefont {M.~B.}\ \bibnamefont {Plenio}},\
  }\href {\doibase 10.1103/RevModPhys.89.041003} {\bibfield  {journal}
  {\bibinfo  {journal} {Rev. Mod. Phys.}\ }\textbf {\bibinfo {volume} {89}},\
  \bibinfo {pages} {041003} (\bibinfo {year} {2017})}\BibitemShut {NoStop}%
\bibitem [{\citenamefont {Ozawa}(1984)}]{ozawa1984quantum}%
  \BibitemOpen
  \bibfield  {author} {\bibinfo {author} {\bibfnamefont {M.}~\bibnamefont
  {Ozawa}},\ }\href {https://doi.org/10.1063/1.526000} {\bibfield  {journal}
  {\bibinfo  {journal} {J. Math. Phys.}\ }\textbf {\bibinfo {volume} {25}},\
  \bibinfo {pages} {79} (\bibinfo {year} {1984})}\BibitemShut {NoStop}%
\bibitem [{\citenamefont {Sent{\'\i}s}\ \emph {et~al.}(2013)\citenamefont
  {Sent{\'\i}s}, \citenamefont {Gendra}, \citenamefont {Bartlett},\ and\
  \citenamefont {Doherty}}]{sentis2013decomposition}%
  \BibitemOpen
  \bibfield  {author} {\bibinfo {author} {\bibfnamefont {G.}~\bibnamefont
  {Sent{\'\i}s}}, \bibinfo {author} {\bibfnamefont {B.}~\bibnamefont {Gendra}},
  \bibinfo {author} {\bibfnamefont {S.~D.}\ \bibnamefont {Bartlett}}, \ and\
  \bibinfo {author} {\bibfnamefont {A.~C.}\ \bibnamefont {Doherty}},\ }\href
  {\doibase 10.1088/1751-8113/46/37/375302} {\bibfield  {journal} {\bibinfo
  {journal} {J. Phys. A: Math. Theor.}\ }\textbf {\bibinfo {volume} {46}},\
  \bibinfo {pages} {375302} (\bibinfo {year} {2013})}\BibitemShut {NoStop}%
\bibitem [{\citenamefont {Chiribella}\ and\ \citenamefont
  {D’Ariano}(2004)}]{chiribella2004extremal}%
  \BibitemOpen
  \bibfield  {author} {\bibinfo {author} {\bibfnamefont {G.}~\bibnamefont
  {Chiribella}}\ and\ \bibinfo {author} {\bibfnamefont {G.~M.}\ \bibnamefont
  {D’Ariano}},\ }\href {https://doi.org/10.1063/1.1806262} {\bibfield
  {journal} {\bibinfo  {journal} {J. Math. Phys.}\ }\textbf {\bibinfo {volume}
  {45}},\ \bibinfo {pages} {4435} (\bibinfo {year} {2004})}\BibitemShut
  {NoStop}%
\bibitem [{\citenamefont {Haapasalo}\ \emph {et~al.}(2012)\citenamefont
  {Haapasalo}, \citenamefont {Heinosaari},\ and\ \citenamefont
  {Pellonp{\"a}{\"a}}}]{Haapasalo2012}%
  \BibitemOpen
  \bibfield  {author} {\bibinfo {author} {\bibfnamefont {E.}~\bibnamefont
  {Haapasalo}}, \bibinfo {author} {\bibfnamefont {T.}~\bibnamefont
  {Heinosaari}}, \ and\ \bibinfo {author} {\bibfnamefont {J.-P.}\ \bibnamefont
  {Pellonp{\"a}{\"a}}},\ }\href {https://doi.org/10.1007/s11128-011-0330-2}
  {\bibfield  {journal} {\bibinfo  {journal} {Quantum Inf. Process.}\ }\textbf
  {\bibinfo {volume} {11}},\ \bibinfo {pages} {1751} (\bibinfo {year}
  {2012})}\BibitemShut {NoStop}%
\bibitem [{\citenamefont {Rastegin}(2013)}]{rastegin2013uncertainty}%
  \BibitemOpen
  \bibfield  {author} {\bibinfo {author} {\bibfnamefont {A.~E.}\ \bibnamefont
  {Rastegin}},\ }\href {https://doi.org/10.1140/epjd/e2013-40453-2} {\bibfield
  {journal} {\bibinfo  {journal} {Eur. Phys. J. D}\ }\textbf {\bibinfo {volume}
  {67}},\ \bibinfo {pages} {1} (\bibinfo {year} {2013})}\BibitemShut {NoStop}%
\bibitem [{\citenamefont {Gorski}\ \emph {et~al.}(2007)\citenamefont {Gorski},
  \citenamefont {Pfeuffer},\ and\ \citenamefont
  {Klamroth}}]{gorski2007biconvex}%
  \BibitemOpen
  \bibfield  {author} {\bibinfo {author} {\bibfnamefont {J.}~\bibnamefont
  {Gorski}}, \bibinfo {author} {\bibfnamefont {F.}~\bibnamefont {Pfeuffer}}, \
  and\ \bibinfo {author} {\bibfnamefont {K.}~\bibnamefont {Klamroth}},\ }\href
  {\doibase 10.1007/s00186-007-0161-1} {\bibfield  {journal} {\bibinfo
  {journal} {Math. Methods Oper. Res.}\ }\textbf {\bibinfo {volume} {66}},\
  \bibinfo {pages} {373} (\bibinfo {year} {2007})}\BibitemShut {NoStop}%
\end{thebibliography}
\end{document}